\newcommand{\Var}{\mathsf{Var}}
\newcommand{\1}{\mathbf{1}}
\newtheorem{conjecture}{Conjecture} 
\newtheorem{example}{Example}
\newtheorem{theorem}{Theorem}
\newtheorem{lemma}{Lemma} 
\newtheorem{axiom*}{Axiom}
\newtheorem{observation}[theorem]{\textbf{Observation}}
\newtheorem{definition}{Definition}
\begin{document}
	
	\title{Concentration Inequalities for the Empirical Distribution of Discrete Distributions : Beyond the Method of Types}
	
\author{Jay Mardia, Jiantao Jiao, Ervin T\'anczos, Robert D. Nowak, Tsachy Weissman\thanks{Jay Mardia and Tsachy Weissman are with the Department of Electrical Engineering, Stanford University. Email: \{jmardia, tsachy\}@stanford.edu. Jiantao Jiao is with the Department of Electrical Engineering and Computer Sciences, University of California, Berkeley. Email: jiantao@berkeley.edu. Ervin T\'anczos and Robert D. Nowak are with the Department of Electrical and Computer Engineering, University of Wisconsin-Madison. Email: tanczos@wisc.edu, nowak@engr.wisc.edu.}}

	\maketitle
	
	\begin{abstract}
	We study concentration inequalities for the Kullback--Leibler (KL) divergence between the empirical distribution and the true distribution. Applying a recursion technique, we improve over the method of types bound uniformly in all regimes of sample size $n$ and alphabet size $k$, and the improvement becomes more significant when $k$ is large. We discuss the applications of our results in obtaining tighter concentration inequalities for $L_1$ deviations of the empirical distribution from the true distribution, and the difference between concentration around the expectation or zero. We also obtain asymptotically tight bounds on the variance of the KL divergence between the empirical and true distribution, and demonstrate their quantitatively different behaviors between small and large sample sizes compared to the alphabet size. 
	\end{abstract}
	
	\tableofcontents
	
	\begin{section}{Introduction and main results}\label{sec.Intro}
		
Concentration inequalities of empirical distributions play fundamental roles in probability theory, statistics, and machine learning. For example, the Kolmogorov--Smirnov goodness-of-fit test relies on the Dvoretzky--Kiefer--Wolfowitz--Massart inequality~\cite{massart1990tight} to control the significance level, the widely used Sanov's theorem~\cite[Theorem 11.4.1]{cover2012elements} is proved via the method of types concentration inequality, and the Vapnik--Chervonenkis~\cite{vapnik2015uniform} inequalites, chaining~\cite{talagrand2014upper} ideas, among others, provide foundational tools for statistical learning theory, and allow us to control the deviation of the empirical distribution from the true distribution under integral probability metrics~\cite{muller1997integral}. There have been works before that seek to improve well known concentration inequalities by incorporating additional distributional information. For example, \cite{berend2013concentration} and \cite[p 24]{raginsky2013concentration} both provide distribution dependent improvements to the Hoeffding bounds for the case of discrete distributions with support size $=2$. Our focus, however, will be on improving uniform concentration inequalities for discrete distributions with known support size that are distribution independent.

This paper focuses on obtaining concentration inequalities of the Kullback--Leibler (KL) divergence between the empirical distribution and the true distribution for discrete distributions. The KL divergence is not an integral probability metric, which makes it difficult to apply the VC inequality and chaining to obtain tight bounds. However, there is fundamental importance in understanding the behavior of the KL divergence. For example, what Sanov's theorem reveals is that as the sample size $n\to \infty$, under any distribution $P$, the probability of observing an empirical distribution $Q$ is characterized as $e^{-n D(Q\|P) + o(n)}$. Also, due to the Pinsker inequality~\cite[Chapter 11]{cover2012elements}, a concentration inequality for KL divergence also implies a concentration inequality for the total variation distance, which will also be used in this paper to provide an improved concentration inequality for the $L_1$ deviation. Concentration inequalities for the KL divergence between the estimated distribution and the real distribution in general exponential families were considered in~\cite{lai1988boundary,maillard2018boundary}. 

Concretely, we obtain $n$ i.i.d. samples $X^n = (X_1,X_2,\ldots,X_n)$ following distribution $P = (p_1,p_2,\ldots,p_k) \in \text{int}(\mathcal{M}_k)$\footnote{int stands for interior and this assumption simply means that all $p_i$ are positive. This assumption is without loss of generality because otherwise we could work with a smaller value of $k$.}, where $\mathcal{M}_k$ denotes the space of probability measures with alphabet size $k$. We are interested in concentration inequalities for the random variable 
		\begin{align*}
		D(\hat{P}_{n,k} \| P) \triangleq \sum_{i = 1}^k \hat{p}_i \log \frac{\hat{p}_i}{p_i},
		\end{align*}
		where $\hat{P}_{n,k} \triangleq (\hat{p}_1,\hat{p}_2,\ldots,\hat{p}_k)$ is the empirical distribution obtained from the $n$ samples. Throughout this paper, $\log$ is the natural logarithm.  
		
Probably the most well-known result on the concentration of $D(\hat{P}_{n,k} \| P)$ is due to the method of types~\cite[Lemma II.1]{csiszar1998method}, which is used in the proof of \cite[Sanov's Theorem 11.4.1]{cover2012elements}). It states that for any $\epsilon>0$, we have
		\begin{align} 
		\mathbb{P}\left( D(\hat{P}_{n,k} \| P) \geq \epsilon \right) & \label{eqn.typesgeneralbound} \leq {n + k-1 \choose k-1} e^{-n\epsilon} \nonumber \\
		\end{align}
		here ${x \choose y}$ is the binomial coefficient. 
		
		This bound is tight asymptotically when $k,\epsilon$ are fixed and $n\to \infty$ in the sense that
		\begin{align}
	 \lim_{n\to \infty} \frac{1}{n}	\log \left( \mathbb{P}\left( D(\hat{P}_{n,k} \| P) \geq \epsilon \right) \right)  = -\epsilon
		\end{align}
		as shown by~\cite[Theorem 11.4.1]{cover2012elements}, but it does not capture the correct dependence on $k$ in the non-asymptotic regime. In the modern era of big data, it is usually no longer a valid assumption that the sample size is significantly larger than the parameter dimension, and a clear understanding of the concentration inequalities in the non-asymptotic and large alphabet regime is becoming increasingly important.  
		
		\begin{example}\label{eg:TightFork=2}
			When $k = 2$, it is well known that the upper bound can be improved \cite[Remark (c) Theorem 2.2.3]{Dembo2010} to
			\begin{align}\label{eqn.binaryalphabet}
			\mathbb{P}\left( D(\hat{P}_{n,2} \| P) \geq \epsilon \right) & \leq 2 e^{-n\epsilon},
			\end{align} rather than the $(n+1)e^{-n\epsilon}$ bound from Equation \ref{eqn.typesgeneralbound}.
			This fact is a consequence of a union bound and the Sanov property of convex sets \cite[Theorem 1]{csiszar1984sanov} and is proved in Lemma \ref{lem.k=2Case} in Appendix \ref{sec.SumToIntegral} for the convenience of the reader. It is also clear that if we want a uniform bound that works for all $P$ and $\epsilon$, then this bound is tight. Consider $P = (\frac{1}{2},\frac{1}{2})$ and $\epsilon = \log 2$. Then $\mathbb{P}\left( D(\hat{P}_{n,2} \| P) \geq \epsilon \right) = \mathbb{P}\left(\hat{P}_{n,2} = (1,0) \cup \hat{P}_{n,2} = (0,1) \right) = 2 \cdot \frac{1}{2}^n = 2e^{-n\epsilon}$ where we have used the fact that $D\left(\hat{P}_{n,2}\|(\frac{1}{2},\frac{1}{2})\right)$ has maximum value $\log 2$ which is attained at the extremal points of $\mathcal{M}_2$.
		\end{example}
		
Following the asymptotic tightness of Sanov's theorem, we aim at bounds of the type
\begin{align}\label{eqn.boundweaimfor}
\mathbb{P}\left( D(\hat{P}_{n,k} \| P) \geq \epsilon \right) & \leq f(n,k) e^{-n\epsilon} = e^{n( \epsilon_{\text{thresh}} - \epsilon )}, 
\end{align}		
where $\epsilon_{\text{thresh}} = \frac{\log f(n,k)}{n}$. The threshold $\epsilon_{\text{thresh}}$ can be interpreted as the lower bound on $\epsilon$ such that (\ref{eqn.boundweaimfor}) becomes non-vacuous. 	Note that $\epsilon_{\text{thresh}}$ is defined with respect to a bound, but we will not introduce subscripts to denote this and let the corresponding upper bound be clear by context.
For the method of types bound, we have
\begin{align}\label{eqn.mtthreshold}
 \frac{k-1}{n}\log{\left(\frac{n+k-1}{k-1}\right)} \leq \epsilon_{\text{thresh}} = \frac{\log{ {n+k-1 \choose k-1}}}{n}.
\end{align}

Before we talk about improving $\epsilon_{\text{thresh}}$, however, we should see just how much we can hope to improve it. To this end, we observe that there also exists a lower bound for $\epsilon_{\text{thresh}}$ that follows from Equations~(\ref{eqn.UpperBoundpmom}) and ~(\ref{eqn.Meangeqk/n}) below. Indeed, since for non-negative random variable $X$ one has $\mathbb{E}[X^p] = \int_{0}^{\infty}pt^{p-1}\mathbb{P}(X\geq t)dt = \int_{0}^{\epsilon_{\text{thresh}}}pt^{p-1}\mathbb{P}(X\geq t)dt + \int_{\epsilon_{\text{thresh}}}^{\infty}pt^{p-1}\mathbb{P}(X\geq t)dt$, equation~(\ref{eqn.boundweaimfor}) implies that 
\begin{align}\label{eqn.UpperBoundpmom}
\mathbb{E}[D(\hat{P}_{n,k}\|P)^p] \leq \left(\epsilon_{\text{thresh}}\right)^p + \frac{p!}{n^p}. 
\end{align} 
It also follows from~\cite[Lemma 21]{jiao2017maximum} that when $n\geq 15k$ and $P$ is the uniform distribution, we have
\begin{align}
				\label{eqn.Meangeqk/n} \mathbb{E}[D(\hat{P}_{n,k}\|P)] \geq \frac{k-1}{2n}+\frac{k^2}{20n^2} - \frac{1}{12n^2},
\end{align}
and for any $n\geq 1$, \cite[Proposition 1]{paninski2003estimation} shows
				\begin{align}
				\label{eqn.Meanleqk/n} \mathbb{E}[D(\hat{P}_{n,k}\|P)] \leq \log \left(1+ \frac{k-1}{n}\right) \leq \frac{k-1}{n}, 
				\end{align}
which means that Equation ~(\ref{eqn.Meangeqk/n}) is tight up to constants. Together they imply that when $15k \leq n \leq C k$ and $P$ is uniform, $\epsilon_{\text{thresh}}$ is at least a constant because $\epsilon_{\text{thresh}} \geq \frac{k-3}{2n}$. So we can't, in general, hope to get an $\epsilon_{\text{thresh}}$ that is smaller than $\frac{k-3}{2n}$.

The contribution of this paper can be understood as obtaining uniformly smaller $\epsilon_{\text{thresh}}$ for all configurations of $n$ and $k$ compared with the method of types bound in~(\ref{eqn.mtthreshold}). These upper bounds on $\epsilon_{\text{thresh}}$ follow from the concentration inequality presented in Theorem~\ref{thm:main} stated in Section~\ref{sec.boundmain}. This inequality is complicated and we thus present slightly looser but easier to use bounds alongside, and spend Section ~\ref{sec.interpretmainthm} understanding how Theorem \ref{thm:main} compares to the method fo types bound.

Naturally, one may ask whether our Theorem~\ref{thm:main} provides the best non-asymptotic bound on the KL deviation. Although this question appears non-trivial, we hope to demonstrate through the following results on the variance of $D(\hat{P}_{n,k}\|P)$ that bounding $\mathbb{P}(D(\hat{P}_{n,k}\|P) \geq \epsilon)$ may not be the right question to ask in this context. 
			
\begin{theorem}\label{thm:VarIsk/nsq}
We have the following upper and lower bounds on the variance of $D(\hat{P}_{n,k}\|P)$. 
			\begin{itemize}
				\item There exists a universal constant $C$ such that for any $P \in \mathcal{M}_k$ and any $n$,
				\begin{align}
				\label{eqn.Varleqk/nsq} \Var\left[D(\hat{P}_{n,k}\|P)\right] \leq \min\left(\frac{6\cdot (3+\log k)^2}{n},C\frac{k}{n^2}\right).
				\end{align}
				\item For a fixed $k$, for any $ P \in \mathcal{M}_k$, asymptotically as $n$ goes to infinity,
				\begin{align}
				\label{eqn.Vargeqk/nsq} 2nD(\hat{P}_{n,k}\|P) \xrightarrow{\mathcal{D}} \chi^2_{k-1}.
				\\ 2(k-1) \leq \lim\limits_{n \rightarrow \infty} 4n^2 \Var\left[D(\hat{P}_{n,k}\|P)\right]. \nonumber
				\end{align}
				Here $\chi^2_{k-1}$ is the chi-square distribution with $k-1$ degrees of freedom and hence has variance $2(k-1)$. Notationally, $\xrightarrow{\mathcal{D}}$ means `converges in distribution'.
			\end{itemize}
		\end{theorem}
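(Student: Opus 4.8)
The statement has two faces: the non-asymptotic upper bounds on $\Var[D(\hat{P}_{n,k}\|P)]$ and the asymptotic distributional limit together with its variance lower bound. I would prove them separately, but the lower bound will ultimately borrow the upper bound $\Var[D]\le Ck/n^2$ to supply a missing integrability estimate. Throughout write $D=D(\hat{P}_{n,k}\|P)$ and $N_i=n\hat{p}_i$. For the upper bounds my engine is the Efron--Stein inequality: since $D$ is a symmetric function of the i.i.d.\ samples $X_1,\dots,X_n$, letting $D'$ be $D$ recomputed after replacing $X_1$ by an independent copy, $\Var[D]\le \tfrac{n}{2}\,\bE[(D-D')^2]$. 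Resampling $X_1$ from a symbol $a$ to a symbol $b$ changes exactly two coordinates of the empirical distribution by $\pm 1/n$, so with $g_i(x)=x\log(x/p_i)$ we have $D-D'=\big(g_a(\hat{p}_a)-g_a(\hat{p}_a-\tfrac1n)\big)+\big(g_b(\hat{p}_b)-g_b(\hat{p}_b+\tfrac1n)\big)$. Because $g_i'(x)=\log(x/p_i)+1$, each bracket has size of order $\tfrac1n\big(|\log(\hat{p}_i/p_i)|+1\big)$, so after squaring the whole problem reduces to bounding a single ``master moment'' $M:=\bE_{X_1\sim P}\big[(\log(\hat{p}_{X_1}/p_{X_1}))^2\big]$ (and its analogue for the resampled symbol), yielding $\Var[D]\le \tfrac{C}{n}(M+1)$.

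The core of the upper bound is then to control $M$ in two complementary ways, \emph{uniformly over all} $P\in\cM_k$. For the first bound I would split the alphabet at $p_i=1/k$. Writing $\log(\hat{p}_{X_1}/p_{X_1})=\log(1/p_{X_1})+\log\hat{p}_{X_1}$, the ``frequent'' part is governed by $\sum_i p_i(\log p_i)^2$, which is $O((\log k)^2)$ since $p_i(\log p_i)^2\le (\log k)^2/k$ for $p_i\le 1/k$ (at most $k$ such terms) while $|\log p_i|\le \log k$ for $p_i\ge 1/k$; the genuinely delicate point is the $\log\hat{p}_{X_1}$ contribution, where the crude bound $\hat{p}_{X_1}\ge 1/n$ only gives $\log n$, and to recover $\log k$ one must exploit that the symbols forcing $\hat{p}_{X_1}$ near $1/n$ are precisely the low-probability symbols, which $X_1$ hits with small probability. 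This produces the $\tfrac{6(3+\log k)^2}{n}$ bound. For the second bound, in the regime $np_i\gtrsim 1$ I would Taylor expand $\log(\hat{p}_i/p_i)\approx(\hat{p}_i-p_i)/p_i$, so that $M\approx\sum_i p_i\,\Var(\hat{p}_i)/p_i^2=\sum_i(1-p_i)/n=(k-1)/n$, giving $M=O(k/n)$ and hence the $Ck/n^2$ bound. \textbf{I expect the main obstacle to be exactly this uniform control of $M$}: reconciling the rare-symbol regime $np_i\ll 1$ with the frequent regime $np_i\gg 1$, bounding the higher-order Taylor remainder so the $O(k/n)$ estimate survives for every $P$ and every $n$, and extracting $\log k$ rather than the easy-but-lossy $\log n$ --- naive worst-case bounded differences give either a $P$-dependent bound through $\min_i p_i$ or a spurious $\log n$.

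For the distributional limit I would run the classical second-order expansion. Taylor expanding $D$ about $\hat{P}_{n,k}=P$ and using $\sum_i(\hat{p}_i-p_i)=0$ gives $2nD=\sum_i \tfrac{(N_i-np_i)^2}{np_i}+o_{\bP}(1)$, i.e.\ $2nD$ equals Pearson's chi-square statistic up to a negligible remainder (for fixed $k$ and $n\to\infty$ all $N_i>0$ with probability tending to one, so the remainder is controlled). The multivariate CLT gives $\sqrt{n}(\hat{P}_{n,k}-P)\xrightarrow{\mathcal{D}}\mathcal{N}(0,\Sigma)$ with $\Sigma=\operatorname{diag}(P)-PP^{\top}$, and the resulting quadratic form has rank $k-1$, so $2nD\xrightarrow{\mathcal{D}}\chi^2_{k-1}$.

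Finally, for the variance lower bound I would combine this weak convergence with a moment argument. By the Skorokhod representation and Fatou's lemma, $\liminf_n \bE[(2nD)^2]\ge \bE[(\chi^2_{k-1})^2]=2(k-1)+(k-1)^2$. Moreover the already-established upper bound $\Var[D]\le Ck/n^2$, together with $\bE[2nD]\le 2n\log(1+\tfrac{k-1}{n})$ being bounded, shows that $\{2nD\}_n$ has uniformly bounded second moments and is therefore uniformly integrable; with $2nD\xrightarrow{\mathcal{D}}\chi^2_{k-1}$ this upgrades to convergence of means, $\bE[2nD]\to \bE[\chi^2_{k-1}]=k-1$. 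Since $(\bE[2nD])^2\to(k-1)^2$, we conclude $\liminf_n 4n^2\Var[D]=\liminf_n\big(\bE[(2nD)^2]-(\bE[2nD])^2\big)\ge \big(2(k-1)+(k-1)^2\big)-(k-1)^2=2(k-1)$, which is the claimed inequality.
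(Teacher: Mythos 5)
Your second bullet is fine: the Taylor expansion of $D(\hat{P}_{n,k}\|P)$ around $P$ plus the multivariate CLT and the rank-$(k-1)$ Pearson quadratic form is essentially the paper's own delta-method proof (the paper parametrizes by $k-1$ coordinates so the Hessian $H$ satisfies $\Sigma H = \mathcal{I}_{k-1}$; you keep $k$ coordinates and quote the rank — both are standard), and Fatou then gives the variance lower bound. Your uniform-integrability step, which uses the already-proved bound $\Var[D(\hat{P}_{n,k}\|P)]\le Ck/n^2$ to justify $\bE[2nD(\hat{P}_{n,k}\|P)]\to k-1$, is actually \emph{more} careful than the paper, which needs exactly this fact when it applies Slutsky to $2nD(\hat{P}_{n,k}\|P)-\bE[2nD(\hat{P}_{n,k}\|P)]$ but does not spell it out. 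Note, however, that this makes your lower bound logically dependent on the first bullet, so the first bullet must stand on its own.

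It does not, and the gap is structural rather than located where you expect it. Your Efron--Stein reduction is stated as $\Var[D]\le \tfrac{C}{n}(M+1)$, obtained by bounding each of the two brackets in $D-D'$ separately by $\tfrac{1}{n}\bigl(|\log(\hat{p}_i/p_i)|+1\bigr)$; the additive $+1$ comes from the constant in $g_i'(x)=\log(x/p_i)+1$ and survives the squaring. But then the right-hand side is at least $C/n$ no matter how well you control $M$, while for $k=o(n)$ (e.g.\ $k$ fixed, $n\to\infty$) one has $k/n^2\ll 1/n$. So the step ``$M=O(k/n)$ and hence the $Ck/n^2$ bound'' does not follow from your own reduction: no bound of the form $\tfrac{C}{n}(M+1)$ can ever produce $Ck/n^2$. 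To salvage the route you must exploit a cancellation you currently discard: by the mean value theorem the two brackets are $\tfrac{1}{n}\bigl(\log(\xi_a/p_a)+1\bigr)$ and $-\tfrac{1}{n}\bigl(\log(\xi_b/p_b)+1\bigr)$ (forward difference at the removed symbol, backward difference at the added one), so the constants cancel exactly and $D-D'=\tfrac{1}{n}\bigl[\log(\xi_a/p_a)-\log(\xi_b/p_b)\bigr]$; one must then show that these log-ratios, \emph{including} the mean-value slop at the boundary cases $\hat{p}_a=1/n$ and $\hat{p}_b=0$, have second moment $O(k/n)$ uniformly over $P$ — that is the real work, and it is absent. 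The paper sidesteps this entirely: it splits the alphabet at $p_i=1/n$, writes the smooth part as the Pearson quadratic $\sum_{p_i\ge 1/n}(\hat{p}_i-p_i)^2/p_i$ plus remainders $\Delta_i$, computes exact fourth moments for the quadratic part (Lemma~\ref{lemma.exactquadratic}), and controls the remainders and the non-smooth part via negative association of the multinomial after decomposing into monotone pieces. Finally, the first bound of the theorem carries the explicit constant $6(3+\log k)^2/n$ (the paper inherits it from the plug-in entropy estimator variance bound together with negative association); an Efron--Stein argument with unspecified universal constants would prove a weaker statement than the one claimed.
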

		Observe that the asymptotic lower bound of Theorem \ref{thm:VarIsk/nsq} implies that in the variance upper bound, the $C\frac{k}{n^2}$ terms is tight for large $n$. The variance bound $O\left( \frac{\log^2 k}{n} \right)$, is in fact tight when $n = 1$. We prove Theorem \ref{thm:VarIsk/nsq} in Section \ref{sec.VarThm}.

Combining Theorem~\ref{thm:VarIsk/nsq} with Equations~(\ref{eqn.Meangeqk/n}) and~(\ref{eqn.Meanleqk/n}), we observe that it requires at least $n\gg k$ to achieve vanishing expectation for $D(\hat{P}_{n,k}\|P)$ if $P$ is uniform in $\mathcal{M}_k$, but it only requires $n\gg (\log k)^2$ to achieve vanishing variance.

So, if $n\geq 15k$, and $P$ uniform, then we have
\begin{align*}
	\mathbb{E}[D(\hat{P}_{n,k}\|P)] = \Theta\left(\frac{k}{n}\right)
\end{align*}
and 
\begin{align*}
	\text{standard deviation of }D(\hat{P}_{n,k}\|P) = O\left(\frac{\sqrt{k}}{n}\right) \ll \Theta\left(\frac{k}{n}\right) = \mathbb{E}[D(\hat{P}_{n,k}\|P)].
\end{align*}
 In other words, in this regime the random variable $D(\hat{P}_{n,k}\|P)$ is concentrating very tightly around its expectation, and proving a concentration inequality of the type $\mathbb{P}(D(\hat{P}_{n,k}\|P) \geq \epsilon)$ fails to capture the different behavior of the expectation and the variance of $D(\hat{P}_{n,k}\|P)$. In this context, it may be more insightful to provide bounds for the centered concentration, i.e., 
\begin{align}
\mathbb{P}\left(|D(\hat{P}_{n,k}\|P) - \mathbb{E}[D(\hat{P}_{n,k}\|P)]|\geq \epsilon\right),
\end{align}
for which Theorem~\ref{thm:VarIsk/nsq} provides a bound via Chebyshev's inequality, but we suspect stronger (exponential) bounds are within reach. 

The main technique we use in the variance bound for Theorem~\ref{thm:VarIsk/nsq} is to break the quantity into smooth ($p_i > \frac{1}{n}$) and non-smooth ($p_i \leq \frac{1}{n}$) regions. We then apply polynomial approximation techniques inspired from \cite{paninski2003estimation} / \cite{braess2004bernstein} 
  when $p_i > \frac{1}{n}$ and then utilize the negative association properties of multinomial random variables~ \cite{joag1983negative}. Note that in \cite{paninski2003estimation} such a polynomial approximation technique provides a near-optimal scaling for $\mathbb{E}[D(\hat{P}_{n,k}\|P)]$ in the worst case, which suggests that it might be useful in analyzing the variance too. The reason we go through the effort of carefully handling dependent random variables in the proof of Theorem~\ref{thm:VarIsk/nsq} is that the Poissonized version of this problem gives an incorrect (and worse) scaling for the variance. We show the following lower bound on the variance in the Poissonized version of the problem.
		\begin{theorem}\label{thm.Poi}
		Let $\hat{P}_{n,k}^{\mathsf{Poi}} = \left(\hat{p}_1^{\mathsf{Poi}},\hat{p}_2^{\mathsf{Poi}},...,\hat{p}_k^{\mathsf{Poi}}\right)$ where each $\hat{p}_i^{\mathsf{Poi}}$ is independently distributed as $\frac{\mathsf{Poi}(np_i)}{n}$, where $\mathsf{Poi}(\lambda)$ is a Poisson random variable with parameter $\lambda$. Then for a fixed $k$, as $n \rightarrow \infty$
		\begin{align}\label{eqn.PoiVarLowBou}
			\sqrt{n}D(\hat{P}_{n,k}^{\mathsf{Poi}}\|P) \xrightarrow{\mathcal{D}} \mathcal{N}\left(0,1\right).
			\\ 1 \leq \lim\limits_{n \rightarrow \infty} n \Var\left[D(\hat{P}_{n,k}^{\mathsf{Poi}}\|P)\right]. \nonumber
		\end{align}
		\end{theorem}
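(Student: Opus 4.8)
The plan is to analyze $D(\hat{P}_{n,k}^{\mathsf{Poi}}\|P)$ by a delta-method (second-order Taylor) expansion, using that for fixed $k$ and $P\in\text{int}(\mathcal{M}_k)$ every rate $\lambda_i \triangleq np_i\to\infty$, so that $N_i \triangleq n\hat{p}_i^{\mathsf{Poi}}\sim\mathsf{Poi}(\lambda_i)$ concentrates tightly about $\lambda_i$. With $g_i(x)=x\log(x/p_i)$ one has $g_i(p_i)=0$, $g_i'(p_i)=1$, and $g_i''(p_i)=1/p_i$, so expanding about $p_i$ gives
\[
D = \underbrace{\sum_{i=1}^k (\hat{p}_i^{\mathsf{Poi}} - p_i)}_{L} \;+\; \underbrace{\sum_{i=1}^k \frac{(\hat{p}_i^{\mathsf{Poi}} - p_i)^2}{2p_i}}_{Q} \;+\; R,
\]
where $R$ is the higher-order remainder. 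The crux is that after scaling by $\sqrt{n}$ only the linear term $L$ survives.

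For the linear term, independence of the $N_i$ gives $\sum_i N_i\sim\mathsf{Poi}(n)$, hence $L = (\sum_i N_i - n)/n = (\mathsf{Poi}(n)-n)/n$ and therefore $\sqrt{n}\,L = (\mathsf{Poi}(n)-n)/\sqrt{n}\xrightarrow{\mathcal{D}}\mathcal{N}(0,1)$ by the central limit theorem. For the quadratic term, $\mathbb{E}[(\hat{p}_i^{\mathsf{Poi}}-p_i)^2]=\lambda_i/n^2=p_i/n$, so $\mathbb{E}[Q]=k/(2n)$ and a fourth-moment computation gives $\Var[Q]=\Theta(1/n^2)$; thus $\sqrt{n}\,Q$ has mean $k/(2\sqrt{n})\to0$ and variance $n\Var[Q]=\Theta(1/n)\to0$, so $\sqrt{n}\,Q\to0$ in probability. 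Granting $\sqrt{n}\,R\to0$ in probability (handled below), Slutsky's theorem yields $\sqrt{n}\,D=\sqrt{n}\,L+o_P(1)\xrightarrow{\mathcal{D}}\mathcal{N}(0,1)$, the first display of the theorem.

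For the variance lower bound I would use only this distributional convergence. Since $y\mapsto y^2$ is continuous and nonnegative, the portmanteau theorem (equivalently, Skorokhod representation plus Fatou) gives $\liminf_n n\,\mathbb{E}[D^2]=\liminf_n\mathbb{E}[(\sqrt{n}\,D)^2]\ge\mathbb{E}[Z^2]=1$ for $Z\sim\mathcal{N}(0,1)$. Separately, $\mathbb{E}[L]=0$ and the remainder control below yields $\mathbb{E}[D]=\mathbb{E}[Q]+\mathbb{E}[R]=\Theta(1/n)$, so $n(\mathbb{E}[D])^2=O(1/n)\to0$. Combining, $\liminf_n n\Var[D]=\liminf_n\big(n\mathbb{E}[D^2]-n(\mathbb{E}[D])^2\big)\ge1$, which is the asserted bound; note that no uniform integrability of $nD^2$ is needed for this one-sided statement.

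The main obstacle is making $\sqrt{n}\,R\to0$ and $\mathbb{E}[R]=o(n^{-1/2})$ rigorous, since $g_i(x)=x\log(x/p_i)$ is singular as $x\downarrow0$ and superlinear as $x\to\infty$ while $N_i$ has unbounded support. I would split on the good event $G=\bigcap_{i=1}^k\{|N_i-\lambda_i|\le\lambda_i^{2/3}\}$: on $G$ the Taylor expansion holds with a cubic remainder $|R|\lesssim\sum_i|\hat{p}_i^{\mathsf{Poi}}-p_i|^3/p_i^2$, whose expectation is $O(n^{-3/2})$ using $\mathbb{E}|N_i-\lambda_i|^3\lesssim\lambda_i^{3/2}$, so $\mathbb{E}[\sqrt{n}|R|\mathbf{1}_G]=O(n^{-1})\to0$; on the complement, Poisson tail bounds give $\mathbb{P}(G^c)\le\sum_{i=1}^k e^{-c\lambda_i^{1/3}}=e^{-\Omega(n^{1/3})}$, which dominates the at-most-polynomial growth of $|D|$ and $D^2$ there and renders the $G^c$ contribution negligible in both the distributional limit and the moment estimates. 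This tail-versus-growth bookkeeping, rather than the delta-method heuristic itself, is where the real work lies.
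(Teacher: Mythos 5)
Your proposal is correct and takes essentially the same route as the paper: a delta-method Taylor expansion of $D(\cdot\|P)$ about $P$, the key observation that in the Poissonized model the first-order term does not vanish and equals $(\mathsf{Poi}(n)-n)/n$ (so that $\sqrt{n}$ times it converges to $\mathcal{N}(0,1)$ by the CLT), vanishing of the remaining terms, Slutsky's theorem, and Fatou's lemma for the variance lower bound. The only difference is one of rigor: you carry out explicitly the remainder and moment control (good-event truncation, Poisson tail bounds, control of $\mathbb{E}[D]$ so that $n(\mathbb{E}[D])^2\to 0$) that the paper subsumes under an appeal to the continuous mapping theorem.
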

		This shows that the Poissonized version of the problem cannot give us the right scaling for the variance ($\frac{k}{n^2}$ as from Theorem \ref{thm:VarIsk/nsq}) because of the asymptotic lower bound of $\frac{1}{n}$ in the Poissonized version.

		The concentration inequality presented in Theorem~\ref{thm:main} on KL divergence deviation can also be translated into a concentration inequality for the $L_1$ deviation via Pinsker's inequality. In fact, we use a strengthened version of Pinsker's inequality \cite[Theorem 2.2]{weissman2003inequalities} to obtain a bound that, to our knowledge, beats the best known concentration inequality for the $L_1$ distance between the empirical distribution and the true distribution for large $k$. We formally state these results in Section~\ref{sec.L1} in Lemma \ref{thm.L1Old} and Theorem \ref{thm.L1FromKl}. To compare known results with our results from Theorem \ref{thm.L1FromKl}, we plot Figures \ref{fig.L1Med_n} and \ref{fig.L1Small_n}.

The rest of the paper is organized as follows. We present the details of Theorem~\ref{thm:main} in Section~\ref{sec.boundmain}, compare the performance of this Theorem to the method of types in Section ~\ref{sec.interpretmainthm}, present the improved $L_1$ deviation inequality in Section~\ref{sec.L1}, and discuss future directions in Section~\ref{sec.FutDir}. The proofs of main results are collected in the Appendices~\ref{sec.VarThm} and~\ref{sec.maintheoremproof}. Appendix~\ref{sec.SumToIntegral} states some auxiliary lemmas and integrals that are used throughout this paper.

Notation:  We use the notation $a_\gamma \lesssim  b_\gamma$ to denote that there exists a universal constant $C$ such that $\sup_{\gamma } \frac{a_\gamma}{b_\gamma} \leq C$. Notation $a_\gamma \asymp b_\gamma$ is equivalent to $a_\gamma \lesssim  b_\gamma$ and $b_\gamma \lesssim  a_\gamma$. Notation $a_\gamma \gg b_\gamma$ means that $\liminf_\gamma \frac{a_\gamma}{b_\gamma} = \infty$, and $a_\gamma \ll b_\gamma$ is equivalent to $b_\gamma \gg a_\gamma$. The sequences $a_\gamma,b_\gamma$ are non-negative.

\subsection{The Kullback--Leibler concentration inequality} \label{sec.boundmain}

		Define 
		$c_0 = \pi, c_1 = 2, K_{-1}=1, d_0 = \max\{\pi,\frac{e^3}{2}\} = \frac{e^3}{2}$.
		
		\begin{align}\label{eqn.cmDefn}
		c_m  \triangleq \begin{cases}
		\frac{1 \times 3 \times 5 \ldots \times m-1}{2\times 4 \times 6 \times \ldots \times m} \cdot \pi & m\text{ is even}, m \geq 2 \\
		\frac{2\times 4 \times 6 \times \ldots \times m-1}{1 \times 3 \times 5 \ldots \times m} \cdot 2 & m\text{ is odd}, m \geq 3
		\end{cases}
		\end{align}
		
		\begin{align}\label{eqn.KmDefn}
		K_m \triangleq \prod_{j=0}^{m} c_j = \begin{cases}
		\frac{\pi(2\pi)^{\frac{m}{2}}}{2 \times 4 \times \ldots \times m} \leq \sqrt{\frac{\pi}{m}}(\sqrt{\frac{2\pi e}{m}})^m & m\text{ is even} 
		\\\frac{(2\pi)^{\frac{m+1}{2}}}{1 \times 3 \times \ldots \times m} \leq \sqrt{\frac{\frac{e^3}{2}}{m}}(\sqrt{\frac{2\pi e}{m}})^m & m\text{ is odd}
		\end{cases}
		\end{align}
		
		Observe that $c_m$ behaves as $\sqrt{\frac{2\pi}{m}}$ for large $m$ and that for all positive integers $m$, $K_m \leq \sqrt{\frac{d_0}{m}}(\sqrt{\frac{2\pi e}{m}})^m$.
		
		\begin{theorem}\label{thm:main}
			For all $n,k \geq 2$ and $P \in \mathcal{M}_k$, we have, for universal constants $C_0=(\frac{e^3}{2\pi}) \approx 3.1967$ and $C_1=\frac{3c_1}{c_2}\sqrt{\frac{d_0}{2\pi e}} \approx 2.9290$, the following, where $c_m$ and $K_m$ are defined as in~(\ref{eqn.cmDefn}) and (\ref{eqn.KmDefn}). 
			\begin{align}
			\label{eqn.ResultAllk} \mathbb{P}\left( D(\hat{P}_{n,k} \| P) \geq \epsilon \right) & \leq e^{-n\epsilon}\left[\frac{3c_1}{c_2}\sum_{i=0}^{k-2}K_{i-1} (\frac{e\sqrt{n}}{2\pi})^{i}\right] \leq e^{-n\epsilon}\left[\frac{3c_1}{c_2}\sqrt{\frac{d_0}{2\pi e}} \left(\sum_{i=1}^{k-2}\left(\sqrt{\frac{e^3n}{2\pi i}}\right)^{i}+1\right)\right]
			\end{align}
			The table that follows contains slightly looser but much more easily used and interpreted versions of the upper bound in (\ref{eqn.ResultAllk}).\\
			\begin{tabular}{ |p{7.5cm}|p{7.5cm}|  }
				\hline
				\multicolumn{2}{|c|}{More interpretable upper bounds for $\mathbb{P}\left( D(\hat{P}_{n,k} \| P) \geq \epsilon \right)$, $C_0=(\frac{e^3}{2\pi}) \approx 3.1967$} \\
				\hline
				\hline
				$$\text{Parameter Range}$$ & $$\text{Upper Bound}$$ \\
				\hline
				\hline
				$$3 \leq k \leq \sqrt{nC_0}+2$$ & $$C_1e\left(\sqrt{\frac{C_0n}{k}}\right)^ke^{-n\epsilon}$$ \\
				\hline
				$$3 \leq  k\leq \frac{nC_0}{e}+2$$ & $$C_1k\left(\sqrt{\frac{C_0n}{k}}\right)^ke^{-n\epsilon}$$ \\
				\hline
				$$\frac{nC_0}{e}+2 \leq k\leq nC_0+2$$ & $$C_1k e^{\frac{C_0n}{2e}}e^{-n\epsilon}$$ \\
				\hline
				$$k\geq nC_0+2$$ & $$C_1\left(nC_0 e^{\frac{C_0n}{2e}}+k\right)e^{-n\epsilon}$$ \\
				\hline
			\end{tabular}
		\end{theorem}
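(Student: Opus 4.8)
The plan is to induct on the alphabet size $k$, reducing the $k$-symbol problem to the $(k-1)$-symbol problem by peeling off the last coordinate, with the bound required to hold for every sample size simultaneously. The engine is the chain rule for KL divergence. Writing $N_k$ for the number of samples equal to symbol $k$, conditioning on $\{N_k=j\}$, and letting $Q=(p_1/(1-p_k),\dots,p_{k-1}/(1-p_k))$ be the renormalized distribution on the first $k-1$ symbols, one has
$$D(\hat{P}_{n,k}\|P)=d\big(\tfrac{j}{n}\,\big\|\,p_k\big)+\Big(1-\tfrac{j}{n}\Big)\,D(\hat{Q}_{n-j,k-1}\|Q),$$
where $d(a\|b)=a\log\frac ab+(1-a)\log\frac{1-a}{1-b}$ is the binary KL divergence and, conditioned on $N_k=j$, the first $k-1$ counts are multinomial$(n-j,Q)$. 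Thus on $\{N_k=j\}$ the event $\{D(\hat{P}_{n,k}\|P)\ge\epsilon\}$ is exactly $\{D(\hat{Q}_{n-j,k-1}\|Q)\ge\epsilon_j\}$ with $\epsilon_j=(\epsilon-d(j/n\|p_k))/(1-j/n)$.

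Apply the induction hypothesis $\mathbb{P}(D(\hat{Q}_{m,k-1}\|Q)\ge t)\le f(m,k-1)e^{-mt}$ to each conditional term (it holds trivially when $\epsilon_j\le 0$ since $f\ge 1$). Because $n-j=n(1-j/n)$, the exponent telescopes: $e^{-(n-j)\epsilon_j}=e^{-n\epsilon}e^{\,n\,d(j/n\|p_k)}$, and the binomial weight combines with this as
$$\binom nj p_k^{\,j}(1-p_k)^{\,n-j}\,e^{\,n\,d(j/n\|p_k)}=\binom nj\Big(\tfrac jn\Big)^{j}\Big(1-\tfrac jn\Big)^{n-j}=:a_{n,j},$$
which is independent of $P$ --- this is precisely what makes the final bound uniform over $P$. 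Summing the contributions $j=0,\dots,n$ therefore reduces the whole problem to the purely combinatorial recursion $f(n,k)=\sum_{j=0}^{n}a_{n,j}\,f(n-j,k-1)$, with base case $f(n,2)=2\le\frac{3c_1}{c_2}$ supplied by Lemma~\ref{lem.k=2Case}. The degenerate term $j=n$ (all mass on symbol $k$, where the $(k-1)$-subproblem disappears, $a_{n,n}=1$, $f(0,k-1)=1$) contributes a clean $e^{-n\epsilon}$ and accounts for the additive ``$+1$'' (the $i=0$ term) inside the bracket of~(\ref{eqn.ResultAllk}).

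The heart of the argument --- and the source of the $\sqrt n$-per-step improvement over the method of types --- is bounding the kernel $a_{n,j}$ and the resulting sums. A Stirling estimate gives $a_{n,j}\lesssim \sqrt{n/\big(2\pi j(n-j)\big)}$, so $a_{n,j}$ is of order $1/\sqrt j$ rather than $O(1)$; it is this extra $1/\sqrt n$ factor, discarded by the crude type count, that we recover. Feeding the inductive ansatz $f(m,k-1)=\frac{3c_1}{c_2}\sum_{i=0}^{k-3}K_{i-1}(e\sqrt m/2\pi)^{i}$ into the recursion, the key estimate is
$$\sum_{j=0}^{n}a_{n,j}\,(n-j)^{i/2}\ \lesssim\ \frac{e}{2\pi}\,c_i\;n^{(i+1)/2},$$
which I would prove by comparing the sum to $\sqrt{n/2\pi}\int_0^n (n-x)^{(i-1)/2}x^{-1/2}\,dx$ and using the substitution $x=n\sin^2\theta$ to turn the integral into the Wallis integral $c_i=\int_{-\pi/2}^{\pi/2}\cos^{i}\theta\,d\theta$. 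Since $K_{i-2}\,c_{i-1}=K_{i-1}$ by the definition~(\ref{eqn.KmDefn}), each recursion step reindexes $i\mapsto i+1$, raises the top power of $\sqrt n$ by one, and multiplies the coefficient by the correct Wallis factor, reproducing exactly the coefficients $K_{i-1}(e/2\pi)^i$ in~(\ref{eqn.ResultAllk}); $k-2$ steps from the base case then yield the master bound.

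The hard part will be making the sum-to-integral comparison rigorous with fully explicit constants: near $j=0,n$ the local-CLT/Stirling approximation for $a_{n,j}$ degrades, the sum must be dominated by the integral with controlled Euler--Maclaurin/monotonicity corrections, and these boundary effects are what force the crude prefactor $\tfrac{3c_1}{c_2}$ and the constant $C_1$ in place of the naive leading constant $1/\sqrt{2\pi}$. I expect the $e$'s in $\big(e\sqrt n/2\pi\big)^i$ to enter here, through the non-asymptotic Stirling bounds on $\binom nj$ (note $\tfrac{e}{2\pi}>\tfrac{1}{\sqrt{2\pi}}$, consistent with an upper bound). Once the master bound~(\ref{eqn.ResultAllk}) is established, the rest is routine: substituting $K_m\le\sqrt{d_0/m}\,(\sqrt{2\pi e/m})^{m}$ gives the closed-form second inequality, and the four rows of the table follow by a largest-term analysis of the geometric-like sum $\sum_i K_{i-1}(e\sqrt n/2\pi)^i$, whose dominant index crosses over as $k$ moves through the thresholds $\sqrt{nC_0}+2$, $nC_0/e+2$, and $nC_0+2$.
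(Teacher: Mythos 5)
Your proposal follows essentially the same route as the paper's proof: condition on the count of the last symbol, use the chain rule so the exponent telescopes and the binomial kernel $a_{n,j}=\binom{n}{j}(j/n)^j(1-j/n)^{n-j}$ becomes distribution-free, induct from the $k=2$ base case of Lemma~\ref{lem.k=2Case}, and use Stirling plus a sum-to-integral comparison to produce the Wallis factors $c_i$ that build up the coefficients $K_{i-1}\left(\frac{e\sqrt{n}}{2\pi}\right)^i$, with the table following by a largest-term analysis. The one wrinkle you did not anticipate is that the integrand $(1-x)^{i/2}/\sqrt{x-x^2}$ fails to be convex precisely at $i=2$, so the paper patches this step by bounding $E_2\le E_1$ and replacing $c_2$ with $c_1$ --- this, rather than boundary effects in the sum-to-integral comparison, is what produces the factor $\frac{c_1}{c_2}$ in $C_1$.
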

	The key technique we employ to prove Theorem~\ref{thm:main} is a recursive approach to reduce the problem with alphabet size $k$ to a problem with alphabet size $k-1$. Since we have good bounds for $k = 2$, we can induct from this case to obtain bounds for higher values of $k$.  We formalize this and present a proof in Appendix~\ref{sec.maintheoremproof}. 
	
	\begin{subsection}{Comparison with the method of types bound}\label{sec.interpretmainthm}
		In this section we compare the results of Theorem \ref{thm:main} to the Method of Types bound to better understand what improvements it yields.
		
		The contribution of this theorem can be understood as obtaining uniformly smaller $\epsilon_{\text{thresh}}$ for all configurations of $n$ and $k$ compared with the method of types bound in~(\ref{eqn.mtthreshold}). Our Theorem \ref{thm:main} implies the following $\epsilon_{\text{thresh}}$:
		
				\begin{tabular}{ |p{7.5cm}|p{7.5cm}|  }
			\hline
			\multicolumn{2}{|c|}{$\epsilon_{\text{thresh}}$ improvement using the upper bound from Theorem \ref{thm:main}, $C_0=(\frac{e^3}{2\pi}) \approx 3.1967$, $C_1  \approx 2.9290$} \\
			\hline
			\hline
			$$\text{Parameter Range}$$ & $$\epsilon_{\text{thresh}}(\text{Theorem \ref{thm:main}})$$ \\
			\hline
			\hline
			$$3 \leq k \leq \sqrt{nC_0}+2$$ & $$\frac{k\log \left(\sqrt{\frac{C_0n}{k}}\right)+\log (C_1e)}{n}  $$ \\
			\hline
			$$3 \leq  k\leq \frac{nC_0}{e}+2$$ & $$\frac{k\log \left(\sqrt{\frac{C_0n}{k}}\right)+\log (C_1k)}{n}$$ \\
			\hline
			$$\frac{nC_0}{e}+2 \leq k\leq nC_0+2$$ & $$\frac{\frac{nC_0}{2e}+\log (C_1k)}{n}$$ \\
			\hline
			$$k\geq nC_0+2$$ & $$\frac{\log \left(nC_0 e^{\frac{C_0n}{2e}}+k\right)+\log (C_1)}{n}$$ \\
			\hline
		\end{tabular}

		To better compare the improvement of $\epsilon_{\text{thresh}}$ in Theorem \ref{thm:main} with the method of types bound, we upper bound the ratio $\frac{\epsilon_{\text{thresh}}(\text{Theorem \ref{thm:main}})}{\epsilon_{\text{thresh}}(\text{method of types})}$ for several scalings of $n$, $k$ after letting $n \rightarrow \infty$.

		\begin{tabular}{ |p{5cm}|p{5cm}|p{5cm}|  }
			\hline
			\multicolumn{3}{|c|}{$\epsilon_{\text{thresh}}$ improvement using the upper bound from Theorem \ref{thm:main}, $C_0=(\frac{e^3}{2\pi}) \approx 3.1967$, $C_1  \approx 2.9290$} \\
			\hline
			\hline
			$$\text{Parameter Range}$$ & $$\frac{\epsilon_{\text{thresh}}(\text{Theorem \ref{thm:main}})}{\epsilon_{\text{thresh}}(\text{method of types})}$$ & $$\text{Upper bound on ratio}$$ \\
			\hline
			\hline
			$$k=o(n)$$ & $$\frac{k\log \left(\sqrt{\frac{C_0n}{k}}\right)+\log (C_1e)}{\log{ {n+k-1 \choose k-1}}}$$ & $$\frac{1}{2}$$\\
			\hline
			$$  k =  \frac{nC_0}{e}$$ & $$\frac{k\log \left(\sqrt{\frac{C_0n}{k}}\right)+\log (C_1k)}{\log{ {n+k-1 \choose k-1}}}$$ & $$\frac{1}{2}\cdot \frac{1}{\log \left(1+\frac{e}{C_0}\right)} \approx 0.8125$$\\
			\hline
			$$k = nC_0$$ & $$\frac{\frac{k}{2e}+\log (C_1k)}{\log{ {n+k-1 \choose k-1}}}$$ & $$\frac{1}{2e}\cdot \frac{1}{\log \left(1+\frac{1}{C_0}\right)} \approx 0.6758$$\\
			\hline
		\end{tabular}
			
		To compute the upper bound on the asymptotic ration in the table, we use the lower bound on $\epsilon_{\text{thresh}}(\text{method of types})$ from equation ~\ref{eqn.mtthreshold}.
		
		We notice that the constant improvement on the exponent $\epsilon_{\text{thresh}}$ provided by Theorem ~\ref{thm:main} can be viewed as a power function improvement on the tail probability. For $k = o(n)$, the asymptotic ratio of at least $\frac{1}{2}$ implies at least a square root improvement on the prefactor to the method of types bound.
		In fact, for regimes where $k \leq (\frac{nC_0}{4})^{\frac{1}{3}}$, we can obtain a better $\epsilon_{\text{thresh}}$ than the one obtained using the bound in Theorem \ref{thm:main} by letting $F = \{\hat{P}_{n,k} \in \mathcal{M}_k: D(\hat{P}_{n,k}\|P) \geq \epsilon)\}$ in Lemma \ref{lem:DiffSlopeBound} which gives a different bound on $\mathbb{P}\left(D(\hat{P}_{n,k}\|P) \geq \epsilon \right)$ and is proved in Appendix \ref{sec.SumToIntegral}. The $\epsilon_{\text{thresh}}$ one obtains from Lemma \ref{lem:DiffSlopeBound} is \[\frac{(k-1)\log{\left(2(k-1)\right)}}{n}\]. One can observe that this is smaller than the $\epsilon_{\text{thresh}}$ values in the table above when $2(k-1) \leq \sqrt{\frac{nC_0}{k}}$, which is satisfied when $k \leq (\frac{nC_0}{4})^{\frac{1}{3}}$. In particular, this shows that for any $k \leq (\frac{nC_0}{4})^{\frac{1}{3}}$ we get \[\frac{\epsilon_{\text{thresh}}(\text{method of types})}{\epsilon_{\text{thresh}}(\text{Lemma \ref{lem:DiffSlopeBound}})} = \Theta\left(\frac{\log\left(\frac{n+k}{k}\right)}{\log k}\right) .\] For any $k$ that is polylogarithmic in $n$, this is a superconstant improvement.

		To further illustrate the results, we plot the log of the upper bounds we obtain from Theorem \ref{thm:main} and from the method of types bound (along with the trivial upper bound one on probability) in Figures \ref{fig.nLarge}, \ref{fig.nMedium}, \ref{fig.nSmall}, and \ref{fig.nVSmall}.  In the numerical plots in these figures which accompany the cartoon plots, we have set $P$ to be the uniform distribution and used Monte Carlo simulations to calculate and plot the true probabilities.

		\begin{figure}[H]
			\begin{tikzpicture}[scale=.95]
			\pgfmathsetmacro{\d}{max((pi),((e)^3)/2)}
			\pgfmathsetmacro{\C}{(3*sqrt(\d / (2*pi*e)))}
			\pgfmathsetmacro{\D}{(((e)^3/(2*pi)))}
			\pgfmathsetmacro{\k}{31}
			\pgfmathsetmacro{\n}{1000.0}
			\pgfmathsetmacro{\epsold}{0.1365} 
			\pgfmathsetmacro{\epsnew}{(\k* ln(sqrt(\D* \n / \k))+ln(\C))/\n}

			\begin{axis}[
			legend pos=south west, axis y line = left, axis x line = bottom,
			title={$3\leq k \leq \sqrt{nC_0}+2$, $C_0=(\frac{e^3}{2\pi}) \approx 3.1967$, $C_1  \approx 2.9290$},
			x label style={at={(axis description cs:0.5,-0.1)},anchor=north},
			xlabel={$\epsilon$}, ylabel={$\log \left(\mathbb{P}\left( D(\hat{P}_{n,k} \| P) \geq \epsilon \right)\right)$}, domain = 0:(5*\k/\n), 
			enlarge y limits={rel=0.17}, xtick = \empty,extra x ticks = {0,
				\epsold ,\epsnew }, extra x tick labels = {$0$,$\frac{\log{ {n+k-1 \choose k-1}}}{n}$,$\frac{k\log \left(\sqrt{\frac{C_0n}{k}}\right)+\log (C_1e)}{n}$},ytick = {0},scaled y ticks = false]
			\addplot+{min(0,\n*\epsold-\n*x)};
			\addlegendentry{method of types bound}
			\addplot+{min(0,\n*\epsnew-\n*x)};
			\addlegendentry{Theorem \ref{thm:main}}

			\end{axis}
			\end{tikzpicture}
			\includegraphics[scale=0.60]{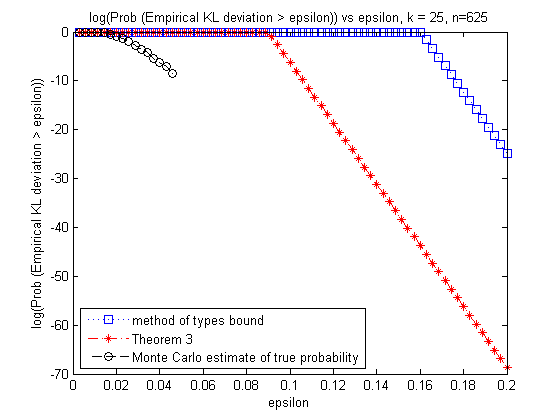}
			\caption{Cartoon plot and Numerical plot- $\log \left(\mathbb{P}\left( D(\hat{P}_{n,k} \| P) \geq \epsilon \right)\right)$ vs $\epsilon$ for large $n$\newline Observe that for large $n$, Theorem \ref{thm:main} is a significant improvement over the Method of types bound. Our cartoon plot on the left is corroborated with a numerical plot in which we fix $P$ to be uniform and also plot a Monte Carlo estimate of the true probabilities.}\label{fig.nLarge}
		\end{figure}

		\begin{figure}[H]
			\begin{tikzpicture}[scale=.95]
			\pgfmathsetmacro{\d}{max((pi),((e)^3)/2)}
			\pgfmathsetmacro{\C}{(3*sqrt(\d / (2*pi*e)))}
			\pgfmathsetmacro{\D}{(((e)^3/(2*pi)))}
			\pgfmathsetmacro{\k}{60}
			\pgfmathsetmacro{\n}{1000}
			\pgfmathsetmacro{\epsold}{.22763} 
			\pgfmathsetmacro{\epsnew}{((\k* ln(sqrt( \D*\n / \k))+ln(\C*\k))/\n}
			\begin{axis}[legend pos=south west, axis y line = left, axis x line = bottom,title={$\sqrt{nC_0}+2\leq  k \leq \frac{nC_0}{e}+2$, $C_0=(\frac{e^3}{2\pi}) \approx 3.1967$, $C_1  \approx 2.9290$},
			x label style={at={(axis description cs:0.5,-0.1)},anchor=north},
			xlabel={$\epsilon$}, ylabel={$\log \left(\mathbb{P}\left( D(\hat{P}_{n,k} \| P) \geq \epsilon \right)\right)$}, domain = 0:(4*\k/\n),
			enlarge y limits={rel=0.17}, xtick = \empty,extra x ticks = {0,\epsold ,\epsnew }, extra x tick labels = {$0$,$\frac{\log{ {n+k-1 \choose k-1}}}{n}$,\hspace{5pt}$\frac{k\log \left(\sqrt{\frac{C_0n}{k}}\right)+\log (C_1k)}{n}$},ytick = {0},scaled y ticks = false]
			
			\addplot+{min(0,\n*\epsold-\n*x)};
			\addlegendentry{method of types bound}
			\addplot+{min(0,\n*\epsnew-30-\n*x)};
			\addlegendentry{Theorem \ref{thm:main}}

			\end{axis}
			\end{tikzpicture}
			\includegraphics[scale=0.6]{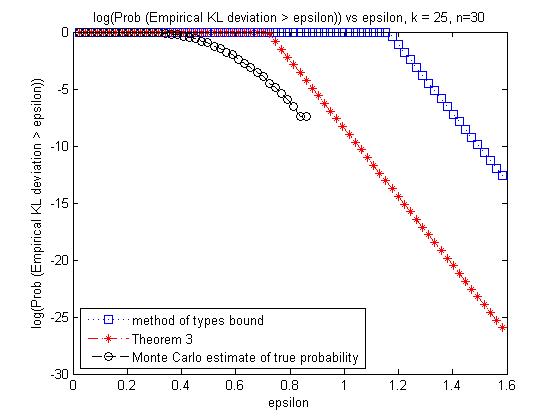}
			\caption{Cartoon plot and Numerical plot- $\log \left(\mathbb{P}\left( D(\hat{P}_{n,k} \| P) \geq \epsilon \right)\right)$ vs $\epsilon$ for medium $n$\newline When $n$ is medium sized, Theorem \ref{thm:main} is a significant improvement over the method of types bound and we can see it is much closer to the true probabilities than in the regime of Figure \ref{fig.nLarge}. Our cartoon plot on the left is corroborated with a numerical plot in which we fix $P$ to be uniform and also plot a Monte Carlo estimate of the true probabilities. }\label{fig.nMedium}
		\end{figure}

		\begin{figure}[H]
			\begin{tikzpicture}[scale=0.95]
			\pgfmathsetmacro{\domainlen}{2}
			\pgfmathsetmacro{\d}{max((pi),((e)^3)/2)}
			\pgfmathsetmacro{\C}{(3*sqrt(\d / (2*pi*e)))}
			\pgfmathsetmacro{\D}{(((e)^3/(2*pi)))}
			\pgfmathsetmacro{\k}{1200}
			\pgfmathsetmacro{\n}{1000}
			\pgfmathsetmacro{\epsold}{1.511} 
			\pgfmathsetmacro{\epsnew}{((\k+ln(\C*\k) )/\n}
			\begin{axis}[legend pos=south west, axis y line = left, axis x line = bottom,title={$\frac{nC_0}{e}+2 \leq k\leq nC_0+2$, $C_0=(\frac{e^3}{2\pi}) \approx 3.1967$, $C_1  \approx 2.9290$},
			x label style={at={(axis description cs:0.5,-0.1)},anchor=north},
			xlabel={$\epsilon$}, ylabel={$\log \left(\mathbb{P}\left( D(\hat{P}_{n,k} \| P) \geq \epsilon \right)\right)$}, domain = 0:(\domainlen*\k/\n),
			enlarge y limits={rel=0.17}, xtick = \empty,extra x ticks = {0,\epsold ,\epsnew }, extra x tick labels = {$0$,\hspace{47pt}$\frac{\log{ {n+k-1 \choose k-1}}}{n}$,\hspace{-27pt}$\frac{\frac{k}{2}+\log (C_1k)}{n}$},ytick = {0},scaled y ticks = false]
			
			\addplot+{min(0,\n*\epsold-\n*x)};
			\addlegendentry{method of types bound}
			\addplot+{min(0,\n*\epsnew+40-\n*x)};
			\addlegendentry{Theorem \ref{thm:main}}

			\end{axis}
			\end{tikzpicture}	
			\includegraphics[scale=0.60]{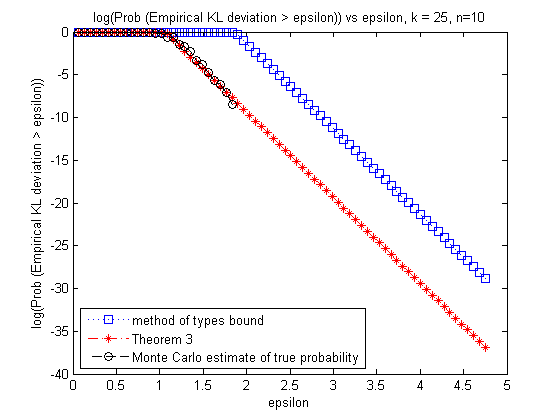}	
			\caption{Cartoon plot and Numerical plot- $\log \left(\mathbb{P}\left( D(\hat{P}_{n,k} \| P) \geq \epsilon \right)\right)$ vs $\epsilon$ for small $n$\newline For small $n$, Theorem \ref{thm:main} is much better than the Method of types bound. As demonstrated by our numerical plot, Theorem \ref{thm:main} and the true probabilities (for $P$ uniform and computed via a Monte Carlo simulation) are more or less the same. Hence in the regime of high dimensional distributions or very little data, Theorem \ref{thm:main} is essentially tight.}\label{fig.nSmall}
		\end{figure}
		
		\begin{figure}[H]
			\begin{tikzpicture}[scale=0.95]
			\pgfmathsetmacro{\domainlen}{2}
			\pgfmathsetmacro{\d}{max((pi),((e)^3)/2)}
			\pgfmathsetmacro{\C}{(3*sqrt(\d / (2*pi*e)))}
			\pgfmathsetmacro{\D}{(((e)^3/(2*pi)))}
			\pgfmathsetmacro{\k}{100}
			\pgfmathsetmacro{\n}{20}
			\pgfmathsetmacro{\epsold}{2.586} 
			\pgfmathsetmacro{\epsnew}{((ln(\C*(\k + \n*\D*100))+20)/\n}
			\begin{axis}[legend pos=south west, axis y line = left, axis x line = bottom,title={$nC_0+2 \leq k$, $C_0=(\frac{e^3}{2\pi}) \approx 3.1967$, $C_1  \approx 2.9290$},
			x label style={at={(axis description cs:0.5,-0.1)},anchor=north},
			xlabel={$\epsilon$}, ylabel={$\log \left(\mathbb{P}\left( D(\hat{P}_{n,k} \| P) \geq \epsilon \right)\right)$}, domain = 0:(\domainlen*\k/\n),
			enlarge y limits={rel=0.17}, xtick = \empty,extra x ticks = {\epsold ,\epsnew }, extra x tick labels = {\hspace{67pt}$\frac{\log{ {n+k-1 \choose k-1}}}{n}$,\hspace{-67pt}$\frac{\log \left(nC_0 e^{\frac{C_0n}{2e}}+k\right)+\log (C_1)}{n}$},ytick = {0},scaled y ticks = false]
			
			\addplot+{min(0,\n*\epsold-\n*x)};
			\addlegendentry{method of types bound}
			\addplot+{min(0,\n*\epsnew-\n*x)};
			\addlegendentry{Theorem \ref{thm:main}}

			\end{axis}
			\end{tikzpicture}	
			\includegraphics[scale=0.60]{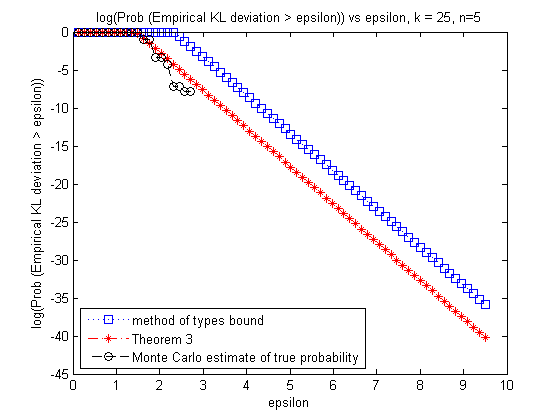}	
			\caption{Cartoon plot and Numerical plot- $\log \left(\mathbb{P}\left( D(\hat{P}_{n,k} \| P) \geq \epsilon \right)\right)$ vs $\epsilon$ for very small $n$\newline When $n$ is much smaller than $k$, qualitatively the picture is still the same as in Figure \ref{fig.nSmall}. Theorem \ref{thm:main}'s improvement over the method of types bound is enough that it numerically almost matches the Monte Carlo estimates of the true probabilities. (Plotted for $P$ uniform.)}\label{fig.nVSmall}
		\end{figure}

		\begin{figure}[H]
			\centering
			\begin{tikzpicture}[scale=1]
			\pgfmathsetmacro{\d}{max((pi),((e)^3)/2)}
			\pgfmathsetmacro{\C}{(3*sqrt(\d / (2*pi*e)))}
			\pgfmathsetmacro{\D}{(((e)^3/(2*pi)))}
			\pgfmathsetmacro{\k}{1000}
			\pgfmathsetmacro{\epstest}{6}

			\begin{axis}[legend pos=south west, axis y line = left, axis x line = bottom,title={$\epsilon = \epstest$, $k = \k$},
			y label style={at={(axis description cs:-0.1,.5)},anchor=south},
			xlabel={$n$}, ylabel={$\log \left(\mathbb{P}\left( D(\hat{P}_{n,k} \| P) \geq \epsilon \right)\right)$}, domain = (1*\k):(1.6*\k),enlarge y limits={rel=0.17}, 
			enlarge x limits={rel=0.07}]

			\pgfmathdeclarefunction{func}{1}{%
				\pgfmathparse{%
					(#1<=(\k-2)/(\D))  * (ln((#1*\D)/(2*e))+((#1*\D)/(2*e))+ln(\k))  +%
					and(#1>(\k-2)/(\D)  ,#1<=e*(\k-2)/(\D)) * (ln(\k)+((#1*\D)/(2*e)))  +%
					and(#1>e*(\k-2)/(\D)  ,#1<=e*(\k-2)*(\k-2)/(\D)) * (ln(\k)+\k*ln((#1*\D)/(\k)))   +%
					(#1>e*(\k-2)*(\k-2)/(\D)) * (\k*ln((#1*\D)/(\k)))  %
				}%
			}

			\addplot+{min(0,-\epstest*x+(\k)*ln(e*(x+\k)/(\k)))};
			\addlegendentry{method of types bound};
			\addplot+{min(0,-\epstest*x+ln(\C)+func(x))};
			\addlegendentry{Theorem \ref{thm:main}};
			
			\end{axis}
			\end{tikzpicture}	
			\caption{Numerical plot- $\log \left(\mathbb{P}\left( D(\hat{P}_{n,k} \| P) \geq \epsilon \right)\right)$ vs $n$. This plot illustrates the fact that as $n$ decreases, Theorem \ref{thm:main} is a more and more significant improvement to the method of types bound. }\label{fig.ProbVsn}
		\end{figure}

	\end{subsection}

	\end{section}

	\begin{subsection}{Tightening $L_1$ deviation inequalities}\label{sec.L1}
			Having obtained results bounding the probability of a large KL deviation of the empirical distribution from the true distribution, we can now make use of Pinsker's inequality relating the $L_1$ distance between two distributions to the KL divergence between them to obtain bounds on the probability of $L_1$ deviation between the empirical and true distributions. We see that our bounds improve on the state-of-the-art bound from \cite[Theorem 2.1]{weissman2003inequalities} in the regime when $k \asymp n$. We first state some definitions and known results.
		\begin{align*}
		\|\hat{P}_{n,k}-P\|_1 \triangleq \sum \limits_{j=1}^{k} |\hat{p}_j-p_j|
		\end{align*}
		\begin{definition}\label{defn.PiP}
		Suppose $P$ is a discrete distribution with alphabet size $k$. Then, 
			\begin{align*}
			\pi_P \triangleq \max \limits_{A \subseteq [k]}\min \left(\mathbb{P}(A),1-\mathbb{P}(A)\right). 
			\end{align*}
		\end{definition}
		Note that $\pi_P\leq \frac{1}{2}$ for any $P$.
		\begin{definition}\label{defn.Phi}
			For $p \in [0,\frac{1}{2})$
			\begin{align*}
			\varphi(p) \triangleq \frac{1}{1-2p}\log\frac{1-p}{p},
			\end{align*}
			and by continuity set $\varphi(\frac{1}{2})=2$. 
		\end{definition}
		Observe that $\varphi(p) \geq 2$ for all $p \in [0,\frac{1}{2}]$. 
		\begin{lemma}{From \cite[Theorems 2.1 and 2.2]{weissman2003inequalities}, and with $\pi_P$, $\varphi(p)$ as defined in Definitions \ref{defn.PiP} and \ref{defn.Phi} }\label{thm.L1Old}
			\begin{itemize}
				\item Let $P$ be a probability distribution in $\mathcal{M}_k$. Then for all $n,k, \epsilon$-
				\begin{align*}
				\mathbb{P}\left(\|\hat{P}_{n,k}-P\|_1 \geq \epsilon \right) \leq \left(2^k-2\right)e^{-\frac{n\varphi(\pi_P)\epsilon^2}{4}}.
				\end{align*}
				\item Let $P,Q$ be a probability distributions in $\mathcal{M}_k$. Then we have the following strengthened Pinsker inequality
				\begin{align*}
				\|Q-P\|_1 \leq 2\sqrt{\frac{D(Q\|P)}{\varphi(\pi_P)}}.
				\end{align*}
			\end{itemize}
			
		\end{lemma}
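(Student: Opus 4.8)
The plan is to recognize that both bullets are restatements of \cite[Theorems 2.1 and 2.2]{weissman2003inequalities}, so the cleanest route is to invoke those theorems directly. To keep the argument self-contained, however, I would reconstruct both parts through a single mechanism: reduction to the binary (two-cell) case via data processing, combined with the Ordentlich--Weinberger pointwise refinement of Pinsker's inequality, namely $d(q\|p)\geq \varphi(p)(q-p)^2$ for all $q,p\in[0,1]$, where $d(\cdot\|\cdot)$ is the binary KL divergence and $\varphi$ is as in Definition \ref{defn.Phi}. Two elementary properties of $\varphi$ drive everything: it is symmetric, $\varphi(p)=\varphi(1-p)$ (immediate from the defining formula), and it is nonincreasing on $[0,\tfrac12]$ with $\varphi(\tfrac12)=2$ (checked directly, consistent with $\varphi\geq 2$ noted after Definition \ref{defn.Phi}).

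For the strengthened Pinsker inequality I would first write $\|Q-P\|_1 = 2\max_{A\subseteq[k]}(Q(A)-P(A))$ and let $A$ achieve the maximum, so $\|Q-P\|_1 = 2(Q(A)-P(A))$. Applying the data-processing inequality for KL divergence to the binary partition $\{A,A^c\}$ gives $D(Q\|P)\geq d(Q(A)\|P(A))$. The refinement then yields $d(Q(A)\|P(A))\geq \varphi(P(A))(Q(A)-P(A))^2$, and since $\min(P(A),1-P(A))\leq \pi_P$ (Definition \ref{defn.PiP}) together with symmetry and monotonicity of $\varphi$ gives $\varphi(P(A))\geq\varphi(\pi_P)$, I obtain $D(Q\|P)\geq \varphi(\pi_P)\bigl(Q(A)-P(A)\bigr)^2 = \tfrac{\varphi(\pi_P)}{4}\|Q-P\|_1^2$. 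Rearranging produces exactly $\|Q-P\|_1\leq 2\sqrt{D(Q\|P)/\varphi(\pi_P)}$.

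For the concentration bound I would use the same total-variation identity to rewrite the event $\{\|\hat{P}_{n,k}-P\|_1\geq\epsilon\}$ as $\{\exists\,A:\ \hat{P}_{n,k}(A)-P(A)\geq \epsilon/2\}$ and union bound over the $2^k-2$ nontrivial subsets $A$ (the empty set and the full set contribute $0$). For each fixed $A$, $n\hat{P}_{n,k}(A)$ is $\mathrm{Binomial}(n,P(A))$, so the Chernoff/method-of-types estimate gives $\mathbb{P}(\hat{P}_{n,k}(A)-P(A)\geq \epsilon/2)\leq e^{-n\,d(P(A)+\epsilon/2\,\|\,P(A))}$. Invoking the refinement once more, $d(P(A)+\epsilon/2\|P(A))\geq \varphi(P(A))(\epsilon/2)^2\geq \tfrac{\varphi(\pi_P)\epsilon^2}{4}$, so each term is at most $e^{-n\varphi(\pi_P)\epsilon^2/4}$, and summing over the $2^k-2$ subsets gives the stated bound.

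The step I expect to be the main obstacle is the pointwise binary inequality $d(q\|p)\geq\varphi(p)(q-p)^2$ itself: this is the Ordentlich--Weinberger sharpening of Pinsker and requires a careful one-variable calculus argument (fixing $p$ and analyzing the sign of the derivative of $d(q\|p)-\varphi(p)(q-p)^2$ in $q$). Everything else is a reduction, a union bound, and the elementary monotonicity and symmetry of $\varphi$. Since this inequality is precisely \cite[Theorem 2.2]{weissman2003inequalities}, I would cite it rather than reprove it, which collapses the entire lemma to the two reductions sketched above.
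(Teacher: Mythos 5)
Your proposal is correct and matches the paper's treatment: the paper offers no proof of this lemma at all, simply importing both bullets as stated from \cite[Theorems 2.1 and 2.2]{weissman2003inequalities}, which is exactly what you ultimately do. Your reconstruction sketch (union bound over the $2^k-2$ subsets with the binomial Chernoff--Sanov bound, plus data processing combined with the pointwise refinement $d(q\|p)\geq\varphi(p)(q-p)^2$ and the symmetry/monotonicity of $\varphi$) faithfully reproduces the argument of the cited source, so there is no gap.
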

		Now, with the slightly strengthened version of the Pinkser inequality from Lemma \ref{thm.L1Old} we can get the inequality \[\mathbb{P}\left(\|\hat{P}_{n,k}-P\|_1 \geq \epsilon \right) \leq \mathbb{P}\left(2\sqrt{\frac{D(\hat{P}_{n,k}\|P)}{\varphi(\pi_P)}} \geq \epsilon \right),\] which combined with Theorem \ref{thm:main} allows us to obtain the following bounds on the $L_1$ deviation probability.
		\begin{theorem}\label{thm.L1FromKl}
			For all $n,k \geq 2$ and $P \in \mathcal{M}_k$, we have, for universal constants $C_0=(\frac{e^3}{2\pi}) \approx 3.1967$ and $C_1=\frac{3c_1}{c_2}\sqrt{\frac{d_0}{2\pi e}} \approx 2.9290$, the following, where $c_m$ and $K_m$ are defined as in Equations \ref{eqn.cmDefn} and Equation \ref{eqn.KmDefn} and with $\pi_P$, $\varphi(p)$ as defined in Definitions \ref{defn.PiP} and \ref{defn.Phi}
			\begin{align}
			\mathbb{P}\left(\|\hat{P}_{n,k}-P\|_1 \geq \epsilon \right) & \leq e^{-\frac{n\varphi(\pi_P)\epsilon^2}{4}}\left[\frac{3c_1}{c_2}\sum_{i=0}^{k-2}K_{i-1} (\frac{e\sqrt{n}}{2\pi})^{i}\right] \\
			& \leq e^{-\frac{n\varphi(\pi_P)\epsilon^2}{4}}\left[\frac{3c_1}{c_2}\sqrt{\frac{d_0}{2\pi e}} \left(\sum_{i=1}^{k-2}\left(\sqrt{\frac{e^3n}{2\pi i}}\right)^{i}+1\right)\right]. 
			\end{align}
		\end{theorem}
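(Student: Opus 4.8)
The plan is to derive Theorem~\ref{thm.L1FromKl} as a direct composition of two facts already in hand: the strengthened Pinsker inequality from Lemma~\ref{thm.L1Old} and the Kullback--Leibler concentration bound of Theorem~\ref{thm:main}. The strengthened Pinsker inequality converts an $L_1$ deviation into a lower bound on the KL divergence, and Theorem~\ref{thm:main} then controls the probability of that KL event at a transformed threshold.

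First I would note that the second bullet of Lemma~\ref{thm.L1Old}, namely $\|Q-P\|_1 \leq 2\sqrt{D(Q\|P)/\varphi(\pi_P)}$, is a deterministic inequality valid for every $Q \in \mathcal{M}_k$, and hence holds pointwise when $Q$ is the random empirical distribution $\hat{P}_{n,k}$. Consequently, on the event $\{\|\hat{P}_{n,k}-P\|_1 \geq \epsilon\}$ we must have $2\sqrt{D(\hat{P}_{n,k}\|P)/\varphi(\pi_P)} \geq \epsilon$, giving the event inclusion $\{\|\hat{P}_{n,k}-P\|_1 \geq \epsilon\} \subseteq \{2\sqrt{D(\hat{P}_{n,k}\|P)/\varphi(\pi_P)} \geq \epsilon\}$ and therefore $\mathbb{P}(\|\hat{P}_{n,k}-P\|_1 \geq \epsilon) \leq \mathbb{P}(2\sqrt{D(\hat{P}_{n,k}\|P)/\varphi(\pi_P)} \geq \epsilon)$. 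Since $\varphi(\pi_P) \geq 2 > 0$ (as noted after Definition~\ref{defn.Phi}) and both sides are nonnegative, squaring yields the equivalence $2\sqrt{D(\hat{P}_{n,k}\|P)/\varphi(\pi_P)} \geq \epsilon \iff D(\hat{P}_{n,k}\|P) \geq \varphi(\pi_P)\epsilon^2/4$, so that $\mathbb{P}(\|\hat{P}_{n,k}-P\|_1 \geq \epsilon) \leq \mathbb{P}(D(\hat{P}_{n,k}\|P) \geq \varphi(\pi_P)\epsilon^2/4)$.

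The final step is to invoke Theorem~\ref{thm:main} with the threshold $\epsilon$ replaced by $\varphi(\pi_P)\epsilon^2/4$. The prefactor $\frac{3c_1}{c_2}\sum_{i=0}^{k-2}K_{i-1}(e\sqrt{n}/2\pi)^{i}$ does not depend on the threshold, while the exponential factor $e^{-n\epsilon}$ becomes $e^{-n\varphi(\pi_P)\epsilon^2/4}$; this produces the first displayed bound of the theorem verbatim. The second displayed bound follows identically by instead applying the looser form of Theorem~\ref{thm:main}, i.e.\ the right-hand inequality in~(\ref{eqn.ResultAllk}), at the same transformed threshold, which replaces the prefactor by $\frac{3c_1}{c_2}\sqrt{d_0/(2\pi e)}\bigl(\sum_{i=1}^{k-2}(\sqrt{e^3 n/(2\pi i)})^{i}+1\bigr)$.

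I expect no genuine obstacle here, as the result is purely a substitution; the statement itself anticipates this in the paragraph preceding it. The only points requiring care are bookkeeping: confirming that the strengthened Pinsker inequality is oriented so that the $L_1$ event is contained in the KL event (and not the reverse), and checking that the threshold substitution is legitimate. The latter is immediate because Theorem~\ref{thm:main} holds for every $\epsilon > 0$, and $\varphi(\pi_P)\epsilon^2/4 > 0$ for all $\epsilon > 0$, so no additional hypotheses on $n$, $k$, or $P$ beyond those already assumed in Theorem~\ref{thm:main} are needed.
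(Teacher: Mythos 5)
Your proposal is correct and follows exactly the paper's own route: the strengthened Pinsker inequality from Lemma~\ref{thm.L1Old} gives the event inclusion $\{\|\hat{P}_{n,k}-P\|_1 \geq \epsilon\} \subseteq \{D(\hat{P}_{n,k}\|P) \geq \varphi(\pi_P)\epsilon^2/4\}$, after which Theorem~\ref{thm:main} is applied at the transformed threshold, with the threshold-independent prefactors carried through unchanged. The paper presents this same substitution argument (in the paragraph immediately preceding the theorem statement), so there is nothing further to reconcile.
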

		We can make these bounds more interpretable in the same way as in Theorem \ref{thm:main} in the table following it. Here we note the regimes of $k$ and $n$ for which Theorem \ref{thm.L1FromKl} actually provides better bounds than Lemma \ref{thm.L1Old} and compare the two in the following table.\\ 
		\begin{tabular}{ |p{5cm}|p{5cm}|p{5cm}|  }
			\hline
			\multicolumn{3}{|c|}{Comparing upper bounds for $\mathbb{P}\left(\|\hat{P}_{n,k}-P\|_1 \geq \epsilon \right)$ from Lemma \ref{thm.L1Old} and Theorem \ref{thm.L1FromKl}, $C_0=(\frac{e^3}{2\pi}) \approx 3.1967$} \\
			\hline
			\hline
			$$\text{Parameter Range}$$ & $$\text{Theorem \ref{thm.L1FromKl}}$$ & $$\text{Lemma \ref{thm.L1Old}\cite[Theorem 2.1]{weissman2003inequalities}}$$ \\
			\hline
			\hline
			$$\frac{nC_0}{4}+2 \leq  k\leq \frac{nC_0}{e}+2$$ & $$C_1k\left(\sqrt{\frac{C_0n}{k}}\right)^ke^{-\frac{n\varphi(\pi_P)\epsilon^2}{4}}$$ & $$\left(2^k-2\right)e^{-\frac{n\varphi(\pi_P)\epsilon^2}{4}}$$ \\
			\hline
			$$\frac{nC_0}{e}+2 \leq k$$ & $$C_1k e^{\frac{C_0n}{2e}}e^{-\frac{n\varphi(\pi_P)\epsilon^2}{4}}$$ & $$\left(2^k-2\right)e^{-\frac{n\varphi(\pi_P)\epsilon^2}{4}}$$ \\
			\hline
		\end{tabular}
		
\vspace{10pt}		
		 Hence for large enough $k$, Theorem \ref{thm.L1FromKl} outperforms Lemma \ref{thm.L1Old} in the regimes listed above. Of course, keeping in mind that $L_1$ distance between probability distributions takes a maximum value of $2$, this improvement is only meaningful if either of these upper bounds does better than the trivial upper bound of $1$ on any probability. However, since $\varphi(\pi_P) \geq 2$ for all $P \in \mathcal{M}_k$, this happens even for $n \asymp k$. We illustrate this and the difference between Lemma \ref{thm.L1Old} and Theorem \ref{thm.L1FromKl} in Figures \ref{fig.L1Med_n} and \ref{fig.L1Small_n}. Observe that the smaller $n$ becomes, the more significantly improved our results become compared to known bounds and approach the true probabilities (as computed using Monte Carlo simulations). For the Monte Carlo simulation of true probabilities, we have set $P$ to be the uniform distribution.

		\begin{figure}[H]
			\begin{tikzpicture}[scale=0.95]
			\pgfmathsetmacro{\d}{max((pi),((e)^3)/2)}
			\pgfmathsetmacro{\C}{(3*sqrt(\d / (2*pi*e)))}
			\pgfmathsetmacro{\D}{(((e)^3/(2*pi)))}
			\pgfmathsetmacro{\k}{1250}
			\pgfmathsetmacro{\n}{1000}
			\pgfmathsetmacro{\epsold}{((\k-1)*ln(\n+1))/\n}
			\pgfmathsetmacro{\epsnew}{((\k* ln(sqrt( \D*\n / \k))+ln(\C*\k))/\n}
			
			\pgfmathsetmacro{\plotmarkernew}{sqrt(2*\epsnew)}
			\pgfmathsetmacro{\plotmarkerold}{sqrt(\k * 2 * ln(2) / \n)}

			\begin{axis}[legend pos=south west, axis y line = left, axis x line = bottom,title={$\frac{nC_0}{4}+2\leq  k \leq \frac{nC_0}{e}+2$, $C_0=(\frac{e^3}{2\pi}) \approx 3.1967$, $C_1  \approx 2.9290$}, ylabel={$\log \left(\mathbb{P}\left(\|\hat{P}_{n,k}-P\|_1 \geq \epsilon \right))\right)$},x label style={at={(axis description cs:0.5,-0.15)},anchor=north}, xlabel={$\epsilon$} , domain = 0:2,
			enlarge y limits={rel=0.17}, xtick = \empty,extra x ticks = {0,\plotmarkernew,\plotmarkerold,2}, extra x tick labels = {$0$,\hspace{-3cm}$\sqrt{\frac{4\left(k\log \left(\sqrt{\frac{nC_0}{k}}\right)+\log(C_1k)\right)}{n\varphi(\pi_P)}}$,\hspace{45pt}$\sqrt{\frac{4k\log (2)}{n\varphi(\pi_P)}}$,$2$},ytick = {0},scaled y ticks = false]

			\addplot+{min(0,\k*ln(2)-\n/2*(x*x)};
			\addlegendentry{Lemma \ref{thm.L1Old} \cite[Theorem 2.1]{weissman2003inequalities}}
			\addplot+{min(0,\n*\epsnew-\n/2*(x*x))};
			\addlegendentry{Theorem \ref{thm.L1FromKl}}
			
			\end{axis}
			\end{tikzpicture}
			\includegraphics[scale=0.60]{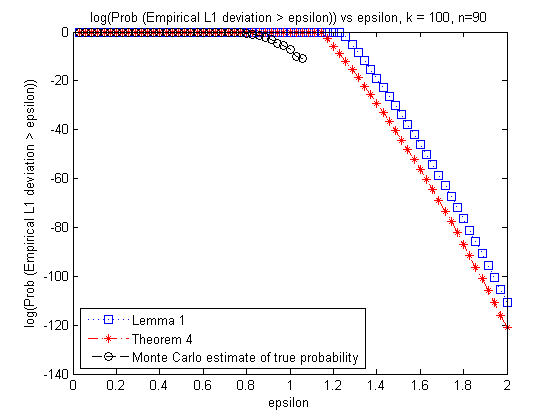}
			\caption{Cartoon and numerical plots- $\log \left(\mathbb{P}\left(\|\hat{P}_{n,k}-P\|_1 \geq \epsilon \right)\right)$ vs $\epsilon$ for medium $n$. \newline When $n$ is comparable to $k$, we plot cartoon versions of the previously known bounds from Lemma \ref{thm.L1Old} and our bounds from Theorem \ref{thm.L1FromKl}. The latter shows a non-trivial but slight improvement, and we corroborate this plot with a numerical plot for $k=100$, $n=90$ with $P$ uniform (and hence $\varphi(\pi_P) = 2$). In the numerical figure we also plot a Monte Carlo estimate of the true probabilities, which provides a baseline to aim for.}\label{fig.L1Med_n}
		\end{figure}
		
		\begin{figure}[H]
			\begin{tikzpicture}[scale=0.95]
			\pgfmathsetmacro{\d}{max((pi),((e)^3)/2)}
			\pgfmathsetmacro{\C}{(3*sqrt(\d / (2*pi*e)))}
			\pgfmathsetmacro{\D}{(((e)^3/(2*pi)))}
			\pgfmathsetmacro{\k}{2200}
			\pgfmathsetmacro{\n}{1000}
			\pgfmathsetmacro{\epsnew}{(\D*\n/(2*e)+(ln(\C*\k)))/\n}
			
			\pgfmathsetmacro{\plotmarkernew}{sqrt(2*\epsnew)}
			\pgfmathsetmacro{\plotmarkerold}{sqrt(\k * 2 * ln(2) / \n)}

			\begin{axis}[legend pos=south west, axis y line = left, axis x line = bottom,title={$\frac{nC_0}{e}+2 \leq k$, $C_0=(\frac{e^3}{2\pi}) \approx 3.1967$, $C_1  \approx 2.9290$},x label style={at={(axis description cs:0.5,-0.15)},anchor=north}, xlabel={$\epsilon$}, ylabel={$\log \left(\mathbb{P}\left(\|\hat{P}_{n,k}-P\|_1 \geq \epsilon \right)\right)$}, domain = 0:2,
			enlarge y limits={rel=0.17}, xtick = \empty, extra x ticks = {0,\plotmarkernew,\plotmarkerold,2}, extra x tick labels = {$0$,\hspace{-2cm}$\sqrt{\frac{4}{n\varphi(\pi_P)}\left(\frac{nC_0}{2e}+\log(C_1k)\right)}$,$\sqrt{\frac{4k\log (2)}{n\varphi(\pi_P)}}$,$2$},ytick = {0},scaled y ticks = false]

			\addplot+{min(0,\k*ln(2)-\n/2*(x*x)};
			\addlegendentry{Lemma \ref{thm.L1Old} \cite[Theorem 2.1]{weissman2003inequalities}}
			\addplot+{min(0,\n*\epsnew-\n/2*(x*x))};
			\addlegendentry{Theorem \ref{thm.L1FromKl}}
			\end{axis}
			\end{tikzpicture}
			\includegraphics[scale=0.60]{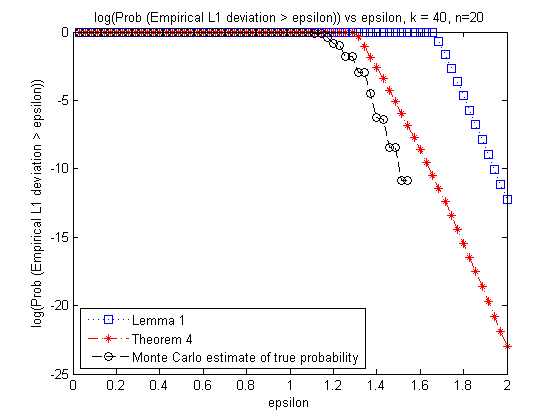}
			\caption{Cartoon and numerical plots- $\log \left(\mathbb{P}\left(\|\hat{P}_{n,k}-P\|_1 \geq \epsilon \right)\right)$ vs $\epsilon$ for small $n$.\newline When $n$ is smaller to $k$, we plot cartoon versions of the previously known bounds from Lemma \ref{thm.L1Old} and our bounds from Theorem \ref{thm.L1FromKl}. The latter shows a significant improvement, and we corroborate this plot with a numerical plot for $k=40$, $n=20$ with $P$ uniform (and hence $\varphi(\pi_P) = 2$). In the numerical figure we also plot a Monte Carlo estimate of the true probabilities, which provides a baseline to aim for. We observe that the bounds from Theorem \ref{thm.L1FromKl} are closer to the true probabilities compared with that from Lemma \ref{thm.L1Old}.	}\label{fig.L1Small_n}
		\end{figure}
		
	\end{subsection}

	\begin{section}{Future directions}\label{sec.FutDir}
	
We hope this work opens more doors than it closes. Motivated by equation~(\ref{eqn.Meanleqk/n}), we conjecture the following bound, which appears non-trivial to prove or disprove for general $n$ and $k$:
			\begin{conjecture}\label{conj.noncentral}
				\begin{align}\label{eqn.newConjec}
				\mathbb{P}\left( D(\hat{P}_{n,k} \| P) \geq \epsilon \right) \leq \left( 1 + \frac{k-1}{n}\right)^n 2e^{-n\epsilon}. 
				\end{align}
			\end{conjecture}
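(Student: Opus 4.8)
The plan is to induct on the alphabet size $k$, following the recursive reduction behind Theorem~\ref{thm:main} and taking as the base case the sharp binary bound $\mathbb{P}(D(\hat{P}_{n,2}\|P)\ge\epsilon)\le 2e^{-n\epsilon}$ of Example~\ref{eg:TightFork=2}. Writing $N_k$ for the count of symbol $k$ and conditioning on $N_k=m$, the chain rule for KL divergence gives the exact splitting
\[
D(\hat{P}_{n,k}\|P)=d\!\left(\tfrac{m}{n}\,\middle\|\,p_k\right)+\Bigl(1-\tfrac{m}{n}\Bigr)\,D(\hat{P}'\|P'),
\]
where $d(\cdot\|\cdot)$ is the binary KL divergence and, conditioned on $N_k=m$, the term $D(\hat{P}'\|P')$ is the KL divergence of an alphabet of size $k-1$ formed from $n-m$ i.i.d.\ samples of the renormalized law $P'$. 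On this slice the event $\{D(\hat{P}_{n,k}\|P)\ge\epsilon\}$ is exactly $\{D(\hat{P}'\|P')\ge\delta_m\}$ with $\delta_m=(\epsilon-d(\tfrac mn\|p_k))/(1-\tfrac mn)$, and the identity $(n-m)\delta_m=n(\epsilon-d(\tfrac mn\|p_k))$ makes the $\epsilon$-dependence factor cleanly through $e^{-n\epsilon}$.

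Applying the inductive bound to $D(\hat{P}'\|P')$, the weights $p_k^m(1-p_k)^{n-m}$ in $\mathbb{P}(N_k=m)$ cancel against $e^{n\,d(m/n\|p_k)}$, so the whole estimate becomes independent of $P$. I would thereby reduce the conjecture to the single distribution-free inequality, for $k\ge3$,
\[
\sum_{m=0}^{n}\binom{n}{m}\Bigl(\tfrac{m}{n}\Bigr)^{m}\Bigl(\tfrac{n-m}{n}\Bigr)^{n-m}\Bigl(1+\tfrac{k-2}{n-m}\Bigr)^{n-m}\ \le\ \Bigl(1+\tfrac{k-1}{n}\Bigr)^{n}.
\]
The same cancellation, applied globally rather than slice by slice, identifies the natural analytic object: the exponential moment at tilt $n$ does not depend on $P$ and equals
\[
\mathbb{E}\bigl[e^{nD(\hat{P}_{n,k}\|P)}\bigr]=\frac{n!}{n^{n}}\sum_{N_1+\cdots+N_k=n}\ \prod_{i=1}^{k}\frac{N_i^{N_i}}{N_i!}=\frac{n!}{n^{n}}\,[x^{n}]\,(1-T(x))^{-k},
\]
where $T(x)=\sum_{j\ge1}\tfrac{j^{j-1}}{j!}x^{j}$ is the tree (Lambert) function satisfying $T=xe^{T}$; a Markov step $\mathbb{P}(D\ge\epsilon)\le e^{-n\epsilon}\,\mathbb{E}[e^{nD}]$ then reduces the conjecture to bounding this explicit coefficient by $2(1+\tfrac{k-1}{n})^{n}$.

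The step I expect to be the true obstacle is closing either reduction with a prefactor that stays \emph{bounded} in $n$. Singularity analysis of $(1-T(x))^{-k}$ at its dominant singularity $x=1/e$ (where $T=1$ and $1-T(x)\sim\sqrt{2(1-ex)}$) yields $\mathbb{E}[e^{nD}]\sim \tfrac{\sqrt{2\pi}}{2^{k/2}\Gamma(k/2)}\,n^{(k-1)/2}$, which \emph{grows} polynomially; hence Markov at tilt $n$ overshoots the target $2(1+\tfrac{k-1}{n})^{n}\to 2e^{k-1}$ by a factor $\Theta(n^{(k-1)/2})$, and the slice-wise induction inherits the same loss even with the sharp $k=2$ base --- already at $k=3$ the displayed sum is $\asymp\sqrt{n}$, so the reduced inequality above fails for large $n$. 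This is not mere looseness of Markov's inequality: a local-limit (refined Sanov) computation, in which the minimizer of the rate function over $\{D(\cdot\|P)\ge\epsilon\}$ is the entire level surface $\{D=\epsilon\}$ rather than an isolated point, indicates a true tail prefactor of order $n^{(k-3)/2}$. For $k\ge4$ this grows without bound in $n$, so I would not expect the conjecture to hold uniformly for $n\gg k$. The realistic route is therefore twofold: first prove the bound in the high-dimensional regime $n\lesssim k$ that motivates this paper, where the factor $(1+\tfrac{k-1}{n})^{n}$ is genuinely large and the reduced inequality has room to hold; and second, determine the exact prefactor growth for fixed $k$ and large $n$, which I anticipate will \emph{refute} the conjecture for $k\ge4$ while pinning down the correct polynomial correction.
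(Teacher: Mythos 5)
This statement is Conjecture~\ref{conj.noncentral}: the paper offers \emph{no proof} of it, and explicitly describes it as ``non-trivial to prove or disprove for general $n$ and $k$.'' So there is no paper proof to compare you against, and your proposal --- which you candidly frame as an obstruction analysis rather than a proof --- must be judged on its own. As a proof attempt it has a genuine gap by its own admission: it establishes neither the conjecture (in any regime) nor its negation. But the mathematics you do carry out is correct and substantive. The slice-wise reduction is right: conditioning on $N_k=m$, the chain rule, the cancellation of $p_k^m(1-p_k)^{n-m}$ against $e^{n\,d(m/n\|p_k)}$, and the identity $(n-m)\delta_m=n(\epsilon-d(\tfrac mn\|p_k))$ do reduce the induction (with the sharp $k=2$ base, exactly the paper's recursion from Appendix~\ref{sec.maintheoremproof}) to the displayed distribution-free inequality, whose left side at $k=3$ is indeed $\asymp\sqrt{n}$ while the right side tends to $e^{2}$. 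The exponential-moment identity $\mathbb{E}[e^{nD}]=\frac{n!}{n^n}[x^n](1-T(x))^{-k}$ is correct (using $\sum_{j\ge 0}\frac{j^j}{j!}x^j=(1-T(x))^{-1}$), and the singularity analysis giving $\mathbb{E}[e^{nD}]\sim\frac{\sqrt{2\pi}}{2^{k/2}\Gamma(k/2)}n^{(k-1)/2}$ checks out (at $k=2$ it matches the direct computation $\sqrt{\pi/2}\,\sqrt{n}$, i.e.\ the paper's constant $c_0$ scaling). So you have correctly shown that neither the paper's recursion nor a Markov bound at tilt $n$ can ever produce a prefactor bounded in $n$ for fixed $k\ge 3$.

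The part that remains heuristic is precisely the part that would decide the conjecture: the claim that the \emph{true} tail prefactor is $\Theta(n^{(k-3)/2})$ for fixed $k$ and fixed small $\epsilon$. Your reasoning (the rate-function minimizer over $\{D\ge\epsilon\}$ is the whole $(k-2)$-dimensional level surface, not a point) is sound and is consistent both with Bahadur--Rao at $k=2$ (prefactor $n^{-1/2}$) and with the Agrawal bound quoted in Section~\ref{sec.FutDir}, whose prefactor also grows polynomially in $n$. To make it rigorous, fix a small $\epsilon$ so that $\{Q: D(Q\|P)=\epsilon\}$ is compactly contained in the interior of $\mathcal{M}_k$, use the Stirling-exact type probability $\mathbb{P}(\hat{P}_{n,k}=Q)\asymp n^{-(k-1)/2}\bigl(\textstyle\prod_i q_i\bigr)^{-1/2}e^{-nD(Q\|P)}$, and count the $\asymp n^{k-2}$ types in each shell $\{\epsilon+j/n\le D\le \epsilon+(j+1)/n\}$; summing the geometric series in $j$ gives a lower bound $c(\epsilon,P,k)\,n^{(k-3)/2}e^{-n\epsilon}$, which for $k\ge 4$ eventually exceeds $2\bigl(1+\frac{k-1}{n}\bigr)^n\le 2e^{k-1}$. (Such exact asymptotics for the multinomial likelihood-ratio statistic are classical, going back to Hoeffding's 1965 work on asymptotically optimal multinomial tests, so this step is closable.) In short: you have not proved the statement, but the statement is an open conjecture, and your analysis --- once the local-limit step is made rigorous --- would refute it for fixed $k\ge 4$ and large $n$, leaving open only the high-dimensional regime $n\lesssim k$ that you correctly identify as the conjecture's last stand.
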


Note that we can't hope to get $\epsilon_{\text{thresh}}$ values much better than those implied by Conjecture ~\ref{conj.noncentral} because as $n$ grows, for $k = o(n)$, the $\epsilon_{\text{thresh}}$ we get from Conjecture ~\ref{conj.noncentral} goes to $\Theta\left(\frac{k}{n}\right)$ which we know is a lower bound on $\epsilon_{\text{thresh}}$.

In fact, with the fact that $2n \cdot D(\hat{P}_{n,k} \| P)$ is asymptotically $\chi_{k-1}^2$ from Theorem \ref{thm:VarIsk/nsq}, we can make a conjecture for the centralized concentration using known results for sub-exponential random variables. 
			\begin{conjecture}
			There exist two constants $g_1>0$, $g_2>0$ such that for any $t>0$,   
				\begin{align}\label{eqn.newCentralConjec}
					\mathbb{P}\left(|D(\hat{P}_{n,k} \| P)-\mathbb{E}[D(\hat{P}_{n,k} \| P)]| \geq t \right) \leq g_1 e^{-g_2 \min\left\{ \frac{n^2t^2}{k-1}, nt \right \}}. 
				\end{align}
			\end{conjecture}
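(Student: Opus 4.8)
The plan is to show that $Z := D(\hat{P}_{n,k}\|P)$ is sub-exponential with variance proxy $\nu^2 \asymp \frac{k-1}{n^2}$ and scale $b \asymp \frac{1}{n}$, uniformly in $n$, $k$, and $P$, and then invoke the standard Bernstein tail bound. Once a bound of the form
\begin{align*}
\log \mathbb{E}\left[e^{\lambda (Z - \mathbb{E}[Z])}\right] \leq \frac{\lambda^2 \nu^2 / 2}{1 - b|\lambda|}, \qquad |\lambda| < \tfrac{1}{b},
\end{align*}
is established, a Chernoff optimization immediately yields $\mathbb{P}(|Z - \mathbb{E}[Z]| \geq t) \leq 2\exp(-\tfrac{1}{2}\min\{t^2/\nu^2,\, t/b\})$, which is exactly \eqref{eqn.newCentralConjec} with $g_1 = 2$ and a universal $g_2$ once the claimed values of $\nu^2,b$ are substituted. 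This target is consistent with everything already proved: the Gaussian regime $\frac{n^2 t^2}{k-1}$ matches $\Var[Z] \asymp \frac{k}{n^2}$ from Theorem~\ref{thm:VarIsk/nsq}, the exponential regime $nt$ matches the one-sided decay rate $e^{-n\epsilon}$ of Theorem~\ref{thm:main}, and the crossover at $t \asymp \frac{k-1}{n}$ sits at the scale of $\mathbb{E}[Z]$ from \eqref{eqn.Meanleqk/n}. The weak limit $2nZ \xrightarrow{\mathcal{D}} \chi^2_{k-1}$ makes the guess sharp, since the displayed bound is precisely the Bernstein tail of a centered, scaled $\chi^2_{k-1}$ variable.

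All the work is in bounding the centered MGF. Writing $Z = \sum_{i=1}^k \phi_i(N_i)$ with $\phi_i(m) = \tfrac{m}{n}\log\tfrac{m/n}{p_i}$ and $(N_1,\dots,N_k) \sim \mathrm{Multinomial}(n,P)$, I would reuse the smooth/non-smooth decomposition from the proof of Theorem~\ref{thm:VarIsk/nsq}, splitting the coordinates into $S = \{i : p_i > 1/n\}$ and $\bar S = \{i : p_i \leq 1/n\}$. On $S$ I would approximate each $\phi_i$ by a low-degree polynomial via the Bernstein/Braess--Sauer technique of \cite{paninski2003estimation,braess2004bernstein}, so that $\phi_i(N_i)$ is, up to controlled bias, a bounded-degree polynomial in $N_i$ whose exponential moments are estimable through the factorial moments of the multinomial. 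On $\bar S$ each count is typically $0$ and contributes a term of size $O(\tfrac{\log n}{n})$ only with probability $O(p_i)$, so these coordinates can be bounded directly and contribute negligibly to the MGF.

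The main obstacle, and the step that genuinely goes beyond the variance argument, is controlling the dependence among the $N_i$ at the level of the full MGF rather than a single covariance. Since the counts are negatively associated \cite{joag1983negative}, one hopes to upper bound the joint exponential moment by a product of marginal exponential moments, but $\phi_i$ is not monotone (it is convex with minimum at $m = np_i$), so association does not apply verbatim; I would split each $\phi_i$ into its increasing and decreasing pieces, apply the association inequality to each monotone part, and track the cross terms. Crucially, one cannot pass to the independent Poissonized model: Theorem~\ref{thm.Poi} shows Poissonization inflates the variance from $\tfrac{k}{n^2}$ to $\tfrac{1}{n}$, so any MGF comparison must retain the sum-to-$n$ constraint that produces the correct $\tfrac{k-1}{n^2}$ scaling. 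Bridging from the established second-moment control to a full sub-exponential MGF bound with the right scale, while respecting this constraint, is where the real difficulty lies; I expect it to require either an exponentially efficient de-Poissonization estimate or a direct moment recursion in the spirit of the recursive reduction behind Theorem~\ref{thm:main}, bounding all centered moments $\mathbb{E}[(Z - \mathbb{E}[Z])^p]$ by $p!\, b^{p-2}\nu^2$. Finally, the lower tail needs no separate treatment beyond taking $\lambda < 0$: because $Z \geq 0$ and $\mathbb{E}[Z] = O(\tfrac{k}{n})$, a downward deviation of size $t$ is impossible once $t > \mathbb{E}[Z]$, so the lower tail lives entirely in the Gaussian regime and is covered by the same MGF estimate.
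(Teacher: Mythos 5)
You should be aware that the statement you are trying to prove is stated in the paper as a \emph{conjecture}: the authors offer no proof, only the heuristic that motivates it, namely that $2nD(\hat{P}_{n,k}\|P) \xrightarrow{\mathcal{D}} \chi^2_{k-1}$ (Theorem~\ref{thm:VarIsk/nsq}) together with the standard Bernstein-type tail of a centered $\chi^2_{k-1}$ variable. Your opening paragraph reconstructs exactly this heuristic, and your consistency checks (matching the variance scale $k/n^2$, the one-sided rate $e^{-n\epsilon}$ of Theorem~\ref{thm:main}, and the crossover at $t \asymp \frac{k-1}{n}$ from~(\ref{eqn.Meanleqk/n})) are sound. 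But what you have written is a research plan, not a proof: every step that would actually establish the centered MGF bound is deferred with ``I would'' or ``I expect it to require,'' and those are precisely the steps that are open.

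The most concrete gap is your treatment of dependence. In the paper's variance proof, the non-monotone summand $\Delta_i$ is split into monotone pieces $\Delta_i^{(1)},\Delta_i^{(2)},\Delta_i^{(3)}$, and the argument survives only because variance is a quadratic functional and the cross terms vanish or factor thanks to disjoint supports and piecewise constancy (Equation~\ref{eqn.V2CrossTermsDontMatter}); this is what lets negative association, which applies only to coordinatewise monotone functions, be used at all. At the MGF level no analogue of this is available: $e^{\lambda \sum_i \phi_i(N_i)}$ does not decompose into a sum or product of terms that are monotone in each coordinate, the cross terms between increasing and decreasing pieces across different coordinates do not vanish, and ``track the cross terms'' is exactly the unsolved difficulty rather than a step one can execute. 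Likewise, your fallback options are both open problems in this setting: an ``exponentially efficient de-Poissonization'' cannot be a routine comparison, since Theorem~\ref{thm.Poi} shows the Poissonized model has fluctuations of order $n^{-1/2}$ rather than $\sqrt{k}/n$, so any transfer of exponential moments must exploit the sum-to-$n$ constraint in a way nobody has done; and the uniform moment bound $\mathbb{E}\left[(Z-\mathbb{E}[Z])^p\right] \leq p!\, b^{p-2}\nu^2$ for all $p$ is a strictly stronger statement than the $p=2$ case proved in the paper, with no indication of how the polynomial-approximation and negative-association machinery extends beyond $p=2$. Your proposal identifies the right target and the right obstacles, but it does not close the conjecture, which remains open.
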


	Moreover, we note that the $L_1$ deviation inequalities depend on $P$, while the KL divergence deviation bounds are uniform. It might be interesting to look at how to get better KL divergence deviation bounds that depend on some parameter of the distribution $P$.
	
	After initial dissemination of this work, in follow up work \cite[Thm 1.2]{agrawal2019concentration} provided another KL divergence concentration bound. Their bound performs better than Theorem \ref{thm:main} when
	\[\frac{k-1}{n} \left(\log2 + 1\right) < \epsilon <  \frac{k-1}{n} \left(\log2 + \sqrt{\frac{n}{k}} \right).\] Moreover, although they do not state this, their result immediately implies sub-Gaussian concentration for the square root of the KL divergence $\left(\sqrt{D(\hat{P}_{n,k} \| P)} \right)$ for a restricted range of parameters. Their result states that for $\epsilon > \frac{k-1}{n} \left(\log2 + 1\right)$, 
	\[\mathbb{P}\left( D(\hat{P}_{n,k} \| P) \geq \epsilon \right) \leq  e^{-n\epsilon}\left(2e\left(\frac{\epsilon n}{k-1}-\log 2\right)\right)^{k-1}  .\]
	Using the facts that $\frac{x}{2}-\log (12x) \geq 0$ if $x \geq 10$ and that $\left(2e\left(\frac{\epsilon n}{k-1}-\log 2\right)\right) < 12\frac{\epsilon n}{k-1}$, we can see that their result implies that for $\epsilon > 10 \frac{k-1}{n}$
	\[\mathbb{P}\left( D(\hat{P}_{n,k} \| P) \geq \epsilon \right) \leq  e^{-\frac{n\epsilon}{2}}.\]
	Since both the total variation ($L_1$) distance and Hellinger distance are upper bounded by a constant times the square root of the KL divergence, this result implies sub-Gaussian concentration for the empirical distribution in both these distance metrics for a restricted range of parameters. It would be interesting to see if one can obtain such a scaling for the whole range of parameters.

	\end{section}

	\begin{section}{Acknowledgments}
		The authors would like to thank Rohit Agrawal for pointing out a typo in a previous version of this paper; in Theorem~\ref{thm:VarIsk/nsq} we had missed a factor of $2$. JM would like to thank Tavor Baharav for catching a bug in an earlier version of Lemma 2.
	\end{section}
		
	\bibliographystyle{alpha}
	\bibliography{DistTypesBib}	
	
	\appendix

	\begin{section}{Proof of Theorem~\ref{thm:VarIsk/nsq}}\label{sec.VarThm}
	In this Section, we prove Theorem \ref{thm:VarIsk/nsq}. In Subsection \ref{subsec.Varleq} we prove Equation \ref{eqn.Varleqk/nsq} which says that \[\Var\left[D(\hat{P}_{n,k}\|P)\right] \leq \min\left(\frac{6\cdot (3+\log k)^2}{n},C\frac{k}{n^2}\right),\]  and in Subsection \ref{subsec.Vargeq} we prove Equation \ref{eqn.Vargeqk/nsq} which says that asymptotically \[2nD(\hat{P}_{n,k}\|P) \xrightarrow{\mathcal{D}} \chi^2_{k-1}\] and \[2(k-1) \leq \lim\limits_{n \rightarrow \infty} 4n^2 D(\hat{P}_{n,k}\|P)].\] Hence we have an upper bound and an asymptotic lower bound on $\Var\left[D(\hat{P}_{n,k}\|P)\right]$.
	\begin{subsection}{Showing $\Var\left[D(\hat{P}_{n,k}\|P)\right] \leq \min\left(\frac{6\cdot (3+\log k)^2}{n},C\frac{k}{n^2}\right)$}\label{subsec.Varleq}
		To prove Equation \ref{eqn.Varleqk/nsq} in Theorem \ref{thm:VarIsk/nsq}, our goal is to upper bound
		\[
		\Var\left[D(\hat{P}_{n,k}\|P)\right] = \Var \left( \sum_{i\in [k]} \hat{p}_i \log \frac{\hat{p}_i}{p_i} \right) \ .
		\]
		In what follows in Appendix \ref{sec.VarThm}, $c$ denotes a constant independent of $k$ and $n$, though its value might change from line to line.
		
		\begin{subsubsection}{$\Var\left[D(\hat{P}_{n,k}\|P)\right] \leq \frac{6(\log k +3)^2}{n}$}\label{subsubsec.Varleqlogk}
			We first show that 
			\begin{align}
			\Var\left[D(\hat{P}_{n,k}\|P)\right] & \leq \frac{6(\log k +3)^2}{n}.
			\end{align}
			Indeed, decomposing
			\begin{align*}
			\Var\left[D(\hat{P}_{n,k}\|P)\right] & = \Var\left(\sum_{i\in [k]} - \hat{p}_i \log \frac{1}{\hat{p}_i} + \sum_{i\in [k]} \hat{p}_i \log \frac{1}{p_i}\right) \leq 2 \left(\Var\left(\sum_{i\in [k]} - \hat{p}_i \log \frac{1}{\hat{p}_i}\right) + \Var\left(\sum_{i\in [k]} \hat{p}_i \log \frac{1}{p_i}\right)\right). 
			\end{align*}
			The first term is the plug-in entropy estimator, whose variance was shown in \cite[Lemma 15]{jiao2017maximum} to be upper bounded by 
			\[
			\Var \left( \sum_{i\in [k]} - \hat{p}_i \log \frac{1}{\hat{p}_i} \right) \leq \frac{2(\log k +3)^2}{n}. 
			\]
			Regarding the second term, since each term is increasing respect to $\hat{p}_i$, by the negative association property of multinomial distribution (There are many references for negative association properties. For example, one could consult \cite[Definition 2.1, Property P6, and Section 3.1(a)]{joag1983negative}) we obtain
			\begin{align*}
			\Var \left( \sum_{i\in [k]} \hat{p}_i \log \frac{1}{p_i} \right) & \leq \sum_{i\in [k]} \frac{p_i(1-p_i)}{n}\log^2(1/p_i) \leq \sum_{i\in [k]} \frac{p_i}{n}\log^2 (1/p_i) \leq \sum_{i\in [k]} \frac{p_i}{n}(1+\log (1/p_i))^2 \\
			& = \frac{k}{n} \sum_{i\in [k]} \frac{1}{k} p_i (1+\log (1/p_i))^2 \\
			& \leq \frac{(1+\log k)^2}{n},
			\end{align*}
			where in the last step we used the fact that $x(1+\ln(1/x))^2$ is a concave function on $(0,1]$ along with Jensen's inequality.
			\\ These two variance upper bounds complete the proof of the fact $\Var\left[D(\hat{P}_{n,k}\|P)\right] \leq \frac{6(\log k +3)^2}{n}$.
		\end{subsubsection}
		
		\begin{subsubsection}{$\Var\left[D(\hat{P}_{n,k}\|P)\right] \leq C\frac{k}{n^2}$}\label{subsubsec.Varleqk}
			We now show the other part of Equation \ref{eqn.Varleqk/nsq}, that there exists some absolute constant $C$
			\begin{align}\label{eqn.VarleqPart2}
			\Var\left[D(\hat{P}_{n,k}\|P)\right] & \leq C\frac{k}{n^2}.
			\end{align}
			We rewrite the KL-divergence as
			\begin{align*}
			\sum_{i\in [k]} \hat{p}_i \log \frac{\hat{p}_i}{p_i} & = \sum_{p_i \geq 1/n} \left( \frac{(\hat{p}_i - p_i)^2}{p_i} + \hat{p}_i - p_i  + \Delta_i \right) + \sum_{p_i <1/n} \left( \hat{p}_i \log \frac{\hat{p}_i}{p_i}  + \hat{p}_i - p_i + p_i - \hat{p}_i \right) \\
			& = \sum_{p_i \geq 1/n} \frac{(\hat{p}_i-p_i)^2}{p_i} + \sum_{p_i \geq 1/n} \Delta_i + \sum_{p_i <1/n} \left( \hat{p}_i \left( \log \frac{\hat{p}_i}{p_i}\right)  +p_i - \hat{p}_i \right)  \ ,
			\end{align*}
			where $\Delta_i = \hat{p}_i \log \frac{\hat{p}_i}{p_i} - \frac{(\hat{p}_i - p_i)^2}{p_i} + p_i - \hat{p}_i = \hat{p}_i \log (\hat{p}_i /p_i) - \hat{p}_i^2/p_i + \hat{p}_i$.
			
			Hence
			\begin{align}
			\Var\left[D(\hat{P}_{n,k}\|P)\right] & = \Var \left( \sum_{p_i \geq 1/n} \frac{(\hat{p}_i-p_i)^2}{p_i} + \sum_{p_i \geq 1/n} \Delta_i + \sum_{p_i <1/n} \left( \hat{p}_i \left( \log \frac{\hat{p}_i}{p_i}\right)  +p_i - \hat{p}_i \right) \right) \nonumber \\
			& \label{eqn.V1V2V3decomp} \leq 9 \left( \underbrace{\Var \left( \sum_{p_i \geq 1/n} \frac{(\hat{p}_i-p_i)^2}{p_i} \right)}_{:=V_1} + \underbrace{\Var \left( \sum_{p_i \geq 1/n} \Delta_i \right)}_{:=V_2} + \underbrace{\Var \left( \sum_{p_i <1/n} \left( \hat{p}_i \left( \log \frac{\hat{p}_i}{p_i}\right)  +p_i - \hat{p}_i \right) \right)}_{:=V_3} \right) \ .
			\end{align}
			
			We bound the terms $V_1,V_2$ and $V_3$ separately.
			
			\vspace{0.2cm}
			\textbf{Bounding $V_1$:}
			
			We first state the following result which we use here and prove in Lemma \ref{lemma.exactquadratic} in Appendix \ref{sec.SumToIntegral}.
			\\Let $\hat{P}_{n,k} = (\hat{p}_1,\hat{p}_2,\ldots,\hat{p}_k)$ be the empirical distribution. Denote 
			\begin{align*}
			X_i & = \frac{(\hat{p}_i - p_i)^2}{p_i} - \frac{1-p_i}{n}. 
			\end{align*}
			Then, for any $i\neq j$, 
			\begin{align*}
			\mathbb{E}[X_i^2] & = \frac{(1-p_i)(1+2(n-3)p_i(1-p_i))}{n^3 p_i} \\
			\mathbb{E}[X_iX_j] & = \frac{2(p_i + p_j) -1 + 2(n-3)p_i p_j}{n^3}. 
			\end{align*}
			Now defining $I = \{i : np_i \geq 1 \}$ we use Lemma \ref{lemma.exactquadratic} as follows.
			It is clear that $\sum_{i\in I} p_i \leq 1$ and $\sum_{i\in I} 1 \leq k$.
			\\ Now, using the fact that for the terms we are concerned with $(i \in I)$, $np_i > 1$, that $(1-p_i)<1$ and $n-3 < n$, we can write, using Lemma \ref{lemma.exactquadratic} that
			
			\begin{align}\label{eqn.XiSqEasy}
			\mathbb{E}[X_i^2] & \leq \frac{3np_i}{n^3 p_i} = \frac{3}{n^2}.
			\end{align}
			Similarly, using Lemma \ref{lemma.exactquadratic} and the facts that $p_ip_j < p_i$, $n-2<n$, $-1<0$, we get
			\begin{align*}
			\mathbb{E}[X_iX_j] & \leq 2\frac{p_j+np_i}{n^3}.
			\end{align*}
			Using these two inequalities, we can now easily upper bound $V_1$
			\begin{align*}
			V_1 & = \sum_{i,j\in I} \mathbb{E}[X_i X_j] = \sum_{i\in I} \mathbb{E}[X_i^2] + \sum_{i\in I} \sum_{j\neq i} \mathbb{E}[X_i X_j] \\
			& \leq \sum_{i \in I} \frac{3}{n^2} + \sum_{i\in I} \sum_{j\neq i} 2\frac{p_j+np_i}{n^3} \leq \frac{3k}{n^2} +   \sum_{i\in I}  2\frac{1+nkp_i}{n^3} \\
			& \leq \frac{3k}{n^2} +  2\frac{k+nk}{n^3} \leq \frac{7k}{n^2}. 
			\end{align*}
			\begin{align}\label{eqn.V1Bounded}
			V_1 \leq \frac{7 k }{n^2}. 
			\end{align}

			\vspace{0.2cm}
			\textbf{Bounding $V_2$:}
			\\We want to upper bound $V_2 = \Var \left( \sum_{p_i \geq 1/n} \Delta_i \right)$ where $\Delta_i = \hat{p}_i \log (\hat{p}_i /p_i) - \hat{p}_i^2/p_i + \hat{p}_i$. The first step we take towards this is to obtain the following upper bound -
			\begin{align}\label{eqn.V2BoundDecompose}
			V_2 = \Var \left( \sum_{p_i \geq 1/n} \Delta_i \right) \leq c\sum_{p_i \geq 1/n} \mathbb{E} \left[ \Delta_i^2 \right].
			\end{align}
			Recall that throughout this Section, $c$ is an absolute constant independent of $n$ and $k$ but may change from line to line. Equation \ref{eqn.V2BoundDecompose} is saying is that we can decompose the variance of the sum in $V_2$ into a sum of squared expectations.
			\\The standard way of obtaining such an inequality is via the Negative Association properties \cite{joag1983negative} of multinomial random variables. In particular, we know that multinomial random variables are negatively associated, disjoint monotone functions of negatively associated random variables are negatively associated, and the variance of sums of negatively associated random variables is subadditive. However, the hitch is that $\Delta_i$'s are not monotone in $\hat{p}_i$. Hence our strategy will be to decompose $\Delta_i$ as the sum of monotone functions (in $\hat{p}_i$), and use the properties of negatively associated random variables on those to decompose the variance of a sum into a sum of variances.
			\\We use the notation $f(x,y) = x\log (x/y) +x -x^2 /y$. Then
			\[
			\frac{\partial}{\partial x} f(x,y) = \log (x/y) +2 -2x/y = \log (1+z) -2z \ ,
			\]
			with $z=(x/y)-1$. The derivative is zero at $z=0 \iff x=y$ and at some $z^* \in (-1,0)$ satisfying $\log (1+z^* ) = 2z^*$, in which case $x=(1+z^*)y$. Hence the function $f(x,y)$ is decreasing in $x$ on the interval $[0,(1+z^*)y]$, increasing on the interval $[(1+z^*)y,y]$, and decreasing on $[y,1]$. Also note that $f(0,y)=f(y,y)=0$. Observe that $\Delta_i = f(\hat{p}_i ,p_i)$.
			
			Hence the decomposition $\Delta_i = \Delta^{(1)}_i + \Delta^{(2)}_i + \Delta^{(3)}_i$ with
			\begin{align*}
			\Delta^{(1)}_i & = \1 \{ \hat{p}_i \leq (1+z^*)p_i \} \left( f(\hat{p}_i ,p_i) - f((1+z^*)p_i ,p_i ) \right), \  \\
			\Delta^{(2)}_i & = \1 \{ \hat{p}_i \leq (1+z^*)p_i \} f((1+z^*)p_i ,p_i ) + \1 \{ \hat{p}_i \in [(1+z^*)p_i ,p_i ] \} f(\hat{p}_i ,p_i), \  \\
			\Delta^{(3)}_i & = \1 \{ \hat{p}_i > p_i \} f(\hat{p}_i ,p_i) 
			\end{align*}
			holds. Furthermore, $\Delta^{(1)}_i ,\Delta^{(2)}_i$ and $\Delta^{(3)}_i$ are all monotonic in $\hat{p}_i$.
			We make the following two observations that \[\mathbb{E}[\Delta^{(1)}_i\Delta^{(3)}_i] = \mathbb{E}[\Delta^{(2)}_i\Delta^{(3)}_i] = 0,\] because the support of $\Delta^{(3)}_i$ is disjoint from the supports of $\Delta^{(1)}_i$ and $\Delta^{(2)}_i$. Also,
			\[\mathbb{E}[\Delta^{(1)}_i\Delta^{(2)}_i] = \mathbb{E}[\Delta^{(1)}_i] \mathbb{E}[\Delta^{(2)}_i],\] because wherever $\Delta^{(1)}_i$ is non-zero, $\Delta^{(2)}_i$ is a constant and hence can be pulled out of the expectation.
			\\ Using these two observations, and after some algebra, we can obtain the following inequality
			\begin{align}\label{eqn.V2CrossTermsDontMatter}
			\Var \left(\Delta^{(1)}_i \right)+\Var \left(\Delta^{(2)}_i \right)+\Var \left(\Delta^{(3)}_i \right) \leq \mathbb{E}[\Delta_i^2].
			\end{align}
			Thus using the negative association of $(\hat{p}_1,\dots ,\hat{p}_k)$, the fact that $\Delta^{(1)}_i ,\Delta^{(2)}_i$ and $\Delta^{(3)}_i$ are all monotonic in $\hat{p}_i$, and Equation \ref{eqn.V2CrossTermsDontMatter}, we get
			\begin{align*}
			V_2 &\leq 3 \left( \Var \left( \sum_{p_i\geq 1/n} \Delta^{(1)}_i \right) + \Var \left( \sum_{p_i\geq 1/n} \Delta^{(2)}_i \right) + \Var \left( \sum_{p_i\geq 1/n} \Delta^{(3)}_i \right) \right) \nonumber \\
			&  \leq 3 \left( \sum_{p_i\geq 1/n} \Var \left( \Delta^{(1)}_i \right) + \sum_{p_i\geq 1/n} \Var \left( \Delta^{(2)}_i \right) + \sum_{p_i\geq 1/n} \Var \left( \Delta^{(3)}_i \right) \right) \nonumber \\
			& \leq 3\sum_{p_i \geq 1/n} \mathbb{E} \left[ \Delta_i^2 \right].
			\end{align*}
			which proves Equation \ref{eqn.V2BoundDecompose}.
			\\ To further bound this quantity, we introduce the following lemma.
			\begin{lemma}
				For every $x \geq -1$, these standard logarithmic inequalities
				\[ x \leq (1+x)\log (1+x) \leq (1+x) \left( x \right)
				\]
				hold.
			\end{lemma}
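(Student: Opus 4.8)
The plan is to derive both inequalities from the single fundamental bound $\log(1+x) \le x$, valid for all $x > -1$ and extended to the endpoint $x = -1$ via the convention $0\log 0 = 0$, under which $(1+x)\log(1+x) = 0$ there. Equivalently, writing $t = 1+x \ge 0$, this is the classical statement $\log t \le t - 1$ with equality iff $t = 1$, which I would take as known or prove in one line by noting that $t \mapsto t - 1 - \log t$ has derivative $1 - 1/t$, is therefore decreasing on $(0,1]$ and increasing on $[1,\infty)$, and vanishes at $t = 1$.

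For the upper bound $(1+x)\log(1+x) \le (1+x)x$, I would simply multiply the fundamental inequality $\log(1+x) \le x$ by the nonnegative factor $(1+x)$. Since $1+x \ge 0$ the direction of the inequality is preserved, and the boundary case $x = -1$ gives $0 \le 0$. Thus the right-hand inequality is immediate and requires no separate argument.

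For the lower bound $x \le (1+x)\log(1+x)$, I would apply the fundamental inequality to the reciprocal: taking $t = 1+x > 0$ and using $\log(1/t) \le (1/t) - 1$ yields $\log t \ge 1 - 1/t = x/(1+x)$, and multiplying through by the positive factor $(1+x)$ gives $(1+x)\log(1+x) \ge x$. Alternatively, and perhaps cleaner to present, I would argue by calculus: set $g(x) = (1+x)\log(1+x) - x$, so that $g(0) = 0$ and $g'(x) = \log(1+x)$, which is negative on $(-1,0)$ and positive on $(0,\infty)$; hence $g$ attains its global minimum value $0$ at $x = 0$ and is nonnegative throughout, while the endpoint value $g(-1) = 0 - (-1) = 1 \ge 0$ is consistent.

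There is essentially no substantive obstacle here, as this is a routine auxiliary lemma whose sole purpose is to feed the subsequent bound on $\mathbb{E}[\Delta_i^2]$. The only points requiring care are the interpretation of the endpoint $x = -1$ through the convention $0\log 0 = 0$, and verifying that the multiplicative factor $(1+x)$ is nonnegative so that multiplying an inequality by it preserves its direction in both steps.
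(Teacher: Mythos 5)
Your proof is correct and follows essentially the same route as the paper's: both reduce, after multiplying or dividing by the nonnegative factor $1+x$, to the standard bounds $\frac{x}{1+x} \le \log(1+x) \le x$, established by elementary derivative-sign arguments, with the endpoint $x=-1$ handled via the convention $0\log 0 = 0$. The only cosmetic difference is that you obtain the lower bound from the single inequality $\log t \le t-1$ via the substitution $t \mapsto 1/t$ (or the one-line calculus argument on $g(x) = (1+x)\log(1+x) - x$) rather than running a second derivative computation as the paper does, and your argument as written covers the full range $x \ge -1$, whereas the paper's write-up restricts attention to $x \in (-1,1]$ even though its argument extends verbatim.
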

			\begin{proof}
				For $x=-1$, the inequality is true because $-1\leq 0 \leq 0$. Hence now suppose $x \in \left(-1,1\right]$.
				We then have that \[1+x > 0\] so dividing on throughout by $1+x$, we only need to show that \[\underbrace{\frac{x}{1+x} - \log (1+x)}_{f_1(x)} \leq 0 \leq \underbrace{ x - \log (1+x)}_{f_2(x)}\] for all $x \in \left(-1,1\right]$. To do this, observe that $f_1(0)=0$ ($f_2(0)=0$). If we can show that \[f_1'(x) \leq 0 \hspace{0.3cm} (0 \leq f_2'(x)) \text{ for } 0 \leq x \leq 1\]  and \[0 \leq f_1'(x) \hspace{0.3cm}(f_2'(x) \leq 0) \text{ for } -1 < x \leq 0,\] then we are done. Computing the derivative shows that these conditions hold because \[f_1'(x) = \frac{-x}{(1+x)^2}\] and \[f_2'(x) = \frac{x}{1+x}.\] This concludes the proof of the lemma.
				
			\end{proof}
			
			Using the Lemma with $x=\tfrac{\hat{p}_i}{p_i}-1$ yields
			\[\left( \frac{\hat{p}_i}{p_i}-1  \right) \leq 
			\frac{\hat{p}_i}{p_i} \log \frac{\hat{p}_i}{p_i} \leq \frac{\hat{p}_i}{p_i} \left( \frac{\hat{p}_i}{p_i}-1  \right) \ .
			\]
			Rearranging and multiplying throughout by $p_i$ yields
			\begin{align*}
			-\frac{(\hat{p}_i-p_i)^2}{p_i} \leq \Delta_i  \leq 0.
			\end{align*}
			Hence,
			\begin{align*}
			\mathbb{E}[\Delta_i^2] & \leq \mathbb{E}\left[ \left(\frac{(\hat{p}_i-p_i)^2}{p_i}\right)^2  \right]. 
			\end{align*}
			Now, using Lemma \ref{lemma.exactquadratic} and the relaxation in Equation \ref{eqn.XiSqEasy}, we obtain
			\begin{align*}
			\mathbb{E}[\Delta_i^2] & \leq 2\left( \frac{3}{n^2}+\frac{(1-p_i)^2}{n^2}\right) \leq \frac{8}{n^2}.
			\end{align*}
			Plugging this bound into the inequality $V_2 \leq 3\sum_{p_i \geq 1/n} \mathbb{E} \left[ \Delta_i^2 \right]$ we obtain
			\begin{align}\label{eqn.V2Bounded}
			V_2 \leq \frac{c k }{n^2}. 
			\end{align}
			
			\vspace{0.2cm}
			\textbf{Bounding $V_3$:}
			\\To upper bound $V_3$, we again split into a sum of monotone functions and use negative association of multinomial random variables. Note that
			\begin{align*}
			V_3 & = \Var \left( \sum_{p_i <1/n} \left( \hat{p}_i \log \frac{\hat{p}_i}{p_i}   \right) + \sum_{p_i <1/n}  \left(p_i - \hat{p}_i\right) \right) \leq 2\left( \Var \left( \sum_{p_i <1/n} \left( \hat{p}_i \log \frac{\hat{p}_i}{p_i}   \right)\right)  + \Var \left(\sum_{p_i <1/n}  \left(p_i - \hat{p}_i\right) \right)  \right).
			\end{align*}
			$p_i-\hat{p}_i$ is monotone decreasing in $\hat{p}_i$ and hence we can upper bound the variance of the sum by the sum of the variance using negative association. To do the same for the $\hat{p}_i \log \frac{\hat{p}_i}{p_i}$ term, note that
			\[
			\frac{\partial}{\partial x} \left( x\log \frac{x}{p} \right) = \log \frac{x}{p} + 1 \ ,
			\]
			which is positive when $x>p$. Since $p_i < 1/n$, whenever $\hat{p}_i \geq \frac{1}{n}$, we have $\hat{p}_i \log \frac{\hat{p}_i}{p_i}$ is increasing in $\hat{p}_i$. It is also clear that $0 = 0\log \frac{0}{p} < \frac{1}{n} \log \frac{1}{np}$. Since the variance computation only cares about the values of $\hat{p}_i \log \frac{\hat{p}_i}{p_i}$ at $\hat{p}_i=\frac{l}{n}$ for $l \in \{0,1,...,n\}$ and we have shown that the function is monotone increasing when restricted to this set of inputs, we can again use negative association to upper bound the variance of the sum by the sum of the variance. This gives us		\begin{align}\label{eqn.BoundingV3}
			V_3 \leq 2 \left( \sum_{p_i <1/n} \Var \left( \hat{p}_i \log \frac{\hat{p}_i}{p_i}  \right) + \sum_{p_i <1/n} \Var \left( p_i - \hat{p}_i\right) \right). 
			\end{align}

			Note that $\Var \left( p_i - \hat{p}_i\right) = \Var (\hat{p}_i ) = \tfrac{p_i (1-p_i )}{n} < 1/n^2$ when $p_i <1/n$. On the other hand
			\begin{align*}
			\mathbb{E} \left[ \hat{p}_i^2 \log^2 \frac{\hat{p}_i}{p_i} \right] & = \sum_{j=1}^n {n \choose j} p_i^j (1-p_i)^{n-j} \frac{j^2}{n^2} \log^2 \frac{j}{np_i} \\
			& \leq \frac{1}{n^2} \sum_{j=1}^n \left( \frac{e}{j} \right)^j j^2 \log^2 j \\
			& \leq \frac{c}{n^2} \ ,
			\end{align*}
			using that ${n \choose j} \leq (en/j)^j$ and $np_i <1$, and that $\max \limits_{x \in (0,1)} \left(x \log^2 \frac{1}{x} \right) < 1$.
			\\ Using these bounds and the fact that $\Var \left( \hat{p}_i \log \frac{\hat{p}_i}{p_i}  \right) \leq \mathbb{E} \left[ \hat{p}_i^2 \log^2 \frac{\hat{p}_i}{p_i} \right]$, and that the summation in Equation \ref{eqn.BoundingV3} has at most $k$ terms, we conclude that
			\begin{align}\label{eqn.V3Bounded}
			V_3 \leq \frac{c k }{n^2}. 
			\end{align}

			Having shown that for some constant $c$ we have $V_i \leq c \frac{k}{n^2}$ for $i=1,2,3$ (Equations \ref{eqn.V1Bounded}, \ref{eqn.V2Bounded}, \ref{eqn.V3Bounded}), we now use Equation \ref{eqn.V1V2V3decomp} to complete the proof of Equation \ref{eqn.VarleqPart2} which finishes Subsubsection \ref{subsubsec.Varleqk}.
		\end{subsubsection}

		Combining the results of Subsubsections \ref{subsubsec.Varleqlogk} and \ref{subsubsec.Varleqk} immediately completes the proof of Equation \ref{eqn.Varleqk/nsq} in Theorem \ref{thm:VarIsk/nsq} which finishes Subsection \ref{subsec.Varleq}.	
	\end{subsection}
	
	\begin{subsection}{Showing $2(k-1) \leq \lim\limits_{n \rightarrow \infty} 4n^2 \Var\left[D(\hat{P}_{n,k}\|P)\right]$}\label{subsec.Vargeq}
		The idea behind Equation \ref{eqn.Vargeqk/nsq} in Theorem \ref{thm:VarIsk/nsq} is a use of the well known delta method in statistics. Here we need a second order multivariate version of the delta method. One can consult \cite[Chapter 3]{van2000asymptotic} for an introduction to the Delta method. The idea behind the delta method is that to understand the asymptotic distribution of a functional of a random variable whose asymptotic distribution we understand, one can use a Taylor approximation of the functional to the desired precision. We want to show the following.
		\\ For a fixed $k$, for any $ P \in \mathcal{M}_k$, asymptotically as $n$ goes to infinity,
		\begin{align*}
		2nD(\hat{P}_{n,k}\|P) \xrightarrow{\mathcal{D}} \chi^2_{k-1},
		\\ 2(k-1) \leq \lim\limits_{n \rightarrow \infty} 4n^2 \Var\left[D(\hat{P}_{n,k}\|P)\right].
		\end{align*}
		Here $\chi^2_{k-1}$ is the chi-square distribution with $k-1$ degrees of freedom and hence has variance $2(k-1)$. If we have \[2nD(\hat{P}_{n,k}\|P) \xrightarrow{\mathcal{D}} \chi^2_{k-1},\] then using Slutsky's Theorem \cite[2.8]{van2000asymptotic} and the Continuous Mapping theorem \cite[2.3]{van2000asymptotic}, we obtain \[\left(2nD(\hat{P}_{n,k}\|P)-\mathbb{E}[2nD(\hat{P}_{n,k}\|P)]\right)^2 \xrightarrow{\mathcal{D}} \left(\chi^2_{k-1} - \mathbb{E}[\chi^2_{k-1}]\right)^2.\] Keeping in mind that \[\Var\left(\chi^2_{k-1}\right) = 2(k-1),\] the asymptotic variance lower bound in Theorem \ref{thm:VarIsk/nsq} follows from this by applying Fatou's Lemma \cite[18.13]{carothers2000real} on the sequence of positive random variables \[\left(2nD(\hat{P}_{n,k}\|P)-\mathbb{E}[2nD(\hat{P}_{n,k}\|P)]\right)^2.\]
		\\ Hence we now only need to show that $2nD(\hat{P}_{n,k}\|P) \xrightarrow{\mathcal{D}} \chi^2_{k-1}$. \vspace{0.5cm}
		\\ First, let us compute a Taylor series approximation of $D(Q\|P)$.
		To do that, we need to parametrize $P$ with $k-1$ variables. Hence, let us consider \[P=(p_1,p_2,...,p_{k-1},1-\sum_{j=1}^{k-1}p_j)\] with parameters $p_1,...,p_{k-1}$. Define \[p_k = 1-\sum_{j=1}^{k-1}p_j,\] and given a $(k \times 1)$ vector $V$, let $V_{-k}$ denote the $(k-1 \times 1)$ vector obtained from $V$ by removing the $k^{th}$ component. We need the first and second order derivatives of \[D(Q\|P) = \sum_{i=1}^{k-1} q_i\log \frac{q_i}{p_i} + q_k\log \frac{q_k}{1-\sum_{j=1}^{k-1}p_j}\] at $P$. Here $Q = (q_1,q_2,...,q_k)$.
		Let	\[G \triangleq \nabla_{Q_{-k}} D(Q\|P)|_{Q_{-k}=P_{-k}}\] be a $(k-1 \times 1)$  vector and denote its $i^{th}$ component by $\nabla_{Q_{-k}} D(Q\|P)|_{Q_{-k}=P_{-k}}(i)$
		Let \[H \triangleq \nabla_{Q_{-k}}^2 D(Q\|P)|_{Q_{-k}=P_{-k}}\] be a $(k-1 \times k-1)$  matrix and denote its $(i,j)^{th}$ entry by $\nabla_{Q_{-k}}^2 D(Q\|P)|_{Q_{-k}=P_{-k}}(i,j)$.
		For notation, let $\vec{0}$ be the $(k-1 \times 1)$ all-zeros vector, $\vec{1}$ be the $(k-1 \times 1)$ all-ones vector, \[P_{-k}^{-1} \triangleq (\frac{1}{p_1},\frac{1}{p_2},...,\frac{1}{p_{k-1}}),\] and $\mathsf{diag}(P_{-k}^{-1})$ be the $(k-1 \times k-1)$ matrix with $P_{-k}^{-1}$ on its diagonal, and $\mathcal{I}_{k-1}$ be the $(k-1 \times k-1)$ identity matrix.
		\\ We can calculate that
		\begin{align*}
		G(i) = \nabla_{Q_{-k}} D(Q\|P)|_{Q_{-k}=P_{-k}}(i) = \log \frac{q_i}{p_i}-\log\frac{1-\sum_{j=1}^{k-1}q_j}{1-\sum_{j=1}^{k-1}p_j} = \log \frac{p_i}{p_i}-\log\frac{1-\sum_{j=1}^{k-1}p_j}{1-\sum_{j=1}^{k-1}p_j} = 0.
		\end{align*}
		So, \[G = \nabla_{Q_{-k}} D(Q\|P)|_{Q_{-k}=P_{-k}} = \vec{0}.\]
		\begin{align*}
		H(i,j)=\nabla_{Q_{-k}}^2 D(Q\|P)|_{Q_{-k}=P_{-k}}(i,j) = \begin{cases}
		\frac{1}{q_k} = \frac{1}{p_k}& i \neq j \\
		\frac{1}{q_i}+\frac{1}{q_k} = \frac{1}{p_i}+\frac{1}{p_k}& i = j
		\end{cases}.
		\end{align*}
		So, \[H = \nabla_{Q_{-k}}^2 D(Q\|P)|_{Q_{-k}=P_{-k}} = \frac{1}{p_k}\vec{1}\vec{1}^T + \mathsf{diag}(P_{-k}^{-1}).\]
		\\ For the rest of Subsection \ref{subsec.Vargeq}, denote the empirical distribution with $n$ draws $\hat{P}_{n,k}$ by $\hat{P}$, suppressing the subscripts which are obvious by context for this subsection.
		\begin{lemma}\label{lem.MultiCLT}
			\cite[2.18]{van2000asymptotic} By the multivariate central limit theorem, we have that 
			\begin{align*}
			\sqrt{n}\left(\hat{P}_{-k}-P_{-k}\right) \xrightarrow{\mathcal{D}} \mathcal{N}\left(\vec{0},\Sigma\right),
			\end{align*}
			where $\Sigma$ is a $(k-1 \times k-1)$ matrix with
			\begin{align*}
			\Sigma(i,j) = \begin{cases}
			-p_i p_j & i \neq j \\
			p_i(1-p_i) & i = j
			\end{cases}.
			\end{align*}
		\end{lemma}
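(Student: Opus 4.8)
The plan is to realize $\hat{P}_{-k}$ as a normalized sum of i.i.d.\ bounded random vectors and then invoke the standard multivariate central limit theorem, so that the stated $\Sigma$ emerges as the covariance of a single sample.

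First I would encode each sample as a truncated one-hot vector. For $\ell \in \{1,\dots,n\}$, define the $(k-1 \times 1)$ random vector $Y_\ell$ by $Y_\ell(i) = \mathbf{1}\{X_\ell = i\}$ for $i \in \{1,\dots,k-1\}$. Since the $X_\ell$ are i.i.d., the $Y_\ell$ are i.i.d.\ as well, each is bounded and therefore has finite second moment, and by construction $\hat{P}_{-k} = \frac{1}{n}\sum_{\ell=1}^n Y_\ell$.

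Next I would identify the mean and covariance of a single $Y_1$. Each coordinate $Y_1(i)$ is a Bernoulli$(p_i)$ indicator, so $\mathbb{E}[Y_1(i)] = p_i$ and hence $\mathbb{E}[Y_1] = P_{-k}$. For the covariance, when $i=j$ the indicator is $0/1$-valued, so $\mathbb{E}[Y_1(i)^2] = p_i$ and the variance is $p_i - p_i^2 = p_i(1-p_i)$; when $i \neq j$ the events $\{X_1=i\}$ and $\{X_1=j\}$ are disjoint, so $\mathbb{E}[Y_1(i)Y_1(j)] = 0$ and the covariance is $-p_ip_j$. This is exactly the matrix $\Sigma$ in the statement.

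Finally I would apply the multivariate CLT \cite[2.18]{van2000asymptotic}: for i.i.d.\ vectors with finite second moments, $\sqrt{n}\left(\frac{1}{n}\sum_{\ell=1}^n Y_\ell - \mathbb{E}[Y_1]\right)$ converges in distribution to $\mathcal{N}(\vec{0},\Sigma)$, which is the claim since $\frac{1}{n}\sum_\ell Y_\ell = \hat{P}_{-k}$ and $\mathbb{E}[Y_1] = P_{-k}$. There is no substantive obstacle here; the only point worth flagging is the choice to drop the $k$-th coordinate. The full $k$-dimensional indicator vector has coordinates summing to $1$ almost surely, so its covariance is singular and the limit would be supported on a hyperplane. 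Working in $k-1$ coordinates removes this degeneracy and yields the (generically invertible) $\Sigma$, which is precisely why the surrounding argument parametrizes $P$ by $p_1,\dots,p_{k-1}$ and lets $H$ act on $\mathbb{R}^{k-1}$.
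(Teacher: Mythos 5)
Your proof is correct. The paper does not prove this lemma at all --- it simply cites it as \cite[2.18]{van2000asymptotic} --- and your argument (writing $\hat{P}_{-k}$ as the average of i.i.d.\ truncated one-hot indicator vectors, computing the mean $P_{-k}$ and covariance $\Sigma$ of a single such vector, and invoking the standard multivariate CLT) is precisely the standard derivation behind that citation; your remark about dropping the $k$-th coordinate to keep $\Sigma$ nonsingular also correctly identifies why the surrounding argument works in $\mathbb{R}^{k-1}$, since Observation \ref{obs.SigmaHisI} and Lemma \ref{lem.QuadNormalChiSq} rely on that nondegeneracy.
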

		\begin{observation}\label{obs.SigmaHisI}
			\begin{align*}
			\Sigma H = \mathcal{I}_{k-1}
			\end{align*}
		\end{observation}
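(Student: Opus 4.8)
The plan is to verify the identity $\Sigma H = \mathcal{I}_{k-1}$ by a direct matrix computation, exploiting the fact that both $\Sigma$ and $H$ have the same structural form: a diagonal matrix plus a rank-one correction. First I would rewrite the covariance matrix $\Sigma$ from Lemma \ref{lem.MultiCLT} in the compact form
\[
\Sigma = \mathsf{diag}(P_{-k}) - P_{-k}P_{-k}^T,
\]
which one checks immediately reproduces the stated entries ($p_i(1-p_i)$ on the diagonal, $-p_ip_j$ off it). Together with the already-computed expression $H = \frac{1}{p_k}\vec{1}\vec{1}^T + \mathsf{diag}(P_{-k}^{-1})$, this reduces the whole claim to multiplying out a product of two (diagonal $+$ rank-one) matrices.

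Next I would expand $\Sigma H$ into four terms and simplify each using two contractions: $P_{-k}^T\vec{1} = \sum_{i=1}^{k-1}p_i = 1-p_k$ and $P_{-k}^T\mathsf{diag}(P_{-k}^{-1}) = \vec{1}^T$. The diagonal-times-diagonal term $\mathsf{diag}(P_{-k})\mathsf{diag}(P_{-k}^{-1})$ collapses to exactly $\mathcal{I}_{k-1}$, which is the identity we want. The remaining three terms are each a scalar multiple of the rank-one outer product $P_{-k}\vec{1}^T$, with coefficients $\frac{1}{p_k}$, $-\frac{1-p_k}{p_k}$, and $-1$ respectively; the key point is that these coefficients sum to $\frac{1}{p_k}-\frac{1-p_k}{p_k}-1 = 0$, so the rank-one contributions cancel and only $\mathcal{I}_{k-1}$ survives.

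There is essentially no obstacle here: the statement is an algebraic identity, and the only thing to be careful about is the bookkeeping of the rank-one outer products and the role of the normalization $\sum_{i=1}^{k-1}p_i = 1-p_k$ (the cancellation fails without it). As a sanity check that sidesteps the matrix manipulations entirely, one could instead verify the $(i,\ell)$ entry directly: writing $\Sigma(i,j) = p_i\delta_{ij}-p_ip_j$ and $H(j,\ell)=\frac{1}{p_k}+\frac{1}{p_j}\delta_{j\ell}$, the contribution of the $\frac{1}{p_k}$ part of $H$ is $\frac{1}{p_k}\left(p_i - p_i(1-p_k)\right) = p_i$, while the contribution of the diagonal part is $\delta_{i\ell}-p_i$, and these sum to $\delta_{i\ell}$, giving $\Sigma H = \mathcal{I}_{k-1}$.
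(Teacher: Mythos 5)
Your proof is correct: both the matrix-level expansion of $\left(\mathsf{diag}(P_{-k}) - P_{-k}P_{-k}^T\right)\left(\frac{1}{p_k}\vec{1}\vec{1}^T + \mathsf{diag}(P_{-k}^{-1})\right)$ with the rank-one coefficients cancelling via $\frac{1}{p_k}-\frac{1-p_k}{p_k}-1=0$, and the entrywise sanity check, are valid. The paper states this Observation without any proof, treating it as an immediate computation, so your argument simply supplies the omitted verification in the natural way.
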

		Now we can write our Taylor Series expansion as
		\begin{align*}
		nD(\hat{P}_{n,k}\|P) = n\underbrace{D(P\|P)}_{0} + \underbrace{nG^T\left(\hat{P}_{-k}-P_{-k}\right)}_{0} + \underbrace{\frac{1}{2}\sqrt{n}\left(\hat{P}_{-k}-P_{-k}\right)^TH\sqrt{n}\left(\hat{P}_{-k}-P_{-k}\right)}_{:=\text{Quadratic term}} + \text{ Higher order terms}.
		\end{align*}
		Let $\mathcal{Z}$ be a random variable that is distributed as $\mathcal{N}\left(\vec{0},\Sigma\right)$. Because of the fact that quadratic (and cubic and higher order) maps are continuous, we can use the Continuous Mapping theorem along with Lemma \ref{lem.MultiCLT} to get
		\begin{align*}
		\text{Quadratic term} \xrightarrow{\mathcal{D}} \frac{1}{2}\mathcal{Z}^TH\mathcal{Z}.
		\end{align*}
		Because the higher order terms all contain cubics or higher powers of $\left(\hat{P}_{-k}-P_{-k}\right)$, each of which are only pre-multiplied by an $n$, we again have (because of Lemma \ref{lem.MultiCLT}) , by the Continuous mapping theorem -
		\begin{align*}
		\text{Higher order terms} \xrightarrow{\mathcal{D}} 0.
		\end{align*}
		Now, using Slutsky's Theorem, we get
		\begin{align*}
		nD(\hat{P}_{n,k}\|P) \xrightarrow{\mathcal{D}} \frac{1}{2}\mathcal{Z}^TH\mathcal{Z} + 0 = \frac{1}{2}\mathcal{Z}^TH\mathcal{Z}.
		\end{align*}
		So all that remains to be done to complete the proof of Equation \ref{eqn.Vargeqk/nsq} of Theorem \ref{thm:VarIsk/nsq} is to show that $\mathcal{Z}^TH\mathcal{Z}$ is distributed as $\chi^2_{k-1}$. Quadratic forms of multivariate Gaussian random variables are well studied and we have the following result from \cite[Chapter 29 (Quadratic Forms in Normal Variables)]{johnson1970distributions} which we state here for the reader's convenience.
		\begin{lemma}\label{lem.QuadNormalChiSq}
			If $\mathcal{Z}$ is a $m \times 1$ multivariate Gaussian Random variable with mean $\vec{0}$ and nonsingular covariance matrix $\mathcal{V}$, then the quadratic form $\mathcal{Z}^TA\mathcal{Z}$ is distributed as
			\begin{align*}
			\sum \limits_{j=1}^m \lambda_j W_j^2,
			\end{align*}
			where $\lambda_1 \geq \lambda_2 \geq ... \geq \lambda_m$ are the eigenvalues of $\mathcal{V}A$ and $W_j$'s are i.i.d. $\mathcal{N}\left(0,1\right)$ random variables.
		\end{lemma}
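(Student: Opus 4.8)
The plan is to reduce the quadratic form to a weighted sum of independent squared standard normals by a whitening transformation followed by an orthogonal diagonalization. First, since $\mathcal{Z}^T A \mathcal{Z} = \mathcal{Z}^T \tfrac{1}{2}(A + A^T) \mathcal{Z}$, I would assume without loss of generality that $A$ is symmetric. Because $\mathcal{V}$ is a nonsingular covariance matrix it is symmetric positive definite, so it admits a symmetric positive definite square root $\mathcal{V}^{1/2}$, and I would write $\mathcal{Z} = \mathcal{V}^{1/2} \mathcal{U}$ where $\mathcal{U} \sim \mathcal{N}(\vec{0}, \mathcal{I}_m)$ is standard multivariate Gaussian (this is the defining representation of $\mathcal{N}(\vec{0},\mathcal{V})$). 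Substituting gives $\mathcal{Z}^T A \mathcal{Z} = \mathcal{U}^T B \mathcal{U}$ with $B \triangleq \mathcal{V}^{1/2} A \mathcal{V}^{1/2}$, which is again symmetric.

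Second, I would diagonalize $B$. Being real symmetric, $B = O \Lambda O^T$ for an orthogonal matrix $O$ and a diagonal matrix $\Lambda = \mathsf{diag}(\lambda_1, \ldots, \lambda_m)$ of real eigenvalues, which I order decreasingly to match the statement. The key observation is that these $\lambda_j$ are exactly the eigenvalues of $\mathcal{V} A$: indeed $\mathcal{V} A = \mathcal{V}^{1/2} B \mathcal{V}^{-1/2}$, so $\mathcal{V} A$ and $B$ are similar and share the same spectrum. Setting $\mathcal{W} \triangleq O^T \mathcal{U}$ and using that an orthogonal map sends $\mathcal{N}(\vec{0}, \mathcal{I}_m)$ to $\mathcal{N}(\vec{0}, O^T O) = \mathcal{N}(\vec{0}, \mathcal{I}_m)$, the coordinates $W_1, \ldots, W_m$ of $\mathcal{W}$ are i.i.d. $\mathcal{N}(0,1)$.

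Finally, combining the two steps yields $\mathcal{Z}^T A \mathcal{Z} = \mathcal{U}^T O \Lambda O^T \mathcal{U} = \mathcal{W}^T \Lambda \mathcal{W} = \sum_{j=1}^m \lambda_j W_j^2$, which is the claimed representation. The only delicate point, and the one I expect to be the main obstacle, is spectral: $\mathcal{V} A$ is in general not symmetric, so its eigenvalues are not obviously real and it is not obviously diagonalizable. The similarity $\mathcal{V} A = \mathcal{V}^{1/2} B \mathcal{V}^{-1/2}$ with $B$ symmetric is precisely what resolves this, guaranteeing a real, orthogonally diagonalizable spectrum and letting the $\lambda_j$ in the statement be computed from either $\mathcal{V} A$ or $B$. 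In our application $A = H$ and $\mathcal{V} = \Sigma$ with $\Sigma H = \mathcal{I}_{k-1}$ by Observation \ref{obs.SigmaHisI}, so every $\lambda_j = 1$ and the sum collapses to $\sum_{j=1}^{k-1} W_j^2 \sim \chi^2_{k-1}$, as needed.
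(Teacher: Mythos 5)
Your argument is correct, but note that the paper does not actually prove Lemma~\ref{lem.QuadNormalChiSq}: it imports it verbatim from \cite[Chapter 29]{johnson1970distributions} ``for the reader's convenience,'' and the only thing the paper verifies itself is the application, namely that $\Sigma H = \mathcal{I}_{k-1}$ (Observation~\ref{obs.SigmaHisI}) forces all $\lambda_j = 1$ and hence the $\chi^2_{k-1}$ limit. So your whitening-plus-spectral-theorem proof is genuinely different content: where the paper buys brevity and a pointer to the general theory of Gaussian quadratic forms, you buy self-containedness. The core of your argument is sound and standard: writing $\mathcal{Z} = \mathcal{V}^{1/2}\mathcal{U}$ with $\mathcal{U}\sim\mathcal{N}(\vec{0},\mathcal{I}_m)$, orthogonally diagonalizing the symmetric matrix $B = \mathcal{V}^{1/2}A\mathcal{V}^{1/2} = O\Lambda O^T$, and using rotation invariance of the standard Gaussian gives $\mathcal{Z}^TA\mathcal{Z} = \sum_{j=1}^m \lambda_j W_j^2$, while the similarity $\mathcal{V}A = \mathcal{V}^{1/2}B\,\mathcal{V}^{-1/2}$ identifies the $\lambda_j$ with the spectrum of $\mathcal{V}A$ and settles the realness and diagonalizability worry you correctly flag.

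One precision issue in your first step: the reduction ``assume WLOG that $A$ is symmetric'' is not quite without loss of generality \emph{for the statement as phrased}. Replacing $A$ by $\tfrac{1}{2}(A+A^T)$ leaves the quadratic form unchanged, but it can change the spectrum of $\mathcal{V}A$: with $m=2$, $\mathcal{V}=\mathcal{I}_2$, and $A$ the matrix whose only nonzero entry is $A_{12}=1$, the matrix $\mathcal{V}A$ is nilpotent with eigenvalues $0,0$, yet $\mathcal{Z}^TA\mathcal{Z} = Z_1Z_2$ is distributed as $\tfrac{1}{2}W_1^2 - \tfrac{1}{2}W_2^2$, matching the eigenvalues $\pm\tfrac{1}{2}$ of the symmetrization rather than the eigenvalues of $\mathcal{V}A$. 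So the lemma is true as stated only under the (standard, and surely intended) convention that $A$ is symmetric; your proof should either assume this outright or state the conclusion with the eigenvalues of $\mathcal{V}\cdot\tfrac{1}{2}(A+A^T)$. This has no effect on the paper's use of the lemma, since there $A = H$ is a Hessian and hence symmetric.
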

		Applying Lemma \ref{lem.QuadNormalChiSq} to our quantity of interest $\mathcal{Z}^TH\mathcal{Z}$, coupled with Observation \ref{obs.SigmaHisI} (which says that all the $k-1$ eigenvalues of $\Sigma H$ are $1$) says that $\mathcal{Z}^TH\mathcal{Z}$ is distributed as $\sum \limits_{j=1}^{k-1} W_j^2$, which is exactly the definition of a $\chi^2_{k-1}$ random variable. This concludes our proof of the fact that $2nD(\hat{P}_{n,k}\|P) \xrightarrow{\mathcal{D}} \chi^2_{k-1}$, and concludes the proof of Theorem \ref{thm:VarIsk/nsq}.
	\end{subsection}
	
	\begin{subsection}{Variance lower bounds in the Poissonized model}\label{subsec.Poisson}
		In this section we prove Theorem \ref{thm.Poi} which demonstrates why it was necessary to handle the dependencies and cancellations in Theorem \ref{thm:VarIsk/nsq} as we did rather than simply use the Poissonization technique to work with independent random variables which are much easier to work with.
		We have $\hat{P}_{n,k}^{\mathsf{Poi}} = \left(\hat{p}_1^{\mathsf{Poi}},\hat{p}_2^{\mathsf{Poi}},...,\hat{p}_k^{\mathsf{Poi}}\right)$ where each $\hat{p}_i^{\mathsf{Poi}}$ is independently distributed as $\frac{\mathsf{Poi}(np_i)}{n}$, where $\mathsf{Poi}(\lambda)$ is a Poisson random variable with parameter $\lambda$. We want to show that
		\begin{align*}
		\sqrt{n}D(\hat{P}_{n,k}^{\mathsf{Poi}}\|P) \xrightarrow{\mathcal{D}} \mathcal{N}\left(0,1\right),
		\\ 1 \leq \lim\limits_{n \rightarrow \infty} n \Var\left[D(\hat{P}_{n,k}^{\mathsf{Poi}}\|P)\right]. \nonumber
		\end{align*}
		Just like in Subsection \ref{subsec.Vargeq}, $1 \leq \lim\limits_{n \rightarrow \infty} n \Var\left[D(\hat{P}_{n,k}^{\mathsf{Poi}}\|P)\right]$ follows from $\sqrt{n}D(\hat{P}_{n,k}^{\mathsf{Poi}}\|P) \xrightarrow{\mathcal{D}} \mathcal{N}\left(0,1\right)$ using a combination of Slutsky's Theorem, the Continuous mapping theorem, and Fatou's Lemma. Hence we only need to show \[\sqrt{n}D(\hat{P}_{n,k}^{\mathsf{Poi}}\|P) \xrightarrow{\mathcal{D}} \mathcal{N}\left(0,1\right).\]
		The key difference between the lower bound in this model and in the multinomial model is that for the Poissonized model, we do not have a constraint of the form $\sum_{i=1}^{k} \hat{p}_i^{\mathsf{Poi}} = 1$, which we did in the multinomial model in Subsection \ref{subsec.Vargeq}. This means we use $k$ parameters in our Taylor series expansion for the delta method, and most importantly, the first order derivative does NOT vanish.
		We first make the following two observations -
		\begin{observation}\label{obs.PoiToNorm}
			As $n \rightarrow \infty$, \[\frac{\mathsf{Poi}(np_i)-np_i}{\sqrt{np_i}} \xrightarrow{\mathcal{D}} \mathcal{N}\left(0,1\right).\]
		\end{observation}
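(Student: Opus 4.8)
The plan is to recognize this as the classical Gaussian approximation to the Poisson distribution in the large-parameter limit, and to prove it by exploiting the infinite divisibility of the Poisson law together with the ordinary central limit theorem. Fix the index $i$; since $P \in \text{int}(\mathcal{M}_k)$ we have $p_i > 0$, so the Poisson parameter $\lambda = np_i$ tends to infinity as $n \to \infty$, which is exactly the regime in which the Gaussian limit is available.

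First I would use the fact that $n$ is a positive integer to decompose the Poisson variable as a sum of i.i.d. pieces. Concretely, $\mathsf{Poi}(np_i)$ is equal in distribution to $\sum_{j=1}^n Z_j$, where $Z_1,\ldots,Z_n$ are i.i.d.\ $\mathsf{Poi}(p_i)$ random variables, each having mean and variance equal to $p_i$ (this is just the additivity statement that a sum of independent Poissons is Poisson with the summed parameter). I would then rewrite the quantity of interest as
\[
\frac{\mathsf{Poi}(np_i) - np_i}{\sqrt{np_i}} \stackrel{d}{=} \frac{1}{\sqrt{p_i}} \cdot \frac{\sum_{j=1}^n (Z_j - p_i)}{\sqrt{n}},
\]
which isolates a standard centered-and-scaled i.i.d.\ sum.

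Next, applying the Lindeberg--L\'evy central limit theorem to the centered summands $Z_j - p_i$ (which have finite variance $p_i$, indeed all moments finite) gives $\frac{1}{\sqrt{n}}\sum_{j=1}^n (Z_j - p_i) \xrightarrow{\mathcal{D}} \mathcal{N}(0,p_i)$, and dividing by the constant $\sqrt{p_i}$ yields the desired $\mathcal{N}(0,1)$ limit by the scaling property of Gaussians (equivalently, the continuous mapping theorem). As an alternative that does not even require $n$ to be an integer, one may argue directly through characteristic functions: writing $Y = (\mathsf{Poi}(\lambda) - \lambda)/\sqrt{\lambda}$ and using $\mathbb{E}[e^{it\,\mathsf{Poi}(\lambda)}] = \exp(\lambda(e^{it}-1))$, a second-order Taylor expansion of the exponent in powers of $t/\sqrt{\lambda}$ gives $\mathbb{E}[e^{itY}] = \exp(-t^2/2 + O(\lambda^{-1/2})) \to e^{-t^2/2}$, so L\'evy's continuity theorem delivers the same conclusion. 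There is no genuine obstacle here, as this is a textbook Poisson CLT; the only point worth flagging is the normalization, since the Poisson variance equals its mean $np_i$, dividing by $\sqrt{np_i}$ (rather than $\sqrt{n}$) is precisely what produces unit asymptotic variance, and this is the scaling that later forces $\sqrt{n}\,D(\hat{P}_{n,k}^{\mathsf{Poi}}\|P)$ to have a nonvanishing limiting variance.
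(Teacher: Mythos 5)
Your proposal is correct and its main argument (writing $\mathsf{Poi}(np_i)$ as a sum of $n$ i.i.d.\ $\mathsf{Poi}(p_i)$ variables and applying the classical CLT) is exactly the proof given in the paper. The additional characteristic-function route you sketch is a fine alternative but not needed; the paper stops at the i.i.d.\ decomposition plus CLT, just as your primary argument does.
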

		\begin{proof}
			We can view $\mathsf{Poi}(np_i)$ as a sum of $n$ i.i.d. $\mathsf{Poi}(p_i)$ random variables because of the properties of Poisson random variables. The statement then follows from a simple application of the Central Limit Theorem.
		\end{proof}
		\begin{observation}\label{obs.PoiDerivative}
			$\nabla_{Q} D(Q\|P)(i) = 1+\log \frac{q_i}{p_i} \Rightarrow \nabla_{Q} D(Q\|P)|_{Q=P} = \vec{1} $ 
		\end{observation}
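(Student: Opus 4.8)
The plan is to compute the gradient of $D(Q\|P)$ directly, treating $Q = (q_1,\ldots,q_k)$ as $k$ \emph{unconstrained} coordinates. This is exactly the feature that distinguishes the Poissonized model from the multinomial analysis of Subsection \ref{subsec.Vargeq}: there we eliminated one coordinate via $q_k = 1 - \sum_{j<k} q_j$ and differentiated in $k-1$ free variables, whereas here all $k$ coordinates vary freely.

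First I would write $D(Q\|P) = \sum_{j=1}^{k} q_j\left(\log q_j - \log p_j\right)$ and differentiate with respect to a single $q_i$. Only the $j = i$ summand depends on $q_i$, and the product rule applied to $q_i \log q_i$ gives $\log q_i + q_i \cdot \tfrac{1}{q_i} = \log q_i + 1$, while the term $-q_i \log p_i$ contributes $-\log p_i$ (recall $\log$ is the natural logarithm and $P$ is fixed). Summing, the $i$-th partial derivative is $\log q_i - \log p_i + 1 = 1 + \log \tfrac{q_i}{p_i}$, which is the claimed $i$-th component of $\nabla_Q D(Q\|P)$. Substituting $Q = P$, so that $q_i = p_i$ and $\log \tfrac{q_i}{p_i} = \log 1 = 0$ for every $i$, collapses each component to $1$, yielding $\nabla_Q D(Q\|P)|_{Q=P} = \vec{1}$.

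There is no genuine obstacle: the argument is a one-line differentiation followed by evaluation. The point worth stressing is conceptual rather than computational. In the multinomial case the constraint forced the first-order term $G$ to vanish (it was $\vec{0}$), so the leading behavior of $D(\hat{P}_{n,k}\|P)$ was quadratic and hence of order $1/n$; here the additive $+1$ survives, the linear term does \emph{not} vanish at $Q=P$, and the dominant contribution becomes $\vec{1}^T(\hat{P}_{n,k}^{\mathsf{Poi}} - P) = \sum_i (\hat{p}_i^{\mathsf{Poi}} - p_i)$. Combined with Observation \ref{obs.PoiToNorm}, this non-vanishing linear term is precisely what drives the $\Theta(1/\sqrt{n})$ Gaussian fluctuation in Theorem \ref{thm.Poi} rather than the $\chi^2$ (order $1/n$) scaling obtained in the multinomial model.
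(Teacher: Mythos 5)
Your proposal is correct and follows exactly the computation the paper intends: differentiate $D(Q\|P)=\sum_j q_j(\log q_j-\log p_j)$ coordinate-wise with all $k$ coordinates unconstrained, obtaining $1+\log\tfrac{q_i}{p_i}$ via the product rule, and evaluate at $Q=P$ to get $\vec{1}$. The paper states this observation without a separate proof precisely because it is this one-line differentiation, and your added remark about the non-vanishing linear term driving the $\Theta(1/\sqrt{n})$ Gaussian scaling matches the paper's own discussion in Subsection~\ref{subsec.Poisson}.
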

	
		Armed with these two observations, we write the Taylor series expansion -
		\[\sqrt{n}D(\hat{P}_{n,k}^{\mathsf{Poi}}\|P) = \sqrt{n}\underbrace{D(P||P)}_{0} + \sqrt{n} \left(\nabla_{Q} D(Q\|P)|_{Q=P}\right)^T \left(\hat{P}_{n,k}^{\mathsf{Poi}}-P\right) + \text{Second and higher order terms}.\]
		Using Observation \ref{obs.PoiToNorm} and the fact that 'Second and higher order terms' have quadratic or higher powers in $\left(\hat{P}_{n,k}^{\mathsf{Poi}}-P\right)$ only premultiplied by a $\sqrt{n}$, we obtain that $\text{Second and higher order terms} \xrightarrow{\mathcal{D}} 0$. Using Observations \ref{obs.PoiToNorm} and \ref{obs.PoiDerivative}, we get \[\sqrt{n} \left(\nabla_{Q} D(Q\|P)|_{Q=P}\right)^T \left(\hat{P}_{n,k}^{\mathsf{Poi}}-P\right) \xrightarrow{\mathcal{D}} \mathcal{N}\left(0,1\right).\] Using Slutsky's Theorem, this completes the proof of \[\sqrt{n}D(\hat{P}_{n,k}^{\mathsf{Poi}}\|P) \xrightarrow{\mathcal{D}} \mathcal{N}\left(0,1\right),\] which completes the proof of Theorem \ref{thm.Poi} and ends Subsection \ref{subsec.Poisson}.
	\end{subsection}
	
\end{section}
	
	\begin{section}{Proof of Theorem~\ref{thm:main}}\label{sec.maintheoremproof}
	
	\begin{subsection}{Using conditional probabilities and chain rule to reduce the size $k$ problem to a size $k-1$ problem}\label{sec.Conditioning}
				Now we begin our proof of Theorem \ref{thm:main}.
				In what follows, $\tilde{P}$ will be a distribution that may change from line to line and may even mean two different things in the same line. The only important thing is that it is a distribution and is of the right support size for the context it appears in.
		\\ Suppose $\hat{p}_k \neq 1$, then by the chain rule of relative entropy we have, with a little algebra

		\begin{align}
		D(\hat{P}_{n,k} \| P) & = \sum_{i=1}^{k} \hat{p}_i\log\frac{\hat{p}_i}{p_i} \nonumber
		\\ &\label{eqn.2Decomposition} = \underbrace{D((\hat{p}_k,1-\hat{p}_k) \| (p_k,1-p_k))}_{:=A_{\hat{p}_k}} + (1-\hat{p}_k)\underbrace{D(\hat{P}_{n(1-\hat{p}_k),k-1} \| \tilde{P})}_{:=B_{\hat{p}_k}},
		\end{align}
		where $\hat{P}_{n(1-\hat{p}_k),k-1}$ and $\tilde{P}$ are both distributions in $\mathcal{M}_{k-1}$ and defined as 
		\[\hat{P}_{n(1-\hat{p}_k),k-1} \triangleq (\frac{\hat{p}_1}{1-\hat{p}_k},\frac{\hat{p}_2}{1-\hat{p}_k},\ldots,\frac{\hat{p}_{k-1}}{1-\hat{p}_k}),\]
		\[\tilde{P} \triangleq (\frac{p_1}{1-p_k},\frac{p_2}{1-p_k},\ldots,\frac{p_{k-1}}{1-p_k}).\]
		The idea now is to control the probability of $D(\hat{P}_{n,k} \| P)$ being large by conditioning on the value of $\hat{p}_k$. This fixes $A_{\hat{p}_k}$ and $1-\hat{p}_k$ and we can control $B_{\hat{p}_k}$ by using our control on the probability of $D(\hat{P}_{\cdot,k-1} \| \tilde{P})$ (by building up inductively).
		
		The random variables $\hat{p}_1,\hat{p}_2,\ldots,\hat{p}_k$ are not independent of each other.

	However, by the law of total probability, since $\hat{p}_k$ can take values in $\{\frac{0}{n},\frac{1}{n},...,\frac{n-1}{n},\frac{n}{n}\}$

		\begin{align}
			& \mathbb{P}\left( D(\hat{P}_{n,k} \| P) \geq \epsilon \right) \nonumber \\
			&\quad  = \sum\limits_{l=0}^{n} \mathbb{P}\left( D(\hat{P}_{n,k} \| P) \geq \epsilon | n\hat{p}_k=l \right) \mathbb{P}\left(n\hat{p}_k=l \right) \nonumber
			\\ & \quad = \sum\limits_{l=0}^{n-1} \mathbb{P}\left( D(\hat{P}_{n,k} \| P) \geq \epsilon | n\hat{p}_k=l \right) \mathbb{P}\left(n\hat{p}_k=l \right) + \mathbb{P}\left( D(\hat{P}_{n,k} \| P) \geq \epsilon | n\hat{p}_k=n \right) \mathbb{P}\left(n\hat{p}_k=n \right) \nonumber
			\\ & \quad = \sum\limits_{l=0}^{n-1} \mathbb{P}\left( B_{\hat{p}_k} \geq \frac{\epsilon-A_{\hat{p}_k}}{1-\hat{p}_k} | n\hat{p}_k=l \right) \mathbb{P}\left(n\hat{p}_k=l \right) + \mathbb{P}\left( D(\hat{P}_{n,k} \| P) \geq \epsilon | n\hat{p}_k=n \right) \mathbb{P}\left(n\hat{p}_k=n \right) \nonumber
			\\ &\label{eqn.2Conditioning} \quad = \sum\limits_{l=0}^{n-1} \mathbb{P}\left( D(\hat{P}_{n-l,k-1} \| \tilde{P}) \geq \frac{\epsilon-A_{\frac{l}{n}}}{1-\frac{l}{n}}  \right) \mathbb{P}\left(n\hat{p}_k=l \right) + \underbrace{\mathbb{P}\left( D(\hat{P}_{n,k} \| P) \geq \epsilon | n\hat{p}_k=n \right) \mathbb{P}\left(n\hat{p}_k=n \right)}_{:=T}.
		\end{align}
		In the third equality, since for $l \in [n-1]$ we have $\hat{p}_k \neq 1$, we have used Equation \ref{eqn.2Decomposition}.
		\\ We will plug in an upper bound for the $k-1$ sized problem into Equation \ref{eqn.2Conditioning} to do our computations. 
		\\ 
		\begin{definition}\label{defn.EpsSet}
			Let $\mathcal{E} \subseteq [n]$ be the set containing all $l \in [n]$ such that $A_{\frac{l}{n}} > \epsilon$.
		\end{definition}
		\begin{align}
		&	\mathbb{P}\left( D(\hat{P}_{n,k} \| P) \geq \epsilon \right) \nonumber \\
		 &\quad  \leq \sum\limits_{l \in [n] \setminus\mathcal{E}}^{} \mathbb{P}\left( D(\hat{P}_{n-l,k-1} \| \tilde{P}) \geq \frac{\epsilon-A_{\frac{l}{n}}}{1-\frac{l}{n}}  \right) \mathbb{P}\left(n\hat{p}_k=l \right) + \sum\limits_{l \in \mathcal{E}}^{} 1 \cdot \mathbb{P}\left(n\hat{p}_k=l \right) + T \nonumber
			\\ & \quad = \sum\limits_{l \in [n] \setminus\mathcal{E}}^{} \mathbb{P}\left( D(\hat{P}_{n-l,k-1} \| \tilde{P}) \geq \frac{\epsilon-A_{\frac{l}{n}}}{1-\frac{l}{n}}  \right) \mathbb{P}\left(n\hat{p}_k=l \right) + \mathbb{P}\left(A_{\hat{p}_k} > \epsilon \right) + T \nonumber
			\\ &\label{eqn.2ConditioningBetter} \quad = \underbrace{\sum\limits_{l \in [n] \setminus\mathcal{E}}^{} \mathbb{P}\left( D(\hat{P}_{n-l,k-1} \| \tilde{P}) \geq \frac{\epsilon-A_{\frac{l}{n}}}{1-\frac{l}{n}}  \right) \mathbb{P}\left(n\hat{p}_k=l \right)}_{:=R_k} + \underbrace{\mathbb{P}\left(D((\hat{p}_k,1-\hat{p}_k) \| (p_k,1-p_k)) > \epsilon \right)}_{:=S} + T
		\end{align}
		In the first equality we have used the fact that KL divergence is always non-negative..
		\\ For $S$ we can simply use our tight characterization of the distribution of types for $k=2$ and from Example \ref{eg:TightFork=2} observe that \[\mathbb{P}\left( D(\hat{P}_{n,k} \| P) \geq \epsilon \right) \geq \mathbb{P}\left(D((\hat{p}_k,1-\hat{p}_k) \| (p_k,1-p_k)) > \epsilon \right)\] could be as large as $2 e^{-n\epsilon}$ because all the terms in Equation \ref{eqn.2ConditioningBetter} are non-negative. 
		\begin{observation}\label{obs.SIsSmall}
			$S \leq 2e^{-n\epsilon}$, and it could be as large as $2e^{-n\epsilon}$ using Example \ref{eg:TightFork=2}.
		\end{observation}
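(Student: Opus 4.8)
The plan is to recognize that $S$ is nothing but an instance of the $k=2$ concentration probability already controlled in Example \ref{eg:TightFork=2}. First I would observe that the pair $(\hat{p}_k,1-\hat{p}_k)$ is itself a binary empirical distribution: setting $Y_j = \1\{X_j = k\}$ for $j \in [n]$ produces i.i.d.\ $\mathsf{Bernoulli}(p_k)$ random variables, so merging the first $k-1$ symbols into a single symbol turns the original multinomial experiment into a two-symbol experiment whose empirical distribution is exactly $(\hat{p}_k,1-\hat{p}_k)$ and whose true distribution is $(p_k,1-p_k)$. Consequently the count $n\hat{p}_k$ is $\mathsf{Binomial}(n,p_k)$, and $D((\hat{p}_k,1-\hat{p}_k)\|(p_k,1-p_k))$ is precisely $D(\hat{P}_{n,2}\|(p_k,1-p_k))$ for this induced binary experiment.

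With this identification, I would apply the binary-alphabet bound (\ref{eqn.binaryalphabet}) from Example \ref{eg:TightFork=2} (equivalently Lemma \ref{lem.k=2Case}), which states that $\mathbb{P}\left(D(\hat{P}_{n,2}\|(p_k,1-p_k)) \geq \epsilon\right) \leq 2e^{-n\epsilon}$ for every $\epsilon>0$ and every binary $P$. Since the event $\{D > \epsilon\}$ defining $S$ is contained in $\{D \geq \epsilon\}$, monotonicity of probability gives $S \leq 2e^{-n\epsilon}$ at once, which is the claimed upper bound.

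For the tightness assertion I would reuse the extremal configuration of Example \ref{eg:TightFork=2}: take $p_k = \tfrac12$, so the induced binary distribution is uniform and $D((\hat{p}_k,1-\hat{p}_k)\|(\tfrac12,\tfrac12))$ attains its maximum value $\log 2$ exactly at $\hat{p}_k \in \{0,1\}$, each occurring with probability $2^{-n}$. Choosing $\epsilon$ just below $\log 2$ (and above the second-largest attainable value of the divergence) makes $\{D > \epsilon\} = \{\hat{p}_k \in \{0,1\}\}$, whence $S = 2\cdot 2^{-n}$; since the ratio $S/(2e^{-n\epsilon}) = e^{n(\epsilon-\log 2)} \to 1$ as $\epsilon \uparrow \log 2$, the constant $2$ cannot be improved and $S$ can indeed be essentially as large as $2e^{-n\epsilon}$.

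The only point requiring any care, and it is the sole (minor) obstacle, is the bookkeeping between the strict inequality in the definition of $S$ and the non-strict inequality in (\ref{eqn.binaryalphabet}); this is handled by the trivial inclusion $\{D>\epsilon\}\subseteq\{D\geq\epsilon\}$ for the upper bound and by the explicit choice of $\epsilon$ in the tightness example. No further machinery is needed, as the observation is a direct corollary of the already-established $k=2$ case once the reduction to a binary experiment is made.
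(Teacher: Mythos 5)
Your proof is correct and takes essentially the same route as the paper: the paper likewise treats $(\hat{p}_k,1-\hat{p}_k)$ as a binary empirical distribution, bounds $S$ by the $k=2$ inequality of Example \ref{eg:TightFork=2} (equivalently Lemma \ref{lem.k=2Case}), and cites the same extremal configuration $P=(\tfrac12,\tfrac12)$, $\epsilon=\log 2$ for tightness. Your explicit handling of the strict inequality in the definition of $S$ (via $\{D>\epsilon\}\subseteq\{D\geq\epsilon\}$ and the limit $\epsilon\uparrow\log 2$) is a minor refinement of a point the paper glosses over, not a different argument.
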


		\begin{observation}\label{obs.MaxD}
			Define \[p_{\text{min}} \triangleq \min\limits_{i \in [k]} p_i.\] Then \[\max\limits_{\text{empirical distributions } \hat{P}_{n,k}} D(\hat{P}_{n,k} \| P) \leq \max\limits_{Q \in \mathcal{M}_k} D(Q \| P) = \log\frac{1}{p_{\text{min}}}.\] We can see this because $D(Q \| P)$ is convex in $Q$ over a compact convex set, $\mathcal{M}_k$ and hence must attain its maxima at an extreme point of $\mathcal{M}_k$. This means that the maxima must be attained at a distribution which puts all its mass in one spot. Of all these $k$ distributions, the one which puts all its mass on $p_{\text{min}}$ maximizes $D(Q \| P)$. 
		\end{observation}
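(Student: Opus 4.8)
The first inequality in the statement is immediate and requires no work: every realizable empirical distribution $\hat{P}_{n,k}$ is a point of $\mathcal{M}_k$, so the maximum of $D(\,\cdot\,\|P)$ over the (finite) set of empirical distributions is at most its maximum over all of $\mathcal{M}_k$. Thus the plan is to spend all the effort on the equality $\max_{Q\in\mathcal{M}_k} D(Q\|P) = \log\frac{1}{p_{\text{min}}}$, which I would establish by a convex-maximization argument.

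First I would record that $Q\mapsto D(Q\|P)=\sum_{i=1}^k q_i\log\frac{q_i}{p_i}$ is convex in $Q$: each summand splits as $q_i\log q_i - q_i\log p_i$, where $t\mapsto t\log t$ is convex (with the convention $0\log 0=0$) and $-q_i\log p_i$ is linear, so the sum is convex. Since the paper assumes $P\in\mathrm{int}(\mathcal{M}_k)$, every $p_i>0$, which makes $D(\,\cdot\,\|P)$ finite and continuous on the compact set $\mathcal{M}_k$, so the maximum is attained. The next step is to invoke the standard principle that a convex function on a compact convex polytope attains its maximum at an extreme point, together with the fact that the extreme points of the probability simplex $\mathcal{M}_k$ are exactly the $k$ point masses $e_1,\dots,e_k$, where $e_j$ places all its mass on symbol $j$.

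Finally I would evaluate $D$ at each vertex. Using $0\log 0 = 0$ we get $D(e_j\|P)= 1\cdot\log\frac{1}{p_j} + \sum_{i\neq j} 0 = \log\frac{1}{p_j}$, so maximizing over $j\in[k]$ is the same as minimizing $p_j$, giving $\max_j D(e_j\|P)=\log\frac{1}{p_{\text{min}}}$ and hence the claimed equality.

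The only genuine subtlety — and it is minor rather than a true obstacle — is the behavior on the boundary of $\mathcal{M}_k$, where some coordinates $q_i$ vanish. Here one must be explicit about the convention $0\log 0=0$ and use the full-support assumption on $P$ to guarantee that $D(\,\cdot\,\|P)$ remains continuous up to the vertices, so that the supremum is actually attained and coincides with the value at a point mass. With that in hand, the remainder is a direct application of the extreme-point principle for maximizing a convex function.
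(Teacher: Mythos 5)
Your proposal is correct and follows exactly the paper's own argument: $D(\cdot\|P)$ is convex on the compact convex set $\mathcal{M}_k$, hence maximized at an extreme point (a point mass), and among the $k$ point masses the one at $p_{\text{min}}$ gives the largest value $\log\frac{1}{p_{\text{min}}}$. The only difference is that you spell out the supporting details (convexity of $q\log q$ plus linearity, the $0\log 0=0$ convention, continuity up to the boundary) that the paper leaves implicit.
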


		So far we have been agnostic to what the actual distribution $P$ is. Since $\mathbb{P}\left( D(\hat{P}_{n,k} \| P) \geq \epsilon \right)$ is invariant under permutations of the support set, we might as well compute this quantity with any permutation of our choice. So assume that $p_{\text{min}} = p_k$. This means we choose to condition on the value taken by the outcome which has least probability.
		\\ 
		Using Observation \ref{obs.MaxD}, we get that \[\mathbb{P}\left( D(\hat{P}_{n,k} \| P) \geq \epsilon \right) = 0,\] if $\epsilon \geq \log\frac{1}{p_k}$. Hence we only need to consider the following for  $\epsilon \leq \log\frac{1}{p_k}$.
		\begin{align}\label{eqn.BoundT}
			T & = \mathbb{P}\left( D(\hat{P}_{n,k} \| P) \geq \epsilon | n\hat{p}_k=n \right) \mathbb{P}\left(n\hat{p}_k=n \right) \nonumber
			\\ & = \mathbb{P}\left( \log \frac{1}{p_k} \geq \epsilon \right) \cdot p_k^n \leq 1 \cdot p_k^n = e^{-n\log\frac{1}{p_k}} \nonumber
			\\ & \leq e^{-n\epsilon}.
		\end{align}
		This holds for all $\epsilon$ because when $\epsilon \geq \log\frac{1}{p_k}$, $T=0$.
		\\
		\\
		We have bounded both $S$ and $T$, and now turn to $R_k$ from Equation \ref{eqn.2ConditioningBetter}.
	\end{subsection}

	\begin{subsection}{Using $k=3$ to understand the behaviour of $R_k$}\label{sec.k=3,4}	
		Let $k=3$. Using Equation \ref{eqn.binaryalphabet} we have, using the definition of $A_{\frac{l}{n}}$ from Equation \ref{eqn.2Decomposition}-
		\begin{align}
			R_3 & = \sum\limits_{l \in [n] \setminus\mathcal{E}}^{} \mathbb{P}\left( D(\hat{P}_{n-l,k-1} \| \tilde{P}) \geq \frac{\epsilon-A_{\frac{l}{n}}}{1-\frac{l}{n}}  \right) \mathbb{P}\left(n\hat{p}_k=l \right) \nonumber
			\\ &\label{eqn.StopHereForLargek} = \sum\limits_{l \in [n] \setminus\mathcal{E}}^{} \mathbb{P}\left( D(\hat{P}_{n-l,2} \| \tilde{P}) \geq \frac{\epsilon-A_{\frac{l}{n}}}{1-\frac{l}{n}}  \right) \mathbb{P}\left(n\hat{p}_k=l \right)
			\\ & \leq \sum\limits_{l \in [n] \setminus\mathcal{E}}^{} 2e^{-(n-l)(\frac{\epsilon-A_{\frac{l}{n}}}{1-\frac{l}{n}})}\mathbb{P}\left(n\hat{p}_k=l \right) \nonumber
			\\ & = \sum\limits_{l \in [n] \setminus\mathcal{E}}^{} 2e^{-n\epsilon}\{e^{nD((\frac{l}{n},1-\frac{l}{n}) \| (p_k,1-p_k))}\mathbb{P}\left(n\hat{p}_k=l \right)\} \nonumber
			\\ &\label{eqn.RelaxedExp} \leq \sum\limits_{l=0}^{n} 2e^{-n\epsilon}\{e^{nD((\frac{l}{n},1-\frac{l}{n}) \| (p_k,1-p_k))}\mathbb{P}\left(n\hat{p}_k=l \right)\}
			\\ & = 2e^{-n\epsilon} \underbrace{\mathbb{E}_{X \sim \mathsf{B}(n,p_k)}[e^{nD((\frac{X}{n},1-\frac{X}{n}) \| (p_k,1-p_k))}]}_{:=E} \nonumber
			\\ & \leq 2e^{-n\epsilon} (\frac{e}{2} \sqrt{n}) = 2e^{-n\epsilon} \frac{ec_0}{2\pi} \sqrt{n} \nonumber.
		\end{align}
		In the last inequality we need the expectation $E$ which we upper bound in Lemma \ref{lem.UpperBoundEk} in Section \ref{sec.ExpComputation} and use above.
		\\ Hence using Equation \ref{eqn.2ConditioningBetter} and our results for $R_3$ (above), $S$ (Observation \ref{obs.SIsSmall}) and $T$ (Equation \ref{eqn.BoundT}), we get 
		\begin{align}\label{eqn.k=3Result}
		\boxed{\mathbb{P}\left( D(\hat{P}_{n,3} \| P) \geq \epsilon \right) \leq 2e^{-n\epsilon} (\frac{e}{2} \sqrt{n}) + 3e^{-n\epsilon} = e^{-n\epsilon}\left[3\left(1 + K_0\frac{e\sqrt{n}}{2\pi}\right)\right].} 
		\end{align}
		Here $c_0$ and $K_0$ are as defined in Equations \ref{eqn.cmDefn} and \ref{eqn.KmDefn}, and this result gives Theorem \ref{thm:main} for the special case $k=3$ (Note that the statement of Theorem \ref{thm:main} has an extra factor of $\frac{c_1}{c_2}$ for reasons that will become clearer later).

	\end{subsection}

	\begin{subsection}{The case for general $k$}\label{sec.Largek}
		In what follows, $c_m$ and $K_m$ are defined as in Equations \ref{eqn.cmDefn} and \ref{eqn.KmDefn}. Also, here we define the following 
		\begin{align}\label{eqn.hmDefn}
			h_m \triangleq \begin{cases}
			c_m & m \neq 2 \\
			c_1 & m=2 
			\end{cases},
		\end{align}
		\begin{align}\label{eqn.HmDefn}
			H_m \triangleq \prod_{j=0}^{m} h_j.
		\end{align}
		We will also later use the fact that
		\begin{align}\label{eqn.KandHreln}
			H_m \leq \frac{c_1}{c_2}K_m  \text{ for } \ m \geq 0.
		\end{align}		
		
		 We have shown in Equation \ref{eqn.k=3Result} that \[\mathbb{P}\left( D(\hat{P}_{n,k} \| P) \geq \epsilon \right) \leq e^{-n\epsilon}\left[3\sum_{i=0}^{k-2}H_{i-1} (\frac{e\sqrt{n}}{2\pi})^{i}\right]\] holds for $k=3$. So we will induct assuming it is true for $k-1$, and hope to show it for value $k$.
		
		Using Equation \ref{eqn.StopHereForLargek}, we have, using the definition of $A_{\frac{l}{n}}$ from Equation \ref{eqn.2Decomposition}-
		\begin{align}
			R_k & = \sum\limits_{l \in [n] \setminus\mathcal{E}}^{} \mathbb{P}\left( D(\hat{P}_{n-l,k-1} \| \tilde{P}) \geq \frac{\epsilon-A_{\frac{l}{n}}}{1-\frac{l}{n}}  \right) \mathbb{P}\left(n\hat{p}_k=l \right) \nonumber
			\\ & \leq \sum\limits_{l=0}^{n} \mathbb{P}\left( D(\hat{P}_{n-l,k-1} \| \tilde{P}) \geq \frac{\epsilon-A_{\frac{l}{n}}}{1-\frac{l}{n}}  \right) \mathbb{P}\left(n\hat{p}_k=l \right) \nonumber
			\\ &\label{eqn.UsingDiffBoundsCareful} \leq \sum\limits_{l=0}^{n} e^{-(n-l)(\frac{\epsilon-A_{\frac{l}{n}}}{1-\frac{l}{n}})}\left[3\sum_{i=0}^{k-3}H_{i-1}  (\frac{e\sqrt{n-l}}{2\pi})^{i}\right]\mathbb{P}\left(n\hat{p}_k=l \right)
			\\ & = e^{-n\epsilon}\left[3 \sum_{i=0}^{k-3}H_{i-1}(\frac{e\sqrt{n}}{2\pi})^i \left(\sum_{l=0}^{n} \left(\sqrt{1-\frac{l}{n}}\right)^i e^{nA_{\frac{l}{n}}}\mathbb{P}\left(n\hat{p}_k=l \right) \right) \right] \nonumber
			\\ &\label{eqn.RelaxedExpForLargek} =
			e^{-n\epsilon}\left[3 \sum_{i=0}^{k-3}H_{i-1}(\frac{e\sqrt{n}}{2\pi})^i \underbrace{\mathbb{E}_{X \sim \mathsf{B}(n,p_k)}\left[\left(\sqrt{1-\frac{X}{n}}\right)^i e^{nD((\frac{X}{n},1-\frac{X}{n}) \| (p_k,1-p_k))}\right]}_{:=E_{i}} \right]
			\\ & \leq e^{-n\epsilon}\left[3 \sum_{i=0}^{k-3}H_{i-1}(\frac{e\sqrt{n}}{2\pi})^i \left(h_i \frac{e\sqrt{n}}{2\pi}\right) \right] \nonumber
			\\ & \leq e^{-n\epsilon}\left[3 \sum_{i=1}^{k-2}H_{i-1}(\frac{e\sqrt{n}}{2\pi})^i \right] \nonumber.
		\end{align}
		
		In Inequality \ref{eqn.UsingDiffBoundsCareful}, we have used our inductive assumption \[\mathbb{P}\left( D(\hat{P}_{n,k-1} \| P) \geq \epsilon \right) \leq e^{-n\epsilon}\left[3\sum_{i=0}^{k-3}H_{i-1} (\frac{c_0}{c_2}\frac{e\sqrt{n}}{2\pi})^{i}\right],\] and we have upper bounded the expectation $E_i$ by $\left(h_i \frac{e\sqrt{n}}{2\pi}\right)$ in Lemma \ref{lem.UpperBoundEk} in Section \ref{sec.ExpComputation}.
		\\ Hence using Equation \ref{eqn.2ConditioningBetter} and our upper bounds for $R_k$ (above), $S$ (Observation \ref{obs.SIsSmall}) and $T$ (Equation \ref{eqn.BoundT}), we get (with Equation \ref{eqn.KandHreln})
		\begin{align*}
		\mathbb{P}\left( D(\hat{P}_{n,3} \| P) \geq \epsilon \right) & \leq e^{-n\epsilon}\left[3 \sum_{i=1}^{k-2}H_{i-1}(\frac{e\sqrt{n}}{2\pi})^i \right]+ 3e^{-n\epsilon} \\
		&  = e^{-n\epsilon}\left[3 \sum_{i=0}^{k-2}H_{i-1}(\frac{e\sqrt{n}}{2\pi})^i \right] \\
		& \leq e^{-n\epsilon}\left[\frac{3c_1}{c_2} \sum_{i=0}^{k-2}K_{i-1}(\frac{e\sqrt{n}}{2\pi})^i \right].
		\end{align*}
		This proves our inductive step, and using upper bounds on $K_m$ from Equation \ref{eqn.KmDefn}, this completes the proof of Equation \ref{eqn.ResultAllk} in Theorem \ref{thm:main}. To get more interpretable versions of this bound, we do the following.
		\\ Using the fact $K_m \leq \sqrt{\frac{d_0}{m}}\left(\sqrt{\frac{2\pi e}{m}}\right)^m$ in conjunction with this result, we obtain \[\mathbb{P}\left( D(\hat{P}_{n,k} \| P) \geq \epsilon \right) \leq e^{-n\epsilon}\left[\frac{3c_1}{c_2}\sqrt{\frac{d_0}{2\pi e}} \left(\sum_{i=1}^{k-2}\left(\sqrt{\frac{e^3n}{2\pi i}}\right)^{i}+1\right)\right].\]
		Recalling that $C_1=\frac{3c_1}{c_2}\sqrt{\frac{d_0}{2\pi e}}$ and $C_0=(\frac{e^3}{2\pi})$, the rest of the piecewise bounds in Theorem \ref{thm:main} follow straightforwardly after making the following observations-
		\begin{itemize}
			\item If $k \leq \sqrt{nC_0}+2$, then $k \leq \frac{n}{k} \implies k\left(\sqrt{\frac{C_0n}{k}}\right)^{k-2} \leq \left(\sqrt{\frac{C_0n}{k}}\right)^{k}$.
			\item The function $\left(\sqrt{\frac{C_0n}{i}}\right)^i$ is maximized at $i = \frac{C_0n}{e}$.
			\item The function $\left(\sqrt{\frac{C_0n}{i}}\right)^i \leq 1$ for $i \geq C_0n$.
			\item $\left(\frac{k}{k-2}\right)^{k-2} \leq e^2$ for $k \geq 3$.
		\end{itemize}

		\begin{align*}
			& \mathbb{P}\left( D(\hat{P}_{n,k} \| P) \geq \epsilon \right)
			\\ & \leq e^{-n\epsilon}\left[\frac{3c_1}{c_2}\sum_{i=0}^{k-2}K_{i-1} (\frac{e\sqrt{n}}{2\pi})^{i}\right] \leq e^{-n\epsilon}\left[\frac{3c_1}{c_2}\sqrt{\frac{d_0}{2\pi e}} \left(\sum_{i=1}^{k-2}\left(\sqrt{\frac{e^3n}{2\pi i}}\right)^{i}+1\right)\right]	
			\\ & \leq \begin{cases} 
			e^{-n\epsilon}\left[\frac{3c_1}{c_2}\sqrt{\frac{d_0}{2\pi e}}(k-2) \left(\sqrt{\frac{e^3n}{2\pi (k-2)}}\right)^{k-2}\right]\leq \boxed{C_1e\left(\sqrt{\frac{C_0n}{k}}\right)^ke^{-n\epsilon}} & 3\leq k \leq \sqrt{nC_0}+2 \\
			e^{-n\epsilon}\left[\frac{3c_1}{c_2}\sqrt{\frac{d_0}{2\pi e}}(k-2) \left(\sqrt{\frac{e^3n}{2\pi (k-2)}}\right)^{k-2}\right]\leq \boxed{C_1k\left(\sqrt{\frac{C_0n}{k}}\right)^ke^{-n\epsilon}} & 3 \leq k\leq \frac{nC_0}{e}+2 \\
			e^{-n\epsilon}\left[\frac{3c_1}{c_2}\sqrt{\frac{d_0}{2\pi e}}(k-2) e^{\frac{n(\frac{e^3}{2\pi})}{2e}}\right]\leq \boxed{C_1k e^{\frac{C_0n}{2e}}e^{-n\epsilon}} & \frac{nC_0}{e}+2 \leq k\leq nC_0+2 \\	
			e^{-n\epsilon}\left[\frac{3c_1}{c_2}\sqrt{\frac{d_0}{2\pi e}}\left(nC_0 e^{\frac{n(\frac{e^3}{2\pi})}{2e}} + k-2-nC_0\right)\right] \leq \boxed{C_1\left(nC_0 e^{\frac{C_0n}{2e}}+k\right)e^{-n\epsilon}} & k\geq nC_0+2
			\end{cases}
			\end{align*}
		This completes the proof of Theorem \ref{thm:main}.
		\end{subsection}
	
	\begin{subsection}{Computing the expectation of the exponential of the KL Divergence in the binary alphabet}\label{sec.ExpComputation}
	Note: Because we use $0\log0 = 0$, we also use $0^0 = 1$. \\
	
	In this section, we prove the following upper bound on $E_i$ (defined in Equation \ref{eqn.RelaxedExpForLargek}) which has been used in Sections \ref{sec.k=3,4} and \ref{sec.Largek}.
	\begin{lemma}\label{lem.UpperBoundEk}
		\begin{align*}
		E_i = \mathbb{E}_{X \sim \mathsf{B}(n,p_k)}\left[\left(\sqrt{1-\frac{X}{n}}\right)^i e^{nD((\frac{X}{n},1-\frac{X}{n}) \| (p_k,1-p_k))}\right] \leq h_i\frac{e\sqrt{n}}{2\pi}
		\end{align*}
	\end{lemma}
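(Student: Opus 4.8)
The plan is to exploit a cancellation that removes all dependence on $p_k$ from the summand, and then to recognize the resulting $P$-free sum as a Riemann sum for a Beta integral that equals $c_i$ exactly. Writing $X = l$ and using $nD((\frac{l}{n},1-\frac{l}{n})\|(p_k,1-p_k)) = l\log\frac{l}{np_k} + (n-l)\log\frac{n-l}{n(1-p_k)}$ together with $\mathbb{P}(X = l) = \binom{n}{l}p_k^l(1-p_k)^{n-l}$, the factors $p_k^l(1-p_k)^{n-l}$ cancel exactly, leaving
\[ e^{nD((\frac{l}{n},1-\frac{l}{n})\|(p_k,1-p_k))}\,\mathbb{P}(X=l) = \binom{n}{l}\frac{l^l(n-l)^{n-l}}{n^n}. \]
Hence $E_i = \sum_{l=0}^{n}(1-\frac{l}{n})^{i/2}\binom{n}{l}\frac{l^l(n-l)^{n-l}}{n^n}$, which no longer depends on $P$. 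This is the key simplification, and it is exactly what lets the induction in Section~\ref{sec.Largek} be uniform over all $P$.

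Next I would bound the weights by Stirling. From $\sqrt{2\pi m}(m/e)^m \leq m! \leq \sqrt{2\pi m}(m/e)^m e^{1/(12m)}$ one gets, for $1 \leq l \leq n-1$,
\[ \binom{n}{l}\frac{l^l(n-l)^{n-l}}{n^n} \leq \frac{e^{1/(12n)}}{\sqrt{2\pi}}\sqrt{\frac{n}{l(n-l)}}. \]
Writing $x_l = l/n$, the weighted term is $\frac{e^{1/(12n)}}{\sqrt{2\pi n}}\,\phi_i(x_l)$ with $\phi_i(x) = x^{-1/2}(1-x)^{(i-1)/2}$. For $i \geq 1$ the function $\phi_i$ is decreasing on $(0,1)$, so a left-endpoint Riemann comparison gives $\frac{1}{n}\sum_{l=1}^{n-1}\phi_i(x_l) \leq \int_0^1 \phi_i(x)\,dx = B(\frac{1}{2},\frac{i+1}{2})$. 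I would then identify this Beta value with $c_i$: since $c_m = 2\int_0^{\pi/2}\sin^m\theta\,d\theta$ and $\int_0^{\pi/2}\sin^m\theta\,d\theta = \frac{1}{2}B(\frac{m+1}{2},\frac{1}{2})$, the symmetry $B(a,b)=B(b,a)$ yields $B(\frac{1}{2},\frac{i+1}{2}) = c_i$. Thus the bulk of the sum is at most $\frac{e^{1/(12n)}c_i\sqrt{n}}{\sqrt{2\pi}}$.

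Finally I would handle the endpoints and close the constants. The $l=0$ term contributes exactly $1$, and for $i \geq 1$ the $l=n$ term vanishes because of the factor $(1-\frac{l}{n})^{i/2}$; the case $i=0$, where the $l=n$ term survives, is treated directly as in the $k=3$ computation. Since $\frac{e}{\sqrt{2\pi}} > 1$, the target $h_i\frac{e\sqrt{n}}{2\pi}$ carries a fixed multiplicative margin over the leading term $\frac{c_i\sqrt{n}}{\sqrt{2\pi}}$ when $h_i = c_i$, which absorbs the Stirling factor $e^{1/(12n)}$ and the $O(1)$ endpoint contribution; for $i=2$ one instead bounds $(1-x)^{1/2}\leq 1$ (i.e. $\phi_2 \leq \phi_1$), which produces the $c_1$ integral and is precisely the reason for the definition $h_2 = c_1$. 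The main obstacle will be the near-boundary regime, where Stirling is inaccurate for $l$ or $n-l$ of order one --- exactly where $\phi_i$ is singular --- so the Riemann comparison and the leftover $O(1)$ terms must be controlled tightly enough that the constants close for every $n \geq 2$, not just asymptotically; this is where the deliberate slack in the constant $e$ and in the choice $h_2 = c_1$ gets spent.
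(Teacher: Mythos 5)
Your proposal follows the same skeleton as the paper's proof --- the exact cancellation of $p_k$, Stirling bounds on $\binom{n}{l}$, and identification of the limiting integral $\int_0^1 x^{-1/2}(1-x)^{(i-1)/2}\,dx = B(\tfrac12,\tfrac{i+1}{2}) = c_i$ --- with two refinements: a sharper two-sided Stirling bound, and a monotonicity-based Riemann comparison in place of the paper's convexity argument (Lemmas \ref{lem.SumToIntegral} and \ref{lem.TheseFuncsAreConvex}). But your closing step, absorbing the $l=0$ contribution into the multiplicative slack, is a genuine gap, and it is not one that tighter bookkeeping can repair. Since the $l=0$ term of $E_i$ equals exactly $1$, we have $E_i \ge 1$ for every $i$ and $n$; meanwhile the claimed bound $h_i\frac{e\sqrt n}{2\pi}$ tends to $0$ as $i \to \infty$ for fixed $n$ (the paper itself notes $c_m \sim \sqrt{2\pi/m}$). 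Concretely, for $n=2$, $i=10$: $E_{10} = 1 + \tfrac{1}{64} \approx 1.016$, while $c_{10}\frac{e\sqrt 2}{2\pi} = \frac{945\pi}{3840}\cdot\frac{e\sqrt2}{2\pi} \approx 0.47$. So the lemma as stated is \emph{false} once $i > e^2 n/(2\pi) \approx 1.18\,n$, a regime Section \ref{sec.Largek} actually uses (there $i$ runs up to $k-3$ and $k$ may exceed $nC_0+2$). Even where the statement is true, your margin does not close: you need $c_i\sqrt n\left(\frac{e}{2\pi}-\frac{e^{1/(12n)}}{\sqrt{2\pi}}\right) \ge 1$, which fails, e.g., for $i=1$ and $n=200$, not just for $i \gtrsim n$.

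For comparison, the paper's own proof breaks silently at exactly this spot: it applies the Stirling lower bound $\sqrt{2\pi}\,m^{m+1/2}e^{-m} \le m!$ at $m = l = 0$, where it degenerates to $0 \le 1$ and the resulting upper bound on $\binom{n}{l}$ is $+\infty$; the subsequent comparison of the (divergent) sum $\sum_{l=0}^{n}\frac{(1-l/n)^{i/2}}{\sqrt{l/n}\sqrt{1-l/n}}\frac1n$ with the finite integral $c_i$ is then invalid. In other words, your more careful endpoint accounting has exposed an error in the paper rather than a defect unique to your argument: what your bulk estimate legitimately yields is $E_i \le \frac{e^{1/(12n)}}{\sqrt{2\pi}}c_i\sqrt n + 1$ for $i\ge1$ (with $+2$ for $i=0$), and propagating that additive term through the induction of Section \ref{sec.Largek} would alter the constants in Theorem \ref{thm:main}. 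Two smaller points: your $i=0$ case is circular, since the ``$k=3$ computation'' you defer to is itself an invocation of this lemma with $i=0$; and for $i=0$ the integrand $x^{-1/2}(1-x)^{-1/2}$ is not monotone, so there you would need the convexity comparison or a symmetric split at $x=\tfrac12$ rather than a left-endpoint Riemann bound.
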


	\begin{proof}
		Below, we use the following Stirling Approximation that is valid for all integers $n$ on each of the $3$ factorials involved in $\binom{n}{l}$.
		\begin{align}
		\sqrt{2\pi}n^{n+\frac{1}{2}}e^{-n} \leq n! \leq en^{n+\frac{1}{2}}e^{-n}. 
		\end{align}
		\begin{align*}
		E_i & =	\mathbb{E}_{X \sim \mathsf{B}(n,p_k)}\left[\left(\sqrt{1-\frac{X}{n}}\right)^i e^{nD((\frac{X}{n},1-\frac{X}{n}) \| (p_k,1-p_k))}\right]
		\\ & = \sum\limits_{l=0}^{n} \binom{n}{l}p_k^l(1-p_k)^{n-l} e^{n(\frac{l}{n}\log\frac{l}{np_k}+\frac{n-l}{n}\log\frac{n-l}{n(1-p_k)})}\left(\sqrt{1-\frac{l}{n}}\right)^i
		\\ & = \sum\limits_{l=0}^{n} \binom{n}{l}p_k^l(1-p_k)^{n-l} \frac{l^l}{n^l p_k^l} \frac{(n-l)^{n-l}}{n^{n-l}(1-p_k)^{n-l}}\left(\sqrt{1-\frac{l}{n}}\right)^i
		\\ & = \sum\limits_{l=0}^{n} \binom{n}{l} \frac{l^l (n-l)^{n-l}}{n^n}\left(\sqrt{1-\frac{l}{n}}\right)^i
		\\ & \leq  \sum\limits_{l=0}^{n} \frac{en^{n+\frac{1}{2}}e^{-n}}{\sqrt{2\pi}l^{l+\frac{1}{2}}e^{-l}\sqrt{2\pi}(n-l)^{n-l+\frac{1}{2}}e^{-(n-l)}} \frac{l^l (n-l)^{n-l}}{n^n}\left(\sqrt{1-\frac{l}{n}}\right)^i
		\\ & = \frac{e}{2\pi} \sum\limits_{l=0}^{n} \frac{\sqrt{n}}{\sqrt{l}\sqrt{n-l}}\left(\sqrt{1-\frac{l}{n}}\right)^i
		\\ & = \frac{e\sqrt{n}}{2\pi} \sum\limits_{l=0}^{n} \frac{\left(\sqrt{1-\frac{l}{n}}\right)^i}{\sqrt{\frac{l}{n}}\sqrt{1-\frac{l}{n}}}\frac{1}{n}
		\end{align*}
		For all non-negative integers $i\neq 2$, we can now use Lemmas \ref{lem.SumToIntegral} and \ref{lem.TheseFuncsAreConvex} from Appendix \ref{sec.SumToIntegral} and certain definite integrals from Section \ref{sec.Integrals} to upper bound this sum.
		\begin{align*}
			E_i \leq \frac{e\sqrt{n}}{2\pi} \sum\limits_{l=0}^{n} \frac{\left(\sqrt{1-\frac{l}{n}}\right)^i}{\sqrt{\frac{l}{n}}\sqrt{1-\frac{l}{n}}}\frac{1}{n} \leq \frac{e\sqrt{n}}{2\pi}\int_{0}^{1} \frac{(1-x)^{\frac{i}{2}}}{\sqrt{x-x^2}}dx = c_i\frac{e\sqrt{n}}{2\pi} \leq h_i\frac{e\sqrt{n}}{2\pi}
		\end{align*}
		For $i=2$, we observe that $E_2 \leq E_1 \leq c_1\frac{e\sqrt{n}}{2\pi} = h_2\frac{e\sqrt{n}}{2\pi}$.
		This completes the proof of Lemma \ref{lem.UpperBoundEk}.
	\end{proof}

\end{subsection}

\end{section}

	\begin{section}{Auxiliary lemmas}\label{sec.SumToIntegral}

		\begin{lemma}
			\label{lem:DiffSlopeBound}
			For any distribution $P \in \mathcal{M}_k$ and any subset $F \subseteq \mathcal{M}_k$, given $n$ iid samples from $P$ with $\hat{P}_{n,k} =(\hat{p}_1,\hat{p}_2,\ldots,\hat{p}_k)$ denoting the empirical distribution, we have
			\begin{align}\label{eqn.DiffSlopeBound}
				\mathbb{P}\left(\hat{P}_{n,k} \in F \right) \leq 2(k-1)e^{-n\frac{\inf\limits_{P' \in F} D(P'\|P)}{k-1}}
			\end{align}
		\end{lemma}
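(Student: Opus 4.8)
The plan is to reduce the statement to a uniform tail bound on the divergence $D(\hat P_{n,k}\|P)$ itself, and then to split that single $(k-1)$-dimensional deviation into $k-1$ one-dimensional (binary) deviations, each of which is controlled by the sharp $k=2$ estimate \eqref{eqn.binaryalphabet}. The first move is to eliminate $F$: writing $D^\star=\inf_{P'\in F}D(P'\|P)$, every empirical distribution with $\hat P_{n,k}\in F$ satisfies $D(\hat P_{n,k}\|P)\ge D^\star$, so $\mathbb P(\hat P_{n,k}\in F)\le\mathbb P\big(D(\hat P_{n,k}\|P)\ge D^\star\big)$. It therefore suffices to prove the uniform tail bound $\mathbb P\big(D(\hat P_{n,k}\|P)\ge\epsilon\big)\le 2(k-1)e^{-n\epsilon/(k-1)}$ for every $\epsilon>0$ and then to apply it with $\epsilon=D^\star$.

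The heart of the argument is a pointwise comparison of the full divergence with its coordinatewise binary projections. For each coordinate $i$ let $d_i(Q)=D\big((Q_i,1-Q_i)\,\|\,(p_i,1-p_i)\big)$ be the binary divergence obtained by merging all coordinates other than $i$. A direct expansion gives the exact identity $D(Q\|P)=\sum_{i=1}^k d_i(Q)-(k-1)D(\tilde Q\|\tilde P)$, where $\tilde q_i=\frac{1-Q_i}{k-1}$ and $\tilde p_i=\frac{1-p_i}{k-1}$ are genuine distributions precisely because $\sum_i(1-Q_i)=\sum_i(1-p_i)=k-1$. Since $D(\tilde Q\|\tilde P)\ge0$, this already yields the clean global bound $D(Q\|P)\le\sum_i d_i(Q)$; the refinement I actually want is $D(Q\|P)\le(k-1)\max_{i\in[k]}d_i(Q)$, whose constant is dictated by the same weight identity $\sum_i(1-p_i)=k-1$ (it is exactly what makes the local quadratic version $\sum_i\frac{(Q_i-p_i)^2}{p_i}\le(k-1)\max_i\frac{(Q_i-p_i)^2}{p_i(1-p_i)}$ hold and be tight). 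Granting this, $D(\hat P_{n,k}\|P)\ge\epsilon$ forces $d_i(\hat P_{n,k})\ge\epsilon/(k-1)$ for some $i$.

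It then remains to bound $\mathbb P\big(d_i(\hat P_{n,k})\ge\epsilon/(k-1)\big)$ coordinatewise and to union bound. Since $n\hat p_i\sim\mathsf B(n,p_i)$, the quantity $d_i(\hat P_{n,k})$ is literally the $k=2$ divergence of a binary empirical distribution formed from all $n$ samples, so Example \ref{eg:TightFork=2}/Equation \eqref{eqn.binaryalphabet} gives $\mathbb P\big(d_i(\hat P_{n,k})\ge\delta\big)\le 2e^{-n\delta}$ with $\delta=\epsilon/(k-1)$. Equivalently, in the style of Example \ref{eg:TightFork=2}, each superlevel set $\{Q:d_i(Q)\ge\delta\}$ is a union of two convex subsets of $\mathcal M_k$ (each carved out by a single linear inequality on $Q_i$), on which the Sanov property of convex sets \cite{csiszar1984sanov} yields $e^{-n\delta}$. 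Summing over the $k-1$ relevant coordinates produces exactly the prefactor $2(k-1)$ and the exponent $-n\epsilon/(k-1)$, completing the proof.

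The step I expect to be the genuine obstacle is the sharp pointwise inequality $D(Q\|P)\le(k-1)\max_i d_i(Q)$ with the exact factor $k-1$, rather than the easy factor $k$ coming from $D\le\sum_i d_i$. I would establish it by maximizing the convex map $Q\mapsto D(Q\|P)$ over the box $\{Q:d_i(Q)\le M\ \forall i\}\cap\mathcal M_k$: the maximum is attained at a vertex where $k-1$ of the binary divergences saturate at $M$ and one coordinate is free, and one must verify, using the identity $D(Q\|P)=\sum_i d_i(Q)-(k-1)D(\tilde Q\|\tilde P)$, that the free-coordinate contribution is absorbed by the negative term $-(k-1)D(\tilde Q\|\tilde P)$, so the vertex value does not exceed $(k-1)M$. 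Controlling this correction term — equivalently, arguing that the smallest of the $k$ binary divergences may be discarded so that only $k-1$ coordinates, and hence $2(k-1)$ convex pieces, are needed — is the delicate point; every other step is either the reduction above or a direct appeal to the binary bound.
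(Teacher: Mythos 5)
Your opening reduction (replacing $F$ by the superlevel set of $D(\cdot\|P)$ at level $\epsilon=\inf_{P'\in F}D(P'\|P)$) is exactly how the paper's proof begins, but the pointwise inequality that your entire argument rests on, namely $D(Q\|P)\le (k-1)\max_i d_i(Q)$, is \emph{false}. Take $k=16$, $P$ uniform on $16$ symbols, and $Q$ uniform on a $6$-element subset. Then
\[
D(Q\|P)=\log\tfrac{8}{3}\approx 0.98083,
\]
while the binary divergences are $d_i(Q)=\tfrac16\log\tfrac83+\tfrac56\log\tfrac89\approx 0.06532$ on the six occupied coordinates and $d_i(Q)=\log\tfrac{16}{15}\approx 0.06454$ on the ten empty ones, so that
\[
(k-1)\max_i d_i(Q)=15\left(\tfrac16\log\tfrac83+\tfrac56\log\tfrac89\right)\approx 0.97979<D(Q\|P).
\]
(The failure condition reduces to the exact statement $\log(8/3)/\log(9/8)\approx 8.3275<25/3$, so this is not a rounding artifact.) What \emph{is} true, and follows from your own identity $D(Q\|P)=\sum_i d_i(Q)-(k-1)D(\tilde Q\|\tilde P)$, is only $D(Q\|P)\le\sum_i d_i(Q)\le k\max_i d_i(Q)$; with that, your union bound yields $\mathbb{P}\left(D(\hat{P}_{n,k}\|P)\ge\epsilon\right)\le 2k\,e^{-n\epsilon/k}$, strictly weaker than the lemma. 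Your local-quadratic heuristic is what misled you: the quadratic inequality with constant $k-1$ does hold, and the global one even holds in the limit $k\to\infty$, but it is asymptotically \emph{tight} on two-level profiles whose level ratio is near $e$ (note $8/3\approx e$), and at finite $k$ the $O(1/k)$ correction terms push it just past equality --- which is precisely what the example above exploits. There is also a secondary flaw: even if the $(k-1)$ inequality were true, you cannot know in advance \emph{which} coordinate has the large $d_i$, so the union bound must run over all $k$ coordinates and produces prefactor $2k$, not $2(k-1)$; ``discarding the smallest $d_i$'' does not designate a fixed set of $k-1$ coordinates to union over.

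The paper's proof avoids any pointwise comparison of $D$ with the $d_i$'s. It conditions on $\hat p_k$ and uses the chain rule $D(\hat P_{n,k}\|P)=A+(1-\hat p_k)B$, where $A=D((\hat p_k,1-\hat p_k)\|(p_k,1-p_k))$ and $B$ is a $(k-1)$-alphabet empirical divergence built from the $n(1-\hat p_k)$ remaining samples; it then splits the \emph{threshold} rather than the divergence, via
\[
\{A+(1-\hat p_k)B\ge\epsilon\}\subseteq\left\{A\ge\tfrac{\epsilon}{k-1}\right\}\cup\left\{(1-\hat p_k)B\ge\tfrac{k-2}{k-1}\epsilon\right\}.
\]
The first event has probability at most $2e^{-n\epsilon/(k-1)}$ by the Csisz\'ar convex-set bound; for the second, conditioning on $\hat p_k=l/n$ and applying the inductive hypothesis (over $k$) with $n-l$ samples, the factors $n-l$ and $1-l/n$ cancel in the exponent, giving $2(k-2)e^{-n\epsilon/(k-1)}$ uniformly in $l$. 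Summing yields $2(k-1)e^{-n\epsilon/(k-1)}$. That exact cancellation of the sample-size rescaling, together with splitting $\epsilon$ in proportions $\tfrac{1}{k-1}$ and $\tfrac{k-2}{k-1}$, is where the factor $k-1$ genuinely comes from; your coordinatewise decomposition has no analogue of it, and the counterexample shows it cannot.
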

		\begin{proof}
			First, it is clear that \[\mathbb{P}\left(\hat{P}_{n,k} \in F \right) \leq \mathbb{P}\left(D(\hat{P}_{n,k}\|P) \geq \inf\limits_{P' \in F} D(P'\|P) \right).\] Hence we only need to focus on upper bounds for $\mathbb{P}\left(D(\hat{P}_{n,k}\|P) \geq \epsilon \right)$ for some fixed $\epsilon=\inf\limits_{P' \in F} D(P'\|P)$. The structure of this proof is the same as that of Theorem \ref{thm:main}. We decompose $D(\hat{P}_{n,k}\|P)$ as in Equation \ref{eqn.2Decomposition}, use the law of total probability and then use induction. The only difference is that now we use a different inductive hypothesis and bound the terms differently. Our inductive hypothesis is that for all $\tilde{P} \in \mathcal{M}_{k-1}$ ($k \geq 3$) and all positive integers $n$ we have \[\mathbb{P}\left(D(\hat{P}_{n,k-1}\|\tilde{P}) \geq \epsilon \right) \leq 2(k-2)e^{-\frac{n\epsilon}{k-2}}.\] The base case for $k=3$ is immediate from Lemma \ref{lem.k=2Case}. Using Equation \ref{eqn.2Decomposition} we then have
			\begin{align*}
				\mathbb{P}\left(D(\hat{P}_{n,k}\|P) \geq \epsilon \right) & = \mathbb{P}\left( D(\hat{P}_{n,k} \| P) \geq \epsilon | n\hat{p}_k=n \right)\mathbb{P}\left(n\hat{p}_k=n \right) \\ &\quad + \mathbb{P}\left(D((\hat{p}_k,1-\hat{p}_k) \| (p_k,1-p_k)) + (1-\hat{p}_k)D(\hat{P}_{n(1-\hat{p}_k),k-1} \| \tilde{P}) \geq \epsilon | n\hat{p}_k\neq n\right)\mathbb{P}\left(n\hat{p}_k\neq n\right) \\
				& \leq \mathbb{P}\left(D((\hat{p}_k,1-\hat{p}_k) \| (p_k,1-p_k)) \geq q_k\epsilon | n\hat{p}_k= n\right)\mathbb{P}\left(n\hat{p}_k= n\right)\\ &\quad + \mathbb{P}\left(D((\hat{p}_k,1-\hat{p}_k) \| (p_k,1-p_k)) \geq q_k\epsilon | n\hat{p}_k\neq n\right)\mathbb{P}\left(n\hat{p}_k\neq n\right)\\ &\quad + \mathbb{P}\left((1-\hat{p}_k)D(\hat{P}_{n(1-\hat{p}_k),k-1} \| \tilde{P}) \geq (1-q_k)\epsilon | n\hat{p}_k\neq n\right)\mathbb{P}\left(n\hat{p}_k\neq n\right) \\
				& \leq  \mathbb{P}\left(D((\hat{p}_k,1-\hat{p}_k) \| (p_k,1-p_k)) \geq q_k\epsilon \right) + \mathbb{P}\left((1-\hat{p}_k)D(\hat{P}_{n(1-\hat{p}_k),k-1} \| \tilde{P}) \geq (1-q_k)\epsilon \right)\\
				& \leq 2e^{-nq_k\epsilon} + 2(k-2)e^{\frac{n(1-\hat{p}_k)(1-q_k)\epsilon}{(1-\hat{p}_k)k-2}} \\
				& \leq 2(k-1)e^{-\frac{n\epsilon}{k-1}}.
			\end{align*}
			Above we have used Lemma \ref{lem.k=2Case}, the inductive hypothesis, and eventually set $q_k=\frac{1}{k-1}$. This completes the proof of Lemma \ref{lem:DiffSlopeBound}.
		\end{proof}
		
		\begin{lemma}\label{lemma.exactquadratic}
		Let $\hat{P}_{n,k} = (\hat{p}_1,\hat{p}_2,\ldots,\hat{p}_k)$ be the empirical distribution. Denote 
		\begin{align*}
		X_i & = \frac{(\hat{p}_i - p_i)^2}{p_i} - \frac{1-p_i}{n}. 
		\end{align*}
		Then, for any $i\neq j$, 
		\begin{align*}
		\mathbb{E}[X_i^2] & = \frac{(1-p_i)(1+2(n-3)p_i(1-p_i))}{n^3 p_i}, \\
		\mathbb{E}[X_iX_j] & = \frac{2(p_i + p_j) -1 + 2(n-3)p_i p_j}{n^3}. 
		\end{align*}
	\end{lemma}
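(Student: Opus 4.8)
The plan is to reduce both identities to central moments of the multinomial count vector. Write $N_i \triangleq n\hat{p}_i$, so that $(N_1,\dots,N_k)$ is multinomial with parameters $(n,P)$ and each marginal $N_i \sim \mathsf{B}(n,p_i)$, and set $Y_i \triangleq N_i - np_i$. Then $\hat{p}_i - p_i = Y_i/n$, so $X_i = \frac{Y_i^2}{n^2 p_i} - \frac{1-p_i}{n}$. Since $\mathbb{E}[Y_i^2] = np_i(1-p_i)$, the subtracted constant is exactly $\mathbb{E}\bigl[Y_i^2/(n^2 p_i)\bigr]$, so each $X_i$ is mean-zero. Consequently $\mathbb{E}[X_i^2] = \Var\!\bigl(Y_i^2\bigr)/(n^4 p_i^2)$ and $\mathbb{E}[X_iX_j] = \mathrm{Cov}\!\bigl(Y_i^2,Y_j^2\bigr)/(n^4 p_i p_j)$, so both claims are moment computations for $Y_i^2$.

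For the diagonal term I would use only the binomial marginal. The classical fourth central moment of $\mathsf{B}(n,p_i)$ is $\mathbb{E}[Y_i^4] = np_iq_i\bigl(1 + 3(n-2)p_iq_i\bigr)$ with $q_i \triangleq 1-p_i$. Subtracting $\bigl(\mathbb{E}[Y_i^2]\bigr)^2 = n^2 p_i^2 q_i^2$ and collecting the $p_i^2 q_i^2$ terms gives $\Var\!\bigl(Y_i^2\bigr) = np_iq_i\bigl(1 + 2(n-3)p_iq_i\bigr)$; dividing by $n^4 p_i^2$ yields exactly $\frac{(1-p_i)(1+2(n-3)p_i(1-p_i))}{n^3 p_i}$.

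For the cross term I would expand $Y_i = \sum_{t=1}^n W_{t,i}$, where $W_{t,i} = \mathbf{1}\{\text{draw } t = i\} - p_i$ are mean-zero, independent across $t$, and carry the single-draw multinomial law within each $t$. Expanding $\mathbb{E}[Y_i^2 Y_j^2]$ over four trial indices and discarding every term containing a trial index of multiplicity one (these vanish by mean-zero independence), only the all-four-equal pattern (count $n$) and the three two-equal-pairs partitions (each count $n(n-1)$) survive, giving $\mathbb{E}[Y_i^2 Y_j^2] = n\,\mathbb{E}[W_i^2 W_j^2] + n(n-1)\,\mathbb{E}[W_i^2]\,\mathbb{E}[W_j^2] + 2n(n-1)\bigl(\mathbb{E}[W_i W_j]\bigr)^2$. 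I would then record the single-draw moments $\mathbb{E}[W_i^2] = p_iq_i$, $\mathbb{E}[W_iW_j] = -p_ip_j$, and, by conditioning on which of the three events ``$i$ drawn'', ``$j$ drawn'', ``neither'' occurs, $\mathbb{E}[W_i^2 W_j^2] = p_iq_i^2 p_j^2 + p_j p_i^2 q_j^2 + (1-p_i-p_j)p_i^2 p_j^2$. Subtracting $\mathbb{E}[Y_i^2]\mathbb{E}[Y_j^2] = n^2 p_i p_j q_i q_j$ and dividing by $n^4 p_i p_j$ finishes the computation.

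The only genuine work is the final algebraic simplification of $\mathrm{Cov}\!\bigl(Y_i^2,Y_j^2\bigr)/(n^4 p_i p_j)$, and I expect the main obstacle to be the combinatorial accounting in the four-index expansion: correctly identifying the three pair-partitions and their multiplicities $n(n-1)$, and distinguishing the $\{t_1,t_2\}\{s_1,s_2\}$ partition (yielding $\mathbb{E}[W_i^2]\mathbb{E}[W_j^2]$) from the two crossed partitions (each yielding $(\mathbb{E}[W_iW_j])^2$). Once this is in hand, collecting powers of $p_i$ and $p_j$ makes the $p_i^2 p_j$ and $p_i p_j^2$ terms cancel, leaving precisely $2(p_i+p_j) - 1 + 2(n-3)p_i p_j$ divided by $n^3$, as claimed.
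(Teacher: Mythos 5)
Your proposal is correct, and it reaches the result by a genuinely different route than the paper. Both arguments start from the same reduction: since $\mathbb{E}[(\hat{p}_i-p_i)^2]=p_i(1-p_i)/n$, each $X_i$ is mean-zero, so the lemma amounts to computing $\mathbb{E}[Y_i^4]$ and $\mathbb{E}[Y_i^2Y_j^2]$ for the centered counts $Y_i=n(\hat{p}_i-p_i)$. The paper proves those two moment identities by induction on $n$: it peels off the $n$-th trial, expands the fourth power (respectively the product of squares) of $Y_i^{(n-1)}+(Z_n^i-p_i)$, and uses independence of the last trial from the first $n-1$ trials together with the $n=1$ base case. You instead compute directly: the diagonal term from the classical binomial fourth central moment, and the cross term from the four-index expansion over trials, where mean-zero independence kills every pattern containing a singleton index and leaves the all-equal pattern (count $n$) plus the three pair-partitions (count $n(n-1)$ each, one contributing $\mathbb{E}[W_i^2]\mathbb{E}[W_j^2]$ and two contributing $(\mathbb{E}[W_iW_j])^2$). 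Your ingredient moments are exactly the ones the paper computes in its base case; in particular, with $q_i=1-p_i$, $\mathbb{E}[W_i^2W_j^2]=p_iq_i^2p_j^2+p_jp_i^2q_j^2+(1-p_i-p_j)p_i^2p_j^2=p_ip_j\left(p_i+p_j-3p_ip_j\right)$, and your expansion does yield $\mathrm{Cov}\left(Y_i^2,Y_j^2\right)=np_ip_j\left[2(p_i+p_j)-1+2(n-3)p_ip_j\right]$, i.e.\ the claimed $\mathbb{E}[X_iX_j]$ after dividing by $n^4p_ip_j$. Your route is shorter and more transparent (no inductive bookkeeping) and generalizes readily to higher moments or other sums of independent centered vectors; the paper's induction is more self-contained, avoiding the quoted binomial fourth-moment formula and the pattern-counting combinatorics, at the price of longer algebra. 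One small prose slip worth fixing: in the final simplification the $p_i^2p_j$ and $p_ip_j^2$ terms do \emph{not} cancel; they combine (doubling) to produce the $2(p_i+p_j)$ coefficient, and what actually cancels are the degree-five terms $p_i^3p_j^2$ and $p_i^2p_j^3$ inside $\mathbb{E}[W_i^2W_j^2]$. This does not affect correctness, since every displayed formula and the final answer in your write-up are right.
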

	\begin{proof}
		First, let us set up some notation. Here, $\hat{p}_i$ is distributed as $\frac{\mathsf{B}(n,p_i)}{n}$. Let $\mathsf{B}(n,p_i) = \sum_{l=1}^{n} Z_l^i$ where $Z_l^i$'s are i.i.d. Bernoulli($p_i$) random variables. Similarly, $\hat{p}_j$ is distributed as $\frac{\mathsf{B}(n,p_j)}{n}$. Let $\mathsf{B}(n,p_i) = \sum_{l=1}^{n} Z_m^j$ where $Z_m^j$'s are i.i.d. Bernoulli($p_j$) random variables. The correlation between $Z_l^i$'s and $Z_m^j$'s is the obvious one inherited from the fact that they are part of a multinomial distribution. In particular, this is their joint distribution-
		\begin{align*}
		\mathbb{P}\left((Z_l^i,Z_m^j) = (z_1,z_2) \right) = \begin{cases}
		(1-p_i-p_j)\delta_{lm} & (z_1,z_2)=(0,0) \\
		p_i\delta_{lm} & (z_1,z_2)=(1,0) \\
		p_j\delta_{lm} & (z_1,z_2)=(0,1) \\
		0 & \text{otherwise} 
		\end{cases},
		\end{align*}
		where $\delta_{lm}$ is the Kronecker delta.
		\\ We first note, using a rearrangement of terms and the fact about binomial distributions that $\mathbb{E}[(\hat{p}_i-p_i)^2] = \frac{p_i(1-p_i)}{n}$, that the two statements in Lemma \ref{lemma.exactquadratic} are equivalent to the following two statements-
		\begin{align}\label{eqn.4thcentralmomlem1}
		\mathbb{E}\left[\left(n\hat{p}_i-np_i\right)^4\right] = \mathbb{E}\left[\left(\sum_{l=1}^{n} \left(Z_l^i - p_i\right)
		\right)^4\right] = np_i(1-p_i) + 3n(n-2)p_i^2(1-p_i)^2
		\end{align}
		and
		\begin{align}\label{eqn.4thcrossmomlem1}
		\mathbb{E}\left[\left(n\hat{p}_i-np_i\right)^2\left(n\hat{p}_j-np_j\right)^2\right]  & = \mathbb{E}\left[\left(\sum_{l=1}^{n} \left(Z_l^i - p_i\right)
		\right)^2\left(\sum_{m=1}^{n}\left(Z_m^j - p_j\right)\right)^2\right] \nonumber \\ 
		& = np_ip_j\left[(n-1)-(n-2)(p_i+p_j)+(3n-6)p_ip_j\right].
		\end{align}
		We will prove both these statements by induction on $n$.
		\\ First we prove Equation \ref{eqn.4thcentralmomlem1}. The base case for $n=1$ is the following \[\mathbb{E}\left[\left(Z_1^i - p_i
		\right)^4\right] = p_i(1-p_i)^4+(1-p_i)p_i^4 = p_i(1-p_i)(1-3p_i(1-p_i)).\] We now make the inductive hypothesis that \[\mathbb{E}\left[\left(\sum_{l=1}^{n-1} \left(Z_l^i - p_i\right)
		\right)^4\right] = (n-1)p_i(1-p_i) + 3(n-1)(n-3)p_i^2(1-p_i)^2.\]
		Our inductive step then follows as
		\begin{align*}
		\mathbb{E}\left[\left(\sum_{l=1}^{n} \left(Z_l^i - p_i\right)
		\right)^4\right] & = \mathbb{E}\left[\left(\left(\sum_{l=1}^{n-1} \left(Z_l^i - p_i\right) + \left(Z_n^i-p_i\right)\right)
		\right)^4\right]\\
		& = \underbrace{\mathbb{E}\left[\left(\sum_{l=1}^{n-1} \left(Z_l^i - p_i\right)
			\right)^4\right]}_{(n-1)p_i(1-p_i) + 3(n-1)(n-3)p_i^2(1-p_i)^2} + \underbrace{4\mathbb{E}\left[\left(\sum_{l=1}^{n} \left(Z_l^i - p_i\right)
			\right)^3\left(Z_n^i-p_i\right)\right]}_{0} \\
		& \qquad + \underbrace{6\mathbb{E}\left[\left(\sum_{l=1}^{n} \left(Z_l^i - p_i\right)
			\right)^2\left(Z_n^i-p_i\right)^2\right]}_{6(n-1)p_i^2(1-p_i)^2} + \underbrace{4\mathbb{E}\left[\left(\sum_{l=1}^{n} \left(Z_l^i - p_i\right)
			\right)\left(Z_n^i-p_i\right)^3\right]}_{0}\\
		& \qquad + \underbrace{\mathbb{E}\left[\left(Z_n^i-p_i\right)^4\right]}_{p_i(1-p_i)(1-3p_i(1-p_i))} \\
		& = np_i(1-p_i) + 3n(n-2)p_i^2(1-p_i)^2.
		\end{align*}
		There are five terms of interest above in the expansion. In the $1^{st}$ term, we use our inductive hypothesis. In the $2^{nd}$, $3^{rd}$, and $4^{th}$ terms we use the fact that $Z_n^i$ is uncorrelated with all the other $Z_l^i$'s to split the expectation of the product into a product of expectations, each of which is simply a binomial random variable mean or variance computation. In the $5^{th}$ term we use our base case.
		\\This completes the proof of Equation \ref{eqn.4thcentralmomlem1}.
		\\ We now begin proving Equation \ref{eqn.4thcrossmomlem1}. The base case for $n=1$ follows as \[\mathbb{E}\left[\left(Z_1^i - p_i\right)^2\left(Z_1^j - p_j\right)^2\right] = p_i(1-p_i)^2p_j^2 + p_j(1-p_j)^2p_i^2 + (1-p_i-p_j)p_i^2p_j^2 =  p_ip_j\left[(p_i+p_j)-3p_ip_j\right],\] where we have used the joint distribution of $Z_1^i$ and $Z_1^j$ as specified above.
		We assume our inductive hypothesis that \[\mathbb{E}\left[\left(\sum_{l=1}^{n-1} \left(Z_l^i - p_i\right)
		\right)^2\left(\sum_{m=1}^{n-1}\left(Z_m^j - p_j\right)\right)^2\right] =  (n-1)p_ip_j\left[(n-2)-(n-3)(p_i+p_j)+(3n-9)p_ip_j\right].\]
		Our inductive step then follows as
		\begin{align*}
		& \mathbb{E}\left[\left(\sum_{l=1}^{n} \left(Z_l^i - p_i\right)
		\right)^2\left(\sum_{m=1}^{n}\left(Z_m^j - p_j\right)\right)^2\right]	\nonumber \\
		& = 				\mathbb{E}\left[\left(\sum_{l=1}^{n-1} \left(Z_l^i - p_i\right) + \left(Z_n^i-p_i\right)
		\right)^2\left(\sum_{m=1}^{n-1}\left(Z_m^j - p_j\right) + \left(Z_n^j-p_j\right)\right)^2\right] \\
		& = \underbrace{\mathbb{E}\left[\left(\sum_{l=1}^{n-1} \left(Z_l^i - p_i\right)
			\right)^2\left(\sum_{m=1}^{n-1}\left(Z_m^j - p_j\right)\right)^2\right]}_{(n-1)p_ip_j\left[(n-2)-(n-3)(p_i+p_j)+(3n-9)p_ip_j\right]} + \underbrace{\mathbb{E}\left[\left(\sum_{l=1}^{n-1} \left(Z_l^i - p_i\right)
			\right)^2 2\left(\sum_{m=1}^{n-1}\left(Z_m^j - p_j\right)\right)\left(Z_n^j-p_j\right)\right]}_{0}\\
		& \quad + \underbrace{\mathbb{E}\left[\left(\sum_{l=1}^{n-1} \left(Z_l^i - p_i\right)
			\right)^2\left(Z_n^j-p_j\right)^2\right]}_{(n-1)p_i(1-p_i)p_j(1-p_j)} + \underbrace{\mathbb{E}\left[\left(\sum_{m=1}^{n-1} \left(Z_m^j - p_j\right)
			\right)^2 2\left(\sum_{l=1}^{n-1}\left(Z_l^i - p_i\right)\right)\left(Z_n^i-p_i\right)\right]}_{0}\\
		& \quad + \underbrace{\mathbb{E}\left[4 \left(\sum_{m=1}^{n-1} \left(Z_m^j - p_j\right)
			\right) \left(\sum_{l=1}^{n-1}\left(Z_l^i - p_i\right)\right)\left(Z_n^i-p_i\right) \left(Z_n^j-p_j\right)\right]}_{4(n-1)(-p_ip_j)(-p_ip_j)} + \underbrace{\mathbb{E}\left[2 \left(\sum_{l=1}^{n-1}\left(Z_l^i - p_i\right)\right)\left(Z_n^i-p_i\right) \left(Z_n^j-p_j\right)^2\right]}_{0} \\
		& \quad + \underbrace{\mathbb{E}\left[\left(\sum_{m=1}^{n-1} \left(Z_m^j - p_j\right)
			\right)^2\left(Z_n^i-p_i\right)^2\right]}_{(n-1)p_j(1-p_j)p_i(1-p_i)} + \underbrace{\mathbb{E}\left[2 \left(\sum_{m=1}^{n-1}\left(Z_m^j - p_j\right)\right)\left(Z_n^j-p_j\right) \left(Z_n^i-p_i\right)^2\right]}_{0} \nonumber \\
			& \quad + \underbrace{\mathbb{E}\left[\left(Z_1^i - p_i\right)^2\left(Z_1^j - p_j\right)^2\right]}_{p_ip_j\left[(p_i+p_j)-3p_ip_j\right]}\\
		& = np_ip_j\left[(n-1)-(n-2)(p_i+p_j)+(3n-6)p_ip_j\right].
		\end{align*}
		In the nine terms of the expansion above, we have repeatedly used the fact $Z_n^i$ and $Z_n^j$ are independent of the other $Z_l^i$'s and $Z_m^j$'s to reduce expectations of products to products of expectations. The first term follows from the inductive hypothesis and the last term follows from the base case. The rest of the terms require only the knowledge of the mean and covariance matrix of a multinomial random variable, which are well known.
		\\ This completes the proof of Equation \ref{eqn.4thcrossmomlem1} and hence of Lemma \ref{lemma.exactquadratic}.
	\end{proof}
	\begin{lemma}\label{lem.k=2Case}
		Let $k=2$ and $P \in \mathcal{M}_k$ be any probability distribution. Then for all $n, \epsilon$ 
		\begin{align*}
		\mathbb{P}\left( D(\hat{P}_{n,2} \| P) \geq \epsilon \right) & \leq 2 e^{-n\epsilon}.
		\end{align*} 
	\end{lemma}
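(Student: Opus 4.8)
The plan is to reduce everything to the binary Kullback--Leibler function and then control the two tails of a binomial separately, each by an exponential bound with the sharp exponent, so that a union bound produces exactly the factor of $2$. Write $P = (p_1, 1-p_1)$ and $\hat{P}_{n,2} = (\hat{p}_1, 1-\hat{p}_1)$, where $n\hat{p}_1 \sim \mathsf{B}(n, p_1)$, and set
\[
d(q) \triangleq q\log\frac{q}{p_1} + (1-q)\log\frac{1-q}{1-p_1},
\]
so that $D(\hat{P}_{n,2}\|P) = d(\hat{p}_1)$. The function $d$ is convex on $[0,1]$, vanishes only at $q = p_1$, is strictly decreasing on $[0, p_1]$ and strictly increasing on $[p_1, 1]$, with maximum value $\log\frac{1}{\min(p_1, 1-p_1)}$. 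Consequently the superlevel set $\{q : d(q) \geq \epsilon\}$ is a union of two intervals $[0, a] \cup [b, 1]$ with $a \leq p_1 \leq b$ and $d(a) = d(b) = \epsilon$, one or both intervals possibly being empty when $\epsilon$ exceeds $d(0)$ or $d(1)$.

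Next I would handle the two tails. If $\epsilon > \log\frac{1}{\min(p_1, 1-p_1)}$ the event is empty and the claim is trivial, so assume $\epsilon$ is at most this maximum. Using the V-shape of $d$ to write $\{d(\hat{p}_1)\geq \epsilon\} = \{\hat{p}_1 \geq b\} \cup \{\hat{p}_1 \leq a\}$, a union bound gives
\[
\mathbb{P}(d(\hat{p}_1) \geq \epsilon) \leq \mathbb{P}(\hat{p}_1 \geq b) + \mathbb{P}(\hat{p}_1 \leq a).
\]
Each half-line $\{q_1 \geq b\}$ and $\{q_1 \leq a\}$ is a convex subset of $\mathcal{M}_2$, so the Sanov property of convex sets \cite[Theorem 1]{csiszar1984sanov} --- equivalently the elementary Chernoff bound for the binomial obtained by exponential tilting, $\mathbb{P}(\hat p_1 \geq b) \leq \min_{t\geq 0} e^{-tnb}(1-p_1+p_1 e^t)^n$, whose optimizer yields precisely the binary KL exponent --- gives
\[
\mathbb{P}(\hat{p}_1 \geq b) \leq e^{-n \inf_{q \geq b} d(q)} = e^{-n d(b)} = e^{-n\epsilon},
\]
the infimum being attained at $q = b$ since $d$ is increasing on $[p_1,1]$; symmetrically $\mathbb{P}(\hat{p}_1 \leq a) \leq e^{-n d(a)} = e^{-n\epsilon}$. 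Adding the two bounds yields $\mathbb{P}(D(\hat{P}_{n,2}\|P) \geq \epsilon) \leq 2e^{-n\epsilon}$.

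The only point requiring care --- and what I expect to be the main, and rather minor, obstacle --- is the interface between the continuous thresholds $a,b$ and the discrete support $\{0, 1/n, \dots, 1\}$ of $\hat{p}_1$, together with the degenerate cases where a tail is empty. This is handled by replacing $b$ with the smallest attainable value $b' = \lceil nb\rceil / n \geq b$ and $a$ with $a' = \lfloor na \rfloor / n \leq a$; then $d(b') \geq d(b) = \epsilon$ and $d(a') \geq d(a) = \epsilon$ by monotonicity, so applying the Chernoff/Sanov bound at $b'$ and $a'$ still produces exponents at least $n\epsilon$. When a tail interval is empty (for instance $\epsilon > d(0)$), the corresponding probability is simply zero and only one exponential term survives, which is even stronger than the claimed bound. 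This completes the argument.
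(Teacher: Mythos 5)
Your proposal is correct and takes essentially the same route as the paper: both decompose the superlevel set $\{D(\hat{P}_{n,2}\|P)\geq\epsilon\}$ into the two convex pieces on either side of $p_1$, apply the Csisz\'ar convex-set Sanov inequality (equivalently, the Chernoff bound with the binary KL exponent) to bound each piece by $e^{-n\epsilon}$, and conclude by a union bound. Your extra care about discretization and empty tails is harmless but unnecessary, since the convex-set inequality applies directly to the half-lines $\{q\geq b\}$ and $\{q\leq a\}$ whose KL infima are exactly $\epsilon$.
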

	\begin{proof}
		Let \[\hat{P}_{n,2} = (\hat{p}_1,1-\hat{p}_1), P = (p_1,1-p_1).\] $\mathcal{M}_2$ is a line segment and $\{\hat{P}_{n,2}: D(\hat{P}_{n,2} \| P)\geq \epsilon\} \in \mathcal{M}_2$ is a union of two line segments (which are, in particular, convex).
		In fact, \[\{\hat{P}_{n,2}: D(\hat{P}_{n,2} \| P)\geq \epsilon\} = \underbrace{\{\hat{P}_{n,2}: D(\hat{P}_{n,2} \| P)\geq \epsilon, \hat{p}_1 > p_1\}}_{:=M_1} \cup \underbrace{\{\hat{P}_{n,2}: D(\hat{P}_{n,2} \| P)\geq \epsilon, \hat{p}_1 < p_1\}}_{:=M_2},\] and so
		\begin{align*}
			\mathbb{P}\left( D(\hat{P}_{n,2} \| P) \geq \epsilon \right) & = \mathbb{P}\left(M_1 \cup M_2\right)
			\\ & \leq \mathbb{P}\left(M_1\right)+\mathbb{P}\left( M_2\right)
			\\ & \leq e^{-n\epsilon}+e^{-n\epsilon} = 2e^{-n\epsilon}.
		\end{align*}
		To bound $\mathbb{P}\left(M_1\right)$ and $\mathbb{P}\left(M_2\right)$ we use the fact that $M_1$ and $M_2$ are convex, $\inf\limits_{Q \in M_1} D(Q\|P) = \inf\limits_{Q \in M_2} D(Q\|P) = \epsilon$, and inequality (2.16) in \cite[Theorem 1]{csiszar1984sanov}.
	\end{proof}
	
	\begin{lemma}\label{lem.SumToIntegral}
					Let $f(x)$ be a convex function on $(0,1)$. Let $l_1<l_2<n$ be some integers.
			Then we must have 
			\begin{align*}
			\sum_{l=l_1}^{l_2}f\left(\frac{l}{n}\right)\frac{1}{n} \leq \int_{\frac{l_1}{n}}^{\frac{l_2}{n}}f(x)dx.
			\end{align*}
	\end{lemma}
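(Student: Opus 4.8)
The plan is to prove the bound by comparing the sum to the integral one grid cell at a time, using convexity of $f$ through the midpoint form of Jensen's inequality (equivalently, the lower half of the Hermite--Hadamard inequality). The single-cell estimate I would use is the following: if $f$ is convex and $J$ is an interval of length $\frac1n$ whose midpoint is $\frac ln$, then the supporting (tangent) line $\ell$ to $f$ at $\frac ln$ satisfies $\ell \le f$ throughout $J$, and since $J$ is symmetric about its midpoint the linear part of $\ell$ integrates to zero, giving
\[
\int_J f(x)\,dx \;\ge\; \frac1n\, \ell\!\left(\tfrac ln\right) \;=\; \frac1n\, f\!\left(\tfrac ln\right).
\]
I would record this estimate for every integer $l$ with $l_1 \le l \le l_2$, taking $J_l = \left[\frac{l}{n}-\frac{1}{2n},\,\frac{l}{n}+\frac{1}{2n}\right]$.

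Summing this single-cell estimate over $l = l_1,\dots,l_2$ is the next step. The cells $J_l$ have disjoint interiors and consecutive cells abut (their centres are spaced $\frac1n$ apart and each has length $\frac1n$), so additivity of the integral collapses the right-hand sides into a single integral over the union $\bigcup_{l} J_l$, yielding
\[
\frac1n \sum_{l=l_1}^{l_2} f\!\left(\tfrac ln\right) \;\le\; \int_{\bigcup_l J_l} f(x)\,dx .
\]
What then remains is to match this integration range with the range $\left[\frac{l_1}{n},\frac{l_2}{n}\right]$ appearing in the statement.

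The main obstacle is exactly this reconciliation of the limits of integration. The union of cells is $\left[\frac{l_1}{n}-\frac{1}{2n},\,\frac{l_2}{n}+\frac{1}{2n}\right]$, one half-cell wider than $\left[\frac{l_1}{n},\frac{l_2}{n}\right]$ at each end, which reflects the fact that the $l_2-l_1+1$ sampled points carry one more unit cell than the $l_2-l_1$ subintervals that tile $\left[\frac{l_1}{n},\frac{l_2}{n}\right]$. To reach the stated limits I would exploit convexity beyond the single-cell bound: splitting the index set at a minimiser $x^\ast$ of $f$, on each monotone side I would assign $f(\frac ln)$ to the neighbouring cell on which $f$ dominates the value $f(\frac ln)$, and then compare the resulting boundary contributions against the integral mass of $f$ near $x^\ast$. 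Making this comparison quantitative, so that the two extreme half-cells are controlled and the limits collapse to precisely $\frac{l_1}{n}$ and $\frac{l_2}{n}$, is the delicate heart of the argument and the step I expect to require the most care.
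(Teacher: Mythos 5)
Your first two steps are exactly the paper's entire proof of Lemma~\ref{lem.SumToIntegral}: the paper applies Jensen's inequality in midpoint form, $f\left(\frac{l}{n}\right) = f\left(n\int_{\frac{l-1/2}{n}}^{\frac{l+1/2}{n}} x\,dx\right) \leq n\int_{\frac{l-1/2}{n}}^{\frac{l+1/2}{n}} f(x)\,dx$ (your supporting-line argument is the same single-cell estimate), and then adds this over $l_1 \leq l \leq l_2$. Where you depart from the paper is that you noticed what it silently ignores: summing the per-cell bounds yields the integral over $\left[\frac{l_1}{n}-\frac{1}{2n},\,\frac{l_2}{n}+\frac{1}{2n}\right]$, not over $\left[\frac{l_1}{n},\frac{l_2}{n}\right]$. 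You are right that this is a genuine discrepancy, but the repair you sketch in your final paragraph cannot be carried out, because the lemma as stated is false. Take $f \equiv 1$, which is convex on $(0,1)$: the left side is $\frac{l_2-l_1+1}{n}$ while the right side is $\frac{l_2-l_1}{n}$. The same failure persists for any convex $f$ sufficiently close to constant (e.g., $f(x)=1+\epsilon x^2$ with $\epsilon$ small), so no reassignment of cells around a minimizer, and no comparison of boundary contributions against the integral mass near $x^{\ast}$, can collapse the limits: the sum carries $l_2-l_1+1$ cells of weight $\frac{1}{n}$ while the target interval has length only $\frac{l_2-l_1}{n}$, and for flat $f$ the deficit is real. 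The ``delicate heart'' you were hunting for does not exist.

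The correct move is to stop after your second step and record what the argument actually proves, namely $\sum_{l=l_1}^{l_2} f\left(\frac{l}{n}\right)\frac{1}{n} \leq \int_{\frac{2l_1-1}{2n}}^{\frac{2l_2+1}{2n}} f(x)\,dx$, valid when $l_1 \geq 1$ (so that the leftmost cell stays inside $(0,1)$, where $f$ is defined; the lemma's hypotheses as written even permit $l_1=0$, a second oversight that matters because the functions used downstream blow up at $0$ — the condition $l_2<n$ already keeps the rightmost cell inside). This weakened conclusion is all the application needs: in Lemma~\ref{lem.UpperBoundEk} the integrand $f_i(x)=\frac{(1-x)^{i/2}}{\sqrt{x-x^2}}$ is nonnegative on $(0,1)$, so the widened integral is bounded by $\int_0^1 f_i(x)\,dx = c_i$, and the extreme summands $l=0$ and $l=n$ there must be handled separately in any case (the termwise Stirling bound degenerates at those indices). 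In short: your instinct that the limits do not reconcile is the correct mathematical observation — it exposes a misstatement in the paper itself — but your proposed route to force the stated limits is a dead end, and the fix is to restate the lemma with the widened limits and the condition $l_1 \geq 1$, which your own first two steps already prove.
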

	\begin{proof}
		The proof of this follows from using convexity and Jensen's Inequality to observe that \[f(\frac{l}{n}) = f(\int_{\frac{l-\frac{1}{2}}{n}}^{\frac{l+\frac{1}{2}}{n}}x \cdot n \cdot dx) \leq n\cdot \int_{\frac{l-\frac{1}{2}}{n}}^{\frac{l+\frac{1}{2}}{n}}f(x)dx,\] and then simply adding this inequality for all $l$ from $l_1$ to $l_2$.
	\end{proof}

		\begin{lemma}\label{lem.TheseFuncsAreConvex}
			$\frac{(1-x)^{\frac{m}{2}}}{\sqrt{x-x^2}}$ is convex for $x \in (0,1)$ for any non-negative integer $m$ except $m=2$.
		\end{lemma}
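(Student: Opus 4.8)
The plan is to first simplify the expression. Writing $\sqrt{x-x^2}=\sqrt{x}\,\sqrt{1-x}$, the function becomes $g(x)=x^{-1/2}(1-x)^{b}$ with $b:=\tfrac{m-1}{2}$, which is strictly positive on $(0,1)$. Convexity of a positive function is governed by the sign of $g''$, and the cleanest route to $g''$ is the logarithmic identity $g''/g=(\log g)''+\big((\log g)'\big)^2$. Since $(\log g)'=-\tfrac{1}{2x}-\tfrac{b}{1-x}$, a short computation (using $a=-\tfrac12$, so that $a^2-a=\tfrac34$ and $-2ab=b$) yields
\[
\frac{g''(x)}{g(x)}=\frac{3/4}{x^{2}}+\frac{b(b-1)}{(1-x)^{2}}+\frac{b}{x(1-x)}.
\]
Because $g>0$ on $(0,1)$, convexity is equivalent to the right-hand side being non-negative there.

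Next I would clear denominators: multiplying by $x^{2}(1-x)^{2}>0$ reduces the claim to showing that the quadratic
\[
\tilde B(x)=\tfrac34(1-x)^{2}+b(b-1)x^{2}+b\,x(1-x)=\Big(b-\tfrac12\Big)\Big(b-\tfrac32\Big)x^{2}+\Big(b-\tfrac32\Big)x+\tfrac34
\]
is non-negative on $(0,1)$. I would then dispatch the cases according to the value of $b$. For $m\ge 4$ (so $b\ge\tfrac32$) all three coefficients of $\tilde B$ are non-negative and the constant term $\tfrac34$ is positive, so $\tilde B(x)>0$ for $x>0$. For $m=3$ ($b=1$) one has $\tilde B(x)=\tfrac14(1-x)(3+x)>0$ on $(0,1)$; for $m=1$ ($b=0$), $\tilde B(x)=\tfrac34(1-x)^{2}>0$; and for $m=0$ ($b=-\tfrac12$), $\tilde B(x)=2x^{2}-2x+\tfrac34$ has discriminant $4-6<0$ with positive leading coefficient, hence is strictly positive. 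This covers every non-negative integer $m\ne 2$.

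The argument is short, so there is no real obstacle beyond the bookkeeping of the case analysis; the only genuinely important observation is the algebraic factorization of the leading coefficient as $\big(b-\tfrac12\big)\big(b-\tfrac32\big)$, which is what makes $m\ge 4$ immediate and simultaneously pinpoints why $m=2$ is excluded. Indeed, for $m=2$ ($b=\tfrac12$) the leading coefficient vanishes and $\tilde B(x)=\tfrac34-x$, which is negative for $x>\tfrac34$; this is exactly the degenerate value at which convexity fails, consistent with the statement and with the separate handling of the $i=2$ term needed in the proof of Lemma~\ref{lem.UpperBoundEk}.
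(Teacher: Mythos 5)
Your proof is correct, and its first half is identical to the paper's: both take the logarithmic route, writing $f''=\bigl((\log f)''+((\log f)')^2\bigr)f$ and reducing convexity to positivity of $\frac{3/4}{x^2}+\frac{b(b-1)}{(1-x)^2}+\frac{b}{x(1-x)}$ with $b=\frac{m-1}{2}$ (incidentally, your cross term $\frac{b}{x(1-x)}$ is the correct one; the paper's displayed cross term is off by a factor of $2$, a typo that does not affect its sign argument). Where you genuinely depart is the endgame. The paper inspects the three terms individually: for $m\neq 0,2$ they are all non-negative, and for $m=0$ it merely asserts that ``one can verify'' positivity even though the cross term is then negative. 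You instead clear denominators, reducing the claim to non-negativity on $(0,1)$ of the quadratic $\tilde B(x)=(b-\tfrac12)(b-\tfrac32)x^2+(b-\tfrac32)x+\tfrac34$, and the factorization of the leading coefficient makes the case analysis mechanical: all coefficients non-negative for $m\geq 4$, explicit factorizations for $m=3$ and $m=1$, and a negative-discriminant check for $m=0$. This buys two things the paper's argument lacks: a complete, self-contained treatment of the $m=0$ case (the cancellation the paper leaves implicit), and an explanation of exactly why $m=2$ is excluded, since then $\tilde B(x)=\tfrac34-x$ changes sign on $(0,1)$, so the exclusion is not an artifact of the method but a genuine failure of convexity, consistent with the separate handling of $E_2$ in Lemma~\ref{lem.UpperBoundEk}.
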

	
		\begin{proof}
			Let \[f(x) = \frac{(1-x)^{\frac{m}{2}}}{\sqrt{x-x^2}}\] and \[g(x) = \log f(x).\]
			Then \[f'' = (g''+g'^2)f.\] Since $f>0$ for all $x \in (0,1)$, $f$ is convex iff $(g''+g'^2)>0$ for all $x \in (0,1)$.
			\[g' = \frac{-\frac{m-1}{2}}{(1-x)}-\frac{1}{2x},g'' = \frac{-\frac{m-1}{2}}{(1-x)^2}+\frac{1}{2x^2}\] 
			So, \[g''+g'^2 = \frac{3}{4x^2} + \frac{(\frac{m-1}{2})(\frac{m-1}{2}-1)}{(1-x)^2}+\frac{(\frac{m-1}{2})\frac{1}{2}}{x(1-x)}.\]
			For $m \neq 0,2$, observe that all the terms in $g''+g'^2$ are positive. For $m=0$, one can verify that though all the terms aren't positive, the resulting function is always positive.
		\end{proof}
		
	\end{section}

	\begin{subsection}{Some useful definite integrals}\label{sec.Integrals}
		\begin{enumerate}
			
			\item
			\begin{align}\label{eqn.Integral2}
				c_m & \triangleq \int_{0}^{1} \frac{(1-x)^{\frac{m}{2}}}{\sqrt{x-x^2}}dx \\
				& = \int_{0}^{\frac{\pi}{2}}2(\cos\theta)^m d\theta \\
				& = \begin{cases}
					\frac{1 \times 3 \times 5 \ldots \times m-1}{2\times 4 \times 6 \times \ldots \times m} \cdot \pi & m\text{ is even} \\
					\frac{2\times 4 \times 6 \times \ldots \times m-1}{1 \times 3 \times 5 \ldots \times m} \cdot 2 & m\text{ is odd}
				\end{cases}
			\end{align}

			\item 
			
			\begin{align}\label{eqn.Integral3}
				\int_{-1}^{1}(\sqrt{1-x^2})^{m}dx & = 2\int_{0}^{\frac{\pi}{2}} (\cos\theta)^{m+1} d\theta \nonumber
				\\ & = c_{m+1}
			\end{align}

		\end{enumerate}
		
	\end{subsection}

\end{document}